\newtheorem{definition}{Definition} 
\newtheorem{example}{Example} 
\def \R{\mbox{$\mathbb R$}}
\newcommand{\diag}[2]{d_{{#1}{#2}}}
\newcommand{\rarrow}{\rightarrow}
\newcommand{\la}{\langle}
\newcommand{\ra}{\rangle}
\newcommand{\success}{{\bf success}}
\newcommand{\os}{[\![}
\newcommand{\cs}{]\!]}
\newcommand{\rrarrow}{\longrightarrow}
\newcommand{\Rrarrow}{\longmapsto}
\newcommand{\doo}{{\bf do}}
\newcommand{\watching}{{\bf watching}}
\newcommand{\elsee}{{\bf else}}
\newcommand{\noow}{{\bf now}}
\newcommand{\thene}{{\bf then}}
\newcommand{\tell}{{\bf tell}}
\newcommand{\ask}{{\bf ask}}
\newcommand{\timeout}{{\bf timeout}}
\def\ent{\vdash}
\def\0{{\mathbf 0}}
\def\1{{\mathbf 1}}
\def\C{{\mathcal C}}
\long\def\comment#1{}
\begin{document}
\bibliographystyle{acmtrans}

\long\def\comment#1{}

\title[TSCCP: An Interleaved and a Parallel Approach]{Timed Soft Concurrent Constraint Programs: An Interleaved and a Parallel Approach}

\author[Bistarelli et al.]
{Stefano Bistarelli \\
Dipartimento di Matematica e Informatica, Universit\`a di Perugia\\
Via Vanvitelli 1, 06123 Perugia Italy\\
E-mail: vista@dmi.unipg.it
\and
Maurizio Gabbrielli\\
Dipartimento di Scienze dell'Informazione, Universit\`a di Bologna\\
Via Zamboni 33, 40126 Bologna, Italy\\
E-mail: gabbri@cs.unibo.it
\and
Maria Chiara Meo\\
Dipartimento di Economia, Universit\`a ``G. D'Annunzio''\\
Viale Pindaro 42, 65127 Pescara, Italy\\
E-mail: cmeo@unich.it
\and
Francesco Santini\\
Centrum Wiskunde \& Informatica (CWI)\\
Science Park 123, 1098XG Amsterdam, The Netherlands\\
E-mail: F.Santini@cwi.nl
}

\submitted{20 May 2012}
\revised{15 February 2014}
\accepted{18 February 2014}


\maketitle

\label{firstpage}

\begin{abstract}
We propose a timed and soft extension of Concurrent Constraint
Programming. 
The time extension is based on the hypothesis of {\itshape bounded
asynchrony}: the computation takes a bounded period of time and is
measured by a discrete global clock. Action prefixing is then
considered as the syntactic marker which distinguishes a time
instant from the next one. Supported by soft constraints instead
of crisp ones, {\itshape tell} and {\itshape ask} agents are now
equipped with a preference (or consistency) threshold which is
used to determine their success or suspension. In the paper we
provide a language to describe the agents behavior, together with
its operational and denotational semantics, for which we also
prove the compositionality and correctness properties. After presenting a semantics using maximal parallelism of actions, we also describe a version for their interleaving on a single processor (with maximal parallelism for time elapsing). Coordinating agents
that need to take decisions both on preference values and time events may benefit from this language. To appear in Theory and Practice of Logic
Programming (TPLP).

\end{abstract}

\begin{keywords}
Soft Concurrent Constraint Programming, Timed Concurrent Constraint Programming, Interleaving, Parallelism.
\end{keywords}

\section{Introduction}\label{sec:intro}
Time is a particularly important aspect of cooperative
environments. In many ``real-life'' computer applications, the
activities have a temporal duration (that can be even interrupted)
and the coordination of such activities has to take into
consideration this timeliness property. The interacting actors are
mutually influenced by their actions, meaning that $A$ reacts
accordingly to the timing and quantitative aspects related to $B$'s behavior,
and vice versa. In fact, these interactions can be often related
to quantities to be measured or minimized/maximized, in order to
take actions depending from these scores: consider, for example,
some generic communicating agents that need to take decisions on a (monetary) cost or a (fuzzy) preference for a shared resource. They
both need to coordinate through time-dependent  and preference-based decisions.

A practical example of such agents corresponds, for example, to software agents that need to negotiate some service-level agreement on a resource, or a service, with  time-related side-conditions. For instance, a fitting example is given by auction schemes, where the seller/bidder agents need to agree on a preference for a given prize (e.g., a monetary cost). At the same time, the agents have to respect some timeout and alarm events, respectively representing the absence and the presence of bids for the prize (for instance).  The language we present in this paper is well suited for this kind of interactions, as 
Section~\ref{sec:mpexample} shows with examples.

The {\itshape Timed Concurrent Constraint Programming}
({\itshape tccp}), a timed extension of the pure formalism of
{\itshape Concurrent Constraint Programming} ({\itshape
ccp})~\cite{Sa89a}, has been introduced in~\cite{BGM00}. The language is based on the
hypothesis of {\em bounded asynchrony}~\cite{SJG96}: computation takes a bounded period of time
rather than being instantaneous as in the concurrent synchronous
languages {\tt ESTEREL}~\cite{BG92}, {\tt LUSTRE}~\cite{HCP91},
{\tt SIGNAL}~\cite{GBGM91} and Statecharts~\cite{Ha87}. Time
itself is measured by a discrete global clock, i.e., the internal
clock of the {\itshape tccp} process. In~\cite{BGM00} the authors also
introduced {\em timed reactive sequences}, which describe the reaction of a {\itshape tccp} process to the
input of the external environment, at each
moment in time. Formally, such a reaction is a
pair of constraints $\langle c,d\rangle$, where $c$ is the input
 and $d$ is the constraint produced by the
process in response to  $c$. \comment{(due to the monotonicity of {\itshape
ccp} computations, $c$ includes always the input).}

Soft constraints~\cite{bistarellibook,jacm} extend classical
constraints to represent multiple consistency levels, and thus
provide a way to express preferences, fuzziness, and uncertainty.
The {\itshape ccp} framework has been extended to work with soft
constraints~\cite{scc}, and the resulting framework is named
{\itshape Soft Concurrent Constraint Programming} ({\itshape
sccp}). With respect to {\itshape ccp}, in {\itshape sccp} the
{\em tell} and {\em ask} agents are equipped with a preference (or
consistency) threshold, which is used to determine their success,
failure, or suspension, as well as to prune the search; these
preferences should preferably be satisfied but not necessarily
(i.e. over-constrained problems). We adopt soft constraints
instead of crisp ones, since classic constraints
 show evident limitations  when trying
to represent real-life scenarios, where the knowledge is not
completely available nor crisp.

In this paper, we introduce a timed and soft extension of {\itshape
ccp} that we call {\itshape Timed Soft Concurrent Constraint
Programming} ({\itshape tsccp}), inheriting from both {\itshape
tccp} and {\itshape sccp} at the same time.
In {\itshape tsccp}, we directly introduce  a timed interpretation of the usual
programming constructs of \emph{sccp}, by identifying a time-unit with
the time needed for the execution of a basic \emph{sccp} action (ask and tell),
and  by interpreting action prefixing as the
next-time  operator.
An explicit timing primitive is also
introduced in order to allow for the specification of timeouts. In the first place,
the parallel operator of {\itshape tsccp} is first interpreted  in terms of maximal parallelism, as in \cite{BGM00}.
Secondly, we also consider a different paradigm, where the parallel operator is interpreted
in terms of interleaving, however assuming maximal parallelism
for actions depending on time. In other words, time passes for
all the parallel processes involved in a computation. This
approach, analogous to that one adopted in \cite{BoGaMe04}, is different from that one of \cite{BGM00,coord} (where
maximal parallelism was assumed for any kind of action), and it
is also different from the one considered in \cite{BGZ00}, where time does not elapse for timeout constructs.
This can be accomplished by allowing all the time-only dependent actions ($\tau$-transitions)
to concurrently run with at most one action manipulating the store (a $\omega$-transition).

The paper extends the results in  \cite{coord} by providing  new semantics  that allows
maximal parallelism for time elapsing  and an
interleaving model for basic computation steps (see Section~\ref{sec:interleaving}). This new language is called \emph{{\itshape tsccp} with interleaving}, i.e., {\itshape tsccp-i}, to distinguish it from the version allowing maximal parallelism of all actions. According to the maximal parallelism policy (applied, for example, in the original works as \cite{Sa89a} and \cite{saraswat2}), at each moment every enabled agent of the system is activated, while in the interleaving paradigm only one of the enabled agents is executed instead. This second paradigm is more realistic if we consider limited resources, since it does not imply the existence of an unbounded number of processors. However, in \cite{BGM00} it is shown that the notion of maximal parallelism of {\itshape tsccp} is more expressive than the notion of interleaving parallelism of other concurrent constraint languages. The presence of maximal parallelism can force the computation to discard some (non-enabled) branches which could became enabled later on (because of the information produced by parallel agents), while this is not possible when considering an interleaving model. Therefore, {\itshape tsccp} is sensitive to delays in adding constraints to the store, whereas this is not
the case for {\itshape ccp} and {\itshape tsccp-i}.

The rest of the paper is organized as follows: in Section~\ref{sec:background} we summarize the most important
background notions and frameworks from which {\itshape tsccp}
derives, i.e. {\itshape tccp} and {\itshape sccp}. In
Section~\ref{sec:tsccp} we present the {\itshape tsccp} language, and in Section~\ref{sec:mpopsem}  describes the operational semantics
of  {\itshape tscc} agents. Section~\ref{sec:mpexample} better explains the programming idioms as \emph{timeout} and \emph{interrupt},
exemplifies  the use
of timed paradigms in  the {\itshape tscc} language and  shows an
application example on modeling an auction interaction among several bidders and a single auctioneer.  Section~\ref{sec:mpdensem}  describes the  denotational semantics  for {\itshape tsccp}, and  proves the
denotational model correctness with the aid of {\itshape connected
reactive sequences}.  Section~\ref{sec:interleaving} explains the semantics for interleaving with  maximal parallelism of time-elapsing actions (i.e. the {\itshape tsccp-i} language), while Section~\ref{sec:timeline} describes a timeline for the execution of three parallel agents in {\itshape tsccp-i}.
Section~\ref{sec:compdensemtsccpi}  describes the  denotational semantics  of {\itshape tsccp-i} and proves the correctness of the
denotational model. Section~\ref{sec:related} reports the related work and, at last,  Section~\ref{sec:conclusions}
concludes by also indicating future
research.

\section{Background}\label{sec:background}

\subsection{Soft Constraints}
\label{sec:scsp}

A {\em soft constraint}~\cite{jacm,bistarellibook} may be seen as
a constraint where each instantiation of its variables has an
associated value from a partially ordered set which can be
interpreted as a set of preference values. Combining constraints
will then have to take into account such additional values, and
thus the formalism has also to provide suitable operations for
combination ($\times$) and comparison ($+$) of tuples of values
and constraints. This is why this formalization is based on the
concept of \emph{c-semiring}~\cite{jacm,bistarellibook}, called just semiring in the rest of the paper.

\paragraph{Semirings.}
A semiring is a tuple $\langle A,+,\times,\0,\1 \rangle$ such
that: {\em i)} $A$ is a set and $\0, \1 \in A$; {\em ii)} $+$ is
commutative, associative and $\0$ is its unit element; {\em iii)}
$\times$ is associative, distributes over $+$, $\1$  is its unit
element and $\0$ is its absorbing element. A c-semiring is a
semiring $\langle A,+,\times,\0,\1 \rangle$ such that: $+$ is
idempotent, $\1$ is its absorbing element and $\times$ is
commutative.
Let us consider the relation $\leq_S$ over $A$ such that $a \leq_S
b$ iff $a+b = b$. Then, it is possible to prove that
(see~\cite{jacm}): {\em i)} $\leq_S$ is a partial order; {\em ii)}
$+$ and $\times$ are monotone on $\leq_S$; {\em iii)} $\0$ is its
minimum and $\1$ its maximum; {\em iv)} $\langle A,\leq_S \rangle$
is a complete lattice (a complete lattice is a partially ordered set in which all subsets have both a supremum and an infimum) and, for all $a, b \in A$, $a+b = \mathit{lub}(a,b)$
(where $\mathit{lub}$ is the {\em least upper bound}).

Moreover, if $\times$ is idempotent, then: $+$ distributes over
$\times$; $\langle A,\leq_S \rangle$ is a complete distributive
lattice and $\times$ its $glb$ ({\em greatest lower bound}).
Informally, the relation $\leq_S$ gives us a way to compare
semiring values and constraints. In fact, when we have $a \leq_S
b$, we will say that {\em $b$ is better than $a$}. In the following,
when the semiring will be clear from the context, $a \leq_S b$
will be often indicated by $a \leq b$.
\paragraph{Constraint System.}
Given a semiring $S = \langle A,+,\times,\0,\1 \rangle$ and an
ordered set of variables $V$ over a finite domain $D$, a {\em soft
constraint} is a function which, given an assignment $\eta :
V\rightarrow D$ of the variables, returns a value of the semiring.
Using this notation  
$\C = \eta \rightarrow A$ is the set of all possible constraints
that can be built starting from $S$, $D$ and $V$.

Any function in $\C$ involves all the variables in $V$, but we
impose that it depends on the assignment of only a finite subset
of them. So, for instance, a binary constraint $c_{x,y}$ over
variables $x$ and $y$, is a function $c_{x,y}: (V\rightarrow
D)\rightarrow A$, but it depends only on the assignment of
variables $\{x,y\}\subseteq V$ (the {\em support} of the
constraint, or {\em scope}).
Note that $c\eta[v:=d_1]$ means $c\eta'$ where $\eta'$ is $\eta$
modified with the assignment $v:=d_1$ (that is the operator $[\ ]$
has precedence over application). Note also that $c\eta$ is the
application of a constraint function $c:(V \rightarrow D)
\rightarrow A$ to a function $\eta:V\rightarrow D$; what we
obtain, is a semiring value $c\eta$.

The partial order $\leq_S$ over $\C$ can be easily extended among
constraints by defining $c_1 \sqsubseteq c_2 \iff c_1 \eta \leq
c_2 \eta$, for each possible $\eta$.

\paragraph{Combining and projecting soft constraints.}
Given the set $\C$, the combination function $\otimes: \C\times\C
\rightarrow \C$ is defined as $(c_1\otimes c_2)\eta =
c_1\eta\times c_2\eta$ (see also~\cite{jacm,bistarellibook,scc}).
Informally, performing the $\otimes$ between two constraints means
building a new constraint whose support involves all the variables
of the original ones, and which associates with each tuple of
domain values for such variables a semiring element which is
obtained by multiplying the elements associated by the original
constraints to the appropriate sub-tuples. 

Given a constraint $c \in \C$ and a variable $v \in V$, the {\em
projection}~\cite{jacm,bistarellibook,scc} of $c$ over $V-\{v\}$,
written $c\Downarrow_{(V-\{v\})}$ is the constraint $c'$ s.t.
$c'\eta = \sum_{d \in D} c \eta [v:=d]$.
Informally, projecting means eliminating some variables from the
support. This is done by associating with each tuple over the
remaining variables a semiring element which is the sum of the
elements associated by the original constraint to all the
extensions of this tuple over the eliminated variables. 

We define also a function $\bar{a}$~\cite{bistarellibook,scc} as
the function that returns the semiring value $a$ for all
assignments $\eta$, that is, $\bar{a}\eta =a$. We will usually
write $\bar{a}$ simply as $a$.
An example of constants that will be useful later are $\bar{\0}$
and $\bar{\1}$ that represent respectively the constraints
associating $\0$ and $\1$ to all the assignment of domain values.

\paragraph{Solutions.} A SCSP~\cite{bistarellibook} is defined as $P= \langle V, D, C, S \rangle$,
where $C$ is the set of constraints defined over variables in $V$ (each with domain $D$), and whose preference is determined by semiring $S$. The {\em best level of consistency}
notion is defined as $\mathit{blevel}(P) = \mathit{Sol}(P) \Downarrow_{\emptyset}$,
where $Sol(P)= \bigotimes C$~\cite{bistarellibook}. A problem $P$ is $\alpha$-consistent if
$\mathit{blevel}(P) = \alpha$~\cite{bistarellibook}. $P$ is instead simply
``consistent'' iff there exists $\alpha
>_S \0$ such that $P$ is $\alpha$-consistent. $P$ is inconsistent
if it is not consistent.

\begin{figure}[h]
\centering
\includegraphics[scale=0.6]{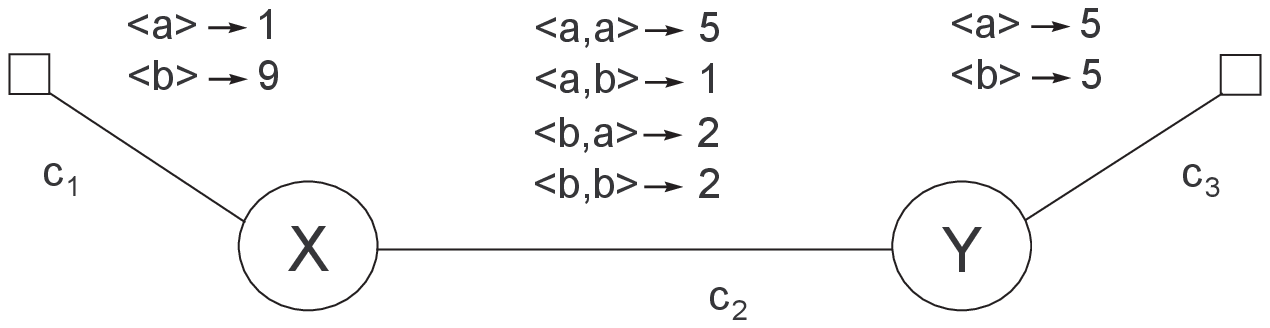} 
\caption{A SCSP based on a weighted semiring.}
\label{figure:wexample}
\end{figure}

\begin{example}Figure~\ref{figure:wexample} shows a
weighted SCSP as a graph: the weighted semiring is used, i.e.
$\langle \R^{+} \cup \{\infty\}, \min, \hat{+},$ $\infty,
0\rangle$ ($\hat{+}$ is the arithmetic plus operation). Variables
and constraints are represented respectively by nodes and arcs
(unary for $c_1$-$c_3$, and binary for $c_2$); $D= \{a, b\}$. The
solution of the CSP in Figure~\ref{figure:wexample} associates a
semiring element to every domain value of variables $X$ and $Y$ by
combining all the constraints together, i.e. $Sol(P)= \bigotimes C$.
For instance, for the tuple $\langle a, a\rangle$ (that is, $X = Y =
a$), we have to compute the sum of $1$ (which is the value assigned
to $X = a$ in constraint $c_1$), $5$ ($\langle X = a, Y = a \rangle$ in $c_2$) and $5$ ($Y = a$ in $c_3$): the value for this tuple is
$11$.  The solution $X = a, Y = b$ is a $7$-consistent solution, where $7$ corresponds to the  \emph{blevel} of $P$, i.e., $\mathit{Sol}(P)\Downarrow_\emptyset = 7$.
\end{example}

\subsection{Concurrent Constraint Programming over Soft Constraints}\label{sec:sccp}

\label{sec:semf}

The basic idea underlying {\itshape ccp}~\cite{Sa89a} is that
computation progresses via monotonic accumulation of information
in a constraint global store. Information is produced by the
concurrent and asynchronous activity of several agents which can
add ({\em tell}) a constraint to the store. Dually, agents can
also check ({\em ask}) whether a constraint is entailed by the
store, thus allowing synchronization among different agents. The
{\itshape ccp} languages are defined parametrically w.r.t. a given
{\em constraint system}. The notion of constraint system has been
formalized in~\cite{SR90} following Scott's treatment of
information systems. Soft constraints over a semiring $S = \langle
\mathcal{A},+,\times,\0,\1 \rangle$ and an ordered set of
variables $V$ (over a domain $D$) have been showed to form a constraint system {\em ``\`{a} la
Saraswat''}, thus leading to the definition of \emph{Soft Concurrent Constraint Programming}g (\emph{sccp})~\cite{jacm,bistarellibook,scc}.

Consider the  set $\C$ and the partial order $\sqsubseteq$. Then
an entailment relation $\ent \subseteq \wp(\C) \times \C$ is
defined s.t. for each $C \in \wp(\C)$ and $c \in \C$, we have $C
\ent c \iff \bigotimes C \sqsubseteq c$ (see
also~\cite{bistarellibook,scc}). Note that in this setting the
notion of token (constraint) and of set of tokens (set of
constraints) closed under entailment is used indifferently. In
fact, given a set of constraint functions $C_1$, its closure
w.r.t. entailment is a set $\bar{C_1}$ that contains all the
constraints greater than $\bigotimes C_1$. This set is univocally
representable by the constraint function $\bigotimes C_1$. The
definition of the entailment operator $\ent$ on top of $\C$, and of
the $\sqsubseteq$ relation, lead to the notion of {\em soft
constraint system}. It is also important to notice that
in~\cite{Sa89a} it is claimed that a constraint system is a {\em
complete} {\em algebraic} lattice. In the \emph{sccp} framework,
algebraicity is not required~\cite{scc} instead, since the algebraic nature
of the structure $\C$ strictly depends on the properties of the
semiring\footnote{Notice that we do not aim at computing the
closure of the entailment relation, but only to use the entailment
relation to establish if a constraint is entailed by the current
store, and this can be established even if the lattice is not
algebraic (that is even if the times operator is not
idempotent).}.

To treat the hiding operator of the language, a general notion of
existential quantifier is introduced by using notions similar to
those used in cylindric algebras. Consider a set of variables $V$
with domain $D$ and the corresponding soft constraint system $\C$.
For each $x \in V$, the hiding function~\cite{bistarellibook,scc}
is the function $(\exists_x c)\eta =\sum_{d_i\in D} c\eta[x :=
d_i]$. To make the hiding operator computationally tractable, it
is required that the number of domain elements in $D$, having
semiring values different from $\0$, is finite~\cite{scc}. In this
way, to compute the sum needed for $(\exists_x c)\eta$, we can consider just a finite number of elements
(those different from $\0$), since $\0$ is the unit element of the
sum.  Note that
by using the hiding function we can represent the $\Downarrow$
operator defined in Section~\ref{sec:scsp}. In fact, for any
constraint $c$ and any variable $x \subseteq V$,
$c\Downarrow_{V-x} = \exists_x c$~\cite{scc}.

To model parameter passing also diagonal elements have to be
defined. Consider a set of variables $V$ and the corresponding
soft constraint system. Then, for each $x,y \in V$, a diagonal
constraint is defined as $d_{xy} \in \C$
s.t., $d_{xy}\eta[x:= a, y := b]= \1$ if $a=b$, and $d_{xy}\eta[x:=
a, y := b] = \0$ if $a\neq b$~\cite{bistarellibook,scc}.

\begin{theorem}[cylindric constraint system~\cite{scc}]
Consider a semiring $S = \langle \mathcal{A},+,\times,\0,\1
\rangle$, a domain of the variables $D$, an ordered set of
variables $V$, and the corresponding structure $\C$. Then,
$S_C=\langle \C, \otimes,\0,\1 , \exists_x, d_{xy}\rangle$, is a
cylindric constraint system.
\end{theorem}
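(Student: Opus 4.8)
The plan is to verify that $S_C=\langle \C, \otimes,\0,\1 , \exists_x, d_{xy}\rangle$ satisfies the axioms defining a cylindric constraint system, in two stages: first the soft-constraint-system axioms for $\langle \C,\otimes,\0,\1\rangle$ ordered by $\sqsubseteq$, and then the cylindric-algebra-style laws relating the cylindrification operators $\exists_x$ and the diagonal elements $d_{xy}$. In each case I would unfold the pointwise definitions of the operators on $\C$ and reduce the required identity (or inequality) to a law of the underlying c-semiring $S$, which is then available from the properties recalled in Section~\ref{sec:scsp}.

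For the constraint-system part, the point is that every operator on $\C$ is lifted argumentwise from $S$, so the relevant algebraic properties transfer: $\sqsubseteq$ is a partial order, being the pointwise extension of the partial order $\leq_S$, with $\bar{\0}$ as least and $\bar{\1}$ as greatest element; $\langle\C,\sqsubseteq\rangle$ is a complete lattice because $\langle A,\leq_S\rangle$ is one and suprema and infima in $\C$ are taken argumentwise; and $\otimes$ is commutative, associative, has $\1$ as unit and $\0$ as absorbing element, directly from the corresponding c-semiring laws for $\times$. Note that $\otimes$ need not be the greatest lower bound of $\sqsubseteq$ (since $\times$ need not be idempotent), but the definition of soft constraint system does not demand this, so no issue arises.

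For the cylindrification operators I would work from $(\exists_x c)\eta=\sum_{d\in D}c\eta[x:=d]$ (a well-defined constraint, the sum being finite because $D$ is finite) and check the following. Extensiveness, $c\sqsubseteq\exists_x c$: the value $c\eta=c\eta[x:=\eta(x)]$ occurs as one of the summands and $+$ is the least upper bound in a c-semiring, so $c\eta\leq_S(\exists_x c)\eta$. Monotonicity of $\exists_x$: immediate from monotonicity of $+$. Commutation $\exists_x\exists_y c=\exists_y\exists_x c$: rearrange the double sum over $D\times D$ using commutativity and associativity of $+$. The law $\exists_x(c\otimes\exists_x c')=\exists_x c\otimes\exists_x c'$: first observe that $\exists_x c'$ does not depend on $x$ (the inner reassignment $[x:=d]$ is overwritten), so $(\exists_x c')\eta[x:=d]$ is constant in $d$, then use distributivity of $\times$ over $+$ to factor it out of the sum. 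Finally $\exists_x\bar{\0}=\bar{\0}$, since $\sum_{d\in D}\0=\0$. For the diagonal elements I would exploit that, by definition, $d_{xy}$ takes only the values $\1$ and $\0$ and that $\0$ is absorbing for $\times$ and unit for $+$ while $\1$ is unit for both: $d_{xx}=\bar{\1}=\1$ is immediate; for $z\notin\{x,y\}$, $(d_{xz}\otimes d_{zy})\eta[x:=a,z:=e,y:=b]$ equals $\1$ exactly when $a=e=b$ and $\0$ otherwise, so summing over $e\in D$ yields $\1$ iff $a=b$, i.e. $\exists_z(d_{xz}\otimes d_{zy})=d_{xy}$; and the substitution law $d_{xy}\otimes\exists_x(d_{xy}\otimes c)=d_{xy}\otimes c$ for $x\neq y$ follows by the same case split according to whether $\eta(x)=\eta(y)$.

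I expect the difficulty here to be organisational rather than conceptual. The two places that need care are the law $\exists_x(c\otimes\exists_x c')=\exists_x c\otimes\exists_x c'$, where one must be sure that $\exists_x c'$ is genuinely $x$-free before applying distributivity, and the laws coupling $\exists_x$ with $d_{xy}$, whose verification relies precisely on $\0$ being the absorbing element of $\times$ so that $d_{xy}$ acts as a two-valued equality test; one should also keep track of the side conditions on distinctness of variables ($z\notin\{x,y\}$, $x\neq y$) that make the nested reassignments $\eta[x:=a,z:=e,y:=b]$ meaningful. Since none of these steps uses idempotency of $\times$, the argument works for an arbitrary c-semiring, matching the generality claimed in the statement.
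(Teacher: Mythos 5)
Your proposal is correct, and it follows the standard route: this theorem is stated in the paper as background and imported from the cited reference~\cite{scc} without proof, and the proof there is exactly the kind of direct verification you outline — lifting the c-semiring laws pointwise to $\C$ for the constraint-system part, and checking the cylindrification and diagonal axioms by unfolding $(\exists_x c)\eta=\sum_{d\in D}c\eta[x:=d]$ and the two-valued definition of $d_{xy}$, using distributivity, the lub property of $+$, and absorption by $\0$, with no appeal to idempotency of $\times$. Your handling of the delicate points (the $x$-freeness of $\exists_x c'$ before distributing, the variable-distinctness side conditions, and the fact that algebraicity/glb-hood of $\otimes$ is not required) matches the intended argument.
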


\subsection{Timed Concurrent Constraint Programming}
\label{sec:tccp}
A  timed extension of {\itshape ccp}, called {\itshape tccp} has
been introduced in~\cite{BGM00}. Similarly to other existing timed
extensions of {\itshape ccp} defined in~\cite{SJG96}, {\itshape
tccp} is a language for reactive programming designed around the
hypothesis of {\em bounded asynchrony} (as introduced
in~\cite{SJG96}: computation takes a bounded period of time rather
than being instantaneous).

When querying the store for some information that is not present
(yet), a {\itshape ccp}  agent will simply suspend until the
required information has arrived. In timed applications however
often one cannot  wait indefinitely for an event. Consider for
example the case of a connection to a web service
providing some on-line banking facility. In case the connection cannot
be established, after a reasonable amount of time an appropriate time-out message
has to be communicated to the user.  A timed
language should then allow us to specify that, in case a given
time bound is exceeded (i.e. a {\em time-out} occurs), the wait is
interrupted and an alternative action is taken.
Moreover, in some cases it is also necessary to have a preemption mechanism
which allows one to abort an active process $A$ and to
start a process $B$ when a specific (abnormal) event occurs.

In order to be able to specify these timing constraints
 {\itshape tccp} introduces a discrete global clock and assumes
that {\em ask} and {\em tell} actions take one time-unit.
Computation evolves in steps of one time-unit, so called
clock-cycles. Action prefixing is the syntactic marker which
distinguishes a time instant from the next one and it is assumed
that parallel processes are executed on different processors,
which implies that, at each moment, every enabled agent of the
system is activated. This assumption gives rise to what is called
{\em maximal parallelism}. The time in between two successive
moments of the global clock intuitively corresponds to the
response time of the underlying constraint system. Thus all
parallel agents are synchronized by the response time of the
underlying constraint system. Since the store is monotonically increasing and one can have dynamic process creation, clearly the previous assumptions imply that the constraint solver takes a constant time (no matter how big the store is), and that there is an unbounded number of processors.
However, one can impose suitable restriction on programs, thus ensuring that the (significant part of the) store and the number of processes do not exceed a fixed bound; these restrictions would still allow significant forms of recursion with parameters.

Furthermore,
a timing construct of the form $\noow  \ c \ \thene\
A \ \elsee \ B $ is introduced in  {\itshape tccp}, whose semantics is the following: if
the constraint $c$ is entailed by the store at the current time
$t$, then the above agent behaves as $A$ at time $t$, otherwise it
behaves as $B$ at time $t$. This
basic construct allows to derive such timing mechanisms as
time-out and preemption~\cite{BGM00,SJG96}. The instantaneous reaction can be obtained by evaluating \noow $c$ in parallel with  $A$ and $B$, within the same time-unit. At the end of this time-unit, the store will be updated by using either the constraint produced by $A$, or that one produced by  $B$, depending on the result of the evaluation of \noow $c$. Clearly, since $A$ and $B$ could contain nested  $\noow \ \thene \ \elsee$  agents, a limit for the number of these nested agents should be fixed. Note that, for recursive programs, such a limit is ensured by the presence of the procedure-call, since we assume that the evaluation of such calls takes one time-unit.

\section{Timed Soft Concurrent Constraint Programming}
\label{sec:tsccp}

In this section we present the {\itshape tsccp} language, which
originates from both  {\itshape tccp} and  {\itshape sccp}. To
obtain this aim, we extend the syntax of the {\itshape cc}
language with the timing construct \noow $c$ \thene $A$
\elsee $B$ (inherited from {\itshape tccp}), and also in order
to directly handle the cut level as in {\itshape sccp}. This means
that the syntax and semantics of the {\bf tell}, {\bf ask} and
\noow agents have to be enriched with a threshold that is used to check
when the agents may succeed, or suspend.

\begin{definition}[{\itshape tsccp} Language]\label{def:tscclanguage}
Given a soft constraint system $\langle S,D,V\rangle$, the
corresponding structure $\C$, any semiring value $a$, {\em soft constraints}
$\phi , \ c \in {\cal C}$ and any tuple of variables $x$,  the syntax of the {\itshape tsccp}
language is given by the following grammar:
\[
\begin{array}{ll}
P ::= & F \text{.} A\\
F ::= & p(x):: A \;|\; F.F\\
A ::= & {\bf success} \;|\; \hbox{\tell}(c)\rarrow_\phi A
\;|\;\hbox{\tell}(c)\rarrow ^a A \;|\; E \;|\; A\parallel A
\;|\;
 \exists x A \;|\; p(x) \;|\;  \\
 & \Sigma_{i=1}^{n} E_i\;|\; \noow_\phi \ c \ \thene\
A \ \elsee \ A \;|\;  \noow^a \ c \ \thene\
A \ \elsee \ A\\
E ::= &\hbox{\bf ask}(c) \rarrow_\phi A  \;|\; \hbox{\bf ask}(c)
\rarrow^a A \\
\end{array}
\]
where, as usual, $P$ is the class of processes, $F$ is the class
of sequences of procedure declarations (or clauses), $A$ is the
class of agents. In a {\itshape tsccp} {\em process} $P=F\text{.}A$,
$A$ is the initial agent, to be executed in the context of the set of declarations $F$. The agent {\bf success} represents a successful termination, so it may not make any  further transition.

\end{definition}

In the following, given an agent $A$,  we denote by $Fv(A)$ the
set of the free variables of $A$ (namely, the variables which do
not appear in the scope of the $\exists$ quantifier).
Besides the use of soft constraints (see Section~\ref{sec:semf}) instead of crisp ones, there are two fundamental differences between {\itshape tsccp} and {\itshape ccp}.
The first main difference w.r.t. the original {\em cc} syntax is
the presence of a semiring element $a$ and of a constraint $\phi$
to be checked whenever an {\em ask} or {\em tell} operation is
performed. More precisely, the level $a$ (respectively, $\phi$) will be
used as a cut level to prune computations that are not good
enough. The second main difference with respect to {\itshape ccp }  (but, this time,
also with respect to {\itshape sccp}) is instead the presence of the \noow $c$ \thene\ $A$ \elsee\ $B$ construct introduced in
Section~\ref{sec:tccp}. Even for this construct, the level $a$
(or $\phi$) is used as a cut level to prune computations.

Action prefixing is denoted by $\rightarrow$, non-determinism is
introduced via the guarded choice construct $\Sigma_{i=1}^{n} E_i$, parallel
composition is denoted by $\parallel$, and a notion  of locality
is introduced by the agent $\exists x A$, which behaves like $A$
with $x$ considered local to $A$, thus hiding the information on
$x$ provided by the external environment.

In the next subsection
we formally describe the operational semantics of {\itshape
tsccp}. In order to simplify the notation, in the following
we will usually write a {\itshape tsccp} {\em process} $P=F\text{.}A$
 simply as the corresponding agent $A$.

\section{An Operational Semantics for {\itshape tsccp} Agents}\label{sec:mpopsem}

The operational model of {\itshape tscc} agents can be formally
described by a transition system $T= ({\it Conf}, \rrarrow )$
where we assume that each transition step takes exactly one
time-unit. Configurations in {\itshape Conf} are pairs
consisting of a process and of  a constraint in ${\cal C}$,
representing the common {\em store} shared by all the agents. The transition relation
$\rrarrow \subseteq  {\it Conf} \times {\it Conf}$ is the least
relation satisfying the rules {\bf R1-R17} in Figure~\ref{mpt1}, and it
characterizes the (temporal) evolution of the system. So, $\langle
A,\sigma\rangle \rrarrow \langle B,\delta\rangle $ means that, if
at time $t$ we have the process $A$ and the store $\sigma$, then at
time $t+1$ we have the process $B$ and the store $\delta$.

\begin{figure}

    \begin{center}

\begin{tabular}{llll}
&\mbox{   }&\mbox{   } &\mbox{   }
\\
\mbox{\bf R1}& $\frac {\displaystyle (\sigma \otimes
c)\Downarrow_{\emptyset} \not< a}{\displaystyle
\begin{array}{l}
\la \hbox{\tell}(c) \rightarrow^{a} A, \sigma \ra \rrarrow \la A,
\sigma \otimes c\ra
\end{array}}$\ \ \ & \bf{V-tell}&
\\
&\mbox{   }&\mbox{   } &\mbox{   }
\\
\mbox{\bf R2}& $\frac {\displaystyle \sigma \otimes c
\not\sqsubset \phi}{\displaystyle
\begin{array}{l}
\la \hbox{\tell}(c)\rightarrow_{\phi} A, \sigma \ra \rrarrow \la
A, \sigma \otimes c\ra
\end{array}}$ & \bf{Tell} &
\\
&\mbox{   }&\mbox{   } &\mbox{   }
\\
\mbox{\bf R3}& $\frac {\displaystyle \sigma \ent c \ \ \ \ \
\sigma\Downarrow_{\emptyset} \not< a}{\displaystyle
\begin{array}{l}
\la \hbox{\ask}(c) \rightarrow^{a} A, \sigma \ra \rrarrow \la A,
\sigma \ra
\end{array}}$\ \ \ & \bf{V-ask}&
\\
&\mbox{   }&\mbox{   } &\mbox{   }
\\
\mbox{\bf R4}& $\frac {\displaystyle \sigma \ent c \ \ \ \  \sigma
\not\sqsubset \phi}{\displaystyle
\begin{array}{l}
\la \hbox{\ask}(c) \rightarrow_{\phi} A, \sigma  \ra \rrarrow \la
A, \sigma \ra
\end{array}}$ & \bf{Ask}&
\\
&\mbox{   }&\mbox{   } &\mbox{   }
\\

\mbox{\bf R5 }& $\frac {\displaystyle \la A,\sigma \ra \rrarrow
\la A', \sigma \otimes \delta  \ra\ \ \ \ \la B,\sigma\ra \rrarrow
\la B', \sigma \otimes \delta' \ra } {\displaystyle
\begin{array}{l}
\la A\parallel B,\sigma\ra\rrarrow \la A'\parallel B',\sigma
\otimes \delta\otimes \delta'\ra
\end{array}}$ & \bf{Parall1}&
\\
&\mbox{   }&\mbox{   }&\mbox{   }
\\
\mbox{\bf R6 }& $\frac {\displaystyle \la A,\sigma\ra \rrarrow \la
A', \sigma'\ra\ \ \ \ \la B,\sigma \ra \not\rrarrow}
{\displaystyle
\begin{array}{l}
\la A\parallel B, \sigma \ra\rrarrow \la A'\parallel B, \sigma'
\ra
\\
\la B\parallel A, \sigma \ra\rrarrow \la B\parallel A', \sigma'
\ra
\end{array}}$&
 \bf{Parall2}&
\\
&\mbox{   }&\mbox{   }&
\\

\mbox{\bf R7}& $\frac {\displaystyle \la  E_j,\sigma\ra \rrarrow
\la A_j,\sigma' \ra\ \ \ \ \ \ j\in [1,n]} {\displaystyle
\begin{array}{l}
\la \Sigma_{i=1}^{n}E_i , \sigma \ra\rrarrow \la A_j,\sigma'\ra
\end{array}}$ & \bf{Nondet}&
\\
&\mbox{   }&\mbox{   }&
\\
\mbox{\bf R8 }& $\frac {\displaystyle \la A,\sigma \ra \rrarrow
\la A', \sigma' \ra \ \ \ \ \ \sigma \ent c \ \ \ \ \
\sigma\Downarrow_{\emptyset} \not< a}
 {\displaystyle
\begin{array}{l}
\la \noow^{a} \ c \ \thene \ A\ \elsee \
B,\sigma\ra\rrarrow \la A',  \sigma' \ra
\end{array}}$  & \bf{V-now1}&
\\
&\mbox{   }&\mbox{   }&
\\
\mbox{\bf R9 }& $\frac {\displaystyle \la A,\sigma \ra \not
\rrarrow \ \ \ \ \ \sigma \ent c \ \ \ \ \
\sigma\Downarrow_{\emptyset} \not< a} {\displaystyle
\begin{array}{l}
\la \noow^a \ c \ \thene \ A\ \elsee \
B,\sigma\ra\rrarrow \la A,  \sigma \ra
\end{array}}$  & \bf{V-now2}&
\\
&\mbox{   }&\mbox{   }&
\\
\mbox{\bf R10 }& $\frac {\displaystyle \la B,\sigma \ra \rrarrow
\la B', \sigma'  \ra \ \ \ \ \ \sigma \not \ent c \ \ \ \
\sigma\Downarrow_{\emptyset} \not< a} {\displaystyle
\begin{array}{l}
\la \noow^a \ c \ \thene \ A\ \elsee \
B,\sigma\ra\rrarrow \la B',  \sigma '\ra
\end{array}}$& \bf{V-now3}&
\\
&\mbox{   }&\mbox{   }&
\\
\mbox{\bf R11 }& $\frac {\displaystyle \la B,\sigma\ra
\not\rrarrow\ \ \ \ \ \sigma \not \ent c \ \ \ \ \
\sigma\Downarrow_{\emptyset} \not< a
 }
{\displaystyle
\begin{array}{l}
\la \noow^a \ c \ \thene \ A\ \elsee \
B,\sigma\ra\rrarrow \la B,  \sigma\ra
\end{array}}$& \bf{V-now4}&
\\
&\mbox{   }&\mbox{   }&
\\
\mbox{\bf R12 }& $\frac {\displaystyle \la A,\sigma \ra \rrarrow
\la A', \sigma' \ra  \ \ \ \ \sigma \ent c \ \ \ \ \sigma
\not\sqsubset \phi} {\displaystyle
\begin{array}{l}
\la \noow_\phi \ c \ \thene \ A\ \elsee \
B,\sigma\ra\rrarrow \la A',  \sigma' \ra
\end{array}}$  & \bf{Now1}&
\\
&\mbox{   }&\mbox{   }&
\\
\mbox{\bf R13 }& $\frac {\displaystyle \la A,\sigma \ra \not
\rrarrow \ \ \ \ \sigma \ent c \ \ \ \ \sigma \not\sqsubset \phi}
{\displaystyle
\begin{array}{l}
\la \noow_\phi \ c \ \thene \ A\ \elsee \
B,\sigma\ra\rrarrow \la A,  \sigma \ra
\end{array}}$  & \bf{Now2}&
\\
&\mbox{   }&\mbox{   }&
\\
\mbox{\bf R14 }& $\frac {\displaystyle \la B,\sigma\ra \rrarrow
\la B', \sigma'  \ra \ \ \ \ \sigma \not \ent c \ \ \ \ \sigma
\not\sqsubset \phi} {\displaystyle
\begin{array}{l}
\la \noow_\phi \ c \ \thene \ A\ \elsee \
B,\sigma\ra\rrarrow \la B', \sigma'\ra
\end{array}}$& \bf{Now3}&
\\
&\mbox{   }&\mbox{   }&
\\
\mbox{\bf R15 }& $\frac {\displaystyle \la B,\sigma\ra
\not\rrarrow \ \ \ \ \sigma \not \ent c \ \ \ \  \sigma
\not\sqsubset \phi} {\displaystyle
\begin{array}{l}
\la \noow_\phi \ c \ \thene \ A\ \elsee \
B,\sigma\ra\rrarrow \la B,  \sigma\ra
\end{array}}$& \bf{Now4}&
\\
&\mbox{   }&\mbox{   }&
\\
\mbox{\bf R16}& $\frac {\displaystyle \la A[x/y],  \sigma
\ra\rrarrow\la B, \sigma' \ra} {\displaystyle\la \exists x A,
\sigma \ra\rrarrow\la B, \sigma' \ra}$
&\bf{Hide} &\\
&\mbox{   }&\mbox{   }&
\\
\mbox{\bf R17}& $\la p(x),\sigma\ra\rrarrow\la A,
\sigma\ra \ \ \ \ {\it p(x) :: A \in F}$
&\bf{P-call}&\\
&\mbox{   }&\mbox{   }&
\\
\end{tabular}
    \end{center}
  \caption{The transition system for {\itshape tsccp}.}\label{mpt1}

\end{figure}

Let us now briefly discuss the rules in Figure~\ref{mpt1}. Here is a
brief description of the transition rules:

\begin{description}
  \item[Valued-tell.]
The valued-tell rule checks for the $a$-consistency
  of the {\itshape Soft Constraint Satisfaction
  Problem}~\cite{bistarellibook} (SCSP)
  defined by the store $\sigma \otimes c$. A SCSP $P$ is
$a$-consistent if $blevel(P) = a$, where $blevel(P) =
Sol(P) \Downarrow_{\emptyset}$, i.e., the {\em best level of
consistency} of the problem $P$ is a semiring value representing
the least upper bound among the values yielded by the solutions.
Rule ${\bf R1}$ can
  be applied only if the store $\sigma \otimes c$ is $b$-consistent
  with $b \not< a$\footnote{Notice that we use $b \not< a$ instead of $b \geq a$ because
  we can possibly deal with partial orders. The same holds also for $\not \sqsubset$ instead of $\sqsupseteq$.}. In this case the agent evolves to the new
  agent $A$ over the store $\sigma \otimes c$.
  Note that different
  choices of the {\em cut level} $a$ could possibly lead to different
  computations.
Finally, note that the updated store $\sigma \otimes c$ will be
visible only starting from the next time instant, since each
transition step involves exactly one time-unit.

  \item[Tell.] The tell action is a finer check of the
  store. In this case (see rule {\bf R2}), a pointwise comparison between the store
  $\sigma \otimes c$ and the constraint $\phi$
  is performed. The idea is to perform an overall
  check of the store, and to continue the computation only if there is
  the possibility to compute a solution not worse
  than $\phi$.
  Note that this notion of tell could be also applied to the classical {\itshape cc} framework:
 the tell operation would succeed when the set of tuples satisfying constraint $\phi$
 is not a superset of the set of tuples allowed by $\sigma \cap c$.\footnote{Notice that the $\otimes$ operator
  in the crisp case reduces to set intersection.}
  As for the valued tell, the updated store $\sigma \otimes c$ will be
visible only since the next time instant.
In the following, let us use $\hbox{\tell}(c)\rightarrow A $  and $\hbox{\tell}(c)$ as a shorthand for $\hbox{\tell}(c)\rightarrow_{\bar{\0}}A$
and $\hbox{\tell}(c)\rightarrow_{\bar{\0}}{\bf success}$, respectively.

  \item[Valued-ask.] The semantics of the valued-ask is extended in
  a way similar to what we have done for the valued-tell action. This
  means that, to apply the rule {\bf R3}, we need to check if the store $\sigma$
  entails the constraint $c$, and also if $\sigma$ is ``consistent
  enough'' w.r.t. the threshold $a$ set by the programmer.

  \item[Ask.] In rule {\bf R4}, we check if the store $\sigma$
  entails the constraint $c$, but, similarly to  rule {\bf R2}, we also compare a finer (pointwise)
  threshold $\phi$ to the store $\sigma$.
  As for the tell action, let us use $\hbox{\ask}(c) \rightarrow A$ as a shorthand for $\hbox{\ask}(c) \rightarrow_{\bar{\0}}A$.

\item[Parallelism.] Rules {\bf R5} and {\bf R6} model the parallel
composition operator in terms of {\em maximal parallelism}: the
agent $A\parallel B$ executes in one time-unit all the initial
enabled actions of $A$ and $B$. Considering rule {\bf R5} (where {\em maximal parallelism} is accomplished in practice), notice
that the ordering of the operands in $\sigma \otimes \delta\otimes
\delta'$ is not relevant, since $\otimes$ is commutative and
associative. Moreover, for the same two properties, if
$\sigma \otimes \delta= \sigma \otimes \gamma$ and $\sigma \otimes
\delta'= \sigma \otimes \gamma'$, we have that $\sigma \otimes
\delta \otimes \delta'= \sigma \otimes \gamma \otimes \gamma'$.
Therefore the resulting store $\sigma \otimes \delta \otimes
\delta'$ is independent from the choice of the constraint $\delta$
such that $\la A,\sigma \ra \rrarrow \la A', \sigma'  \ra$ and
$\sigma '=\sigma \otimes \delta$ (analogously for $\delta'$).

  \item[Nondeterminism.] According to rule ${\bf R7}$, the
guarded choice operator gives rise to global non-determinism: the
external environment can affect the choice, since $\ask(c_j)$ is
enabled at time $t$ (and $A_j$ is started at time $t+1$) if and
only if the store $\sigma$ entails $c_j$ (and if it is compatible with the
threshold too), and $\sigma$ can be modified by other agents.

\item[Valued-now and Now.] Rules ${\bf R8}$-${\bf R11}$ show that the agent
$\noow^a \ c \ \thene \ A\ \elsee\ $ $B$
behaves as $A$ or $B$ depending on the fact
that $c$ is or is not entailed by the store, provided that the current store $\sigma$ is compatible with the
threshold. Differently from the
case of the ask, here the evaluation of the guard is
instantaneous: if current store $\sigma$ is compatible with the
threshold $a$, $\langle A, \sigma\rangle$ ($\langle B,
\sigma\rangle$) can make a transition at time $t$ and $c$ is (is
not) entailed by the store $\sigma$, then the agent $\noow^a \ c \ \thene \ A\ \elsee \ B$ can make the same transition at time $t$.
Moreover, observe that in any case the control is passed either to
$A$ (if $c$ is entailed by the current store $\sigma$ and $\sigma$ is compatible with the
threshold) or to $B$
(in case $\sigma$ does not entail $c$ and $\sigma$ is compatible with the
threshold). Analogously for the not-valued version, i.e.,  $\noow_\phi \ c \ \thene \ A\ \elsee \ B$ (see rules ${\bf R12}$-${\bf R15}$).
Finally, we use $\noow \ c \ \thene \ A\ \elsee \ B$ as a shorthand for the agent $\noow _{\bar{\0}}\ c \ \thene \ A\ \elsee \ B$

  \item[Hiding variables.] The agent $\exists x A$ behaves like $A$,
  with $x$ considered {\itshape local} to $A$, as show by rule {\bf R16}. This is obtained by
  substituting the variable $x$ for a variable $y$, which we assume
  to be new and not used by any other process. Standard renaming techniques
  can be used to ensure this; in rule {\bf R16}, $A[x/y]$ denotes the process obtained
  from $A$ by replacing the variable $x$ for the variable $y$.

\item[Procedure-calls.] Rule ${\bf R17}$ treats the case of a
procedure-call when the actual parameter equals the formal
parameter. We do not need more rules since, for the sake of
simplicity, here and in the following we assume that the set
{\itshape F} of procedure declarations is closed w.r.t. parameter
names: that is, for every procedure-call $p(y)$ appearing in a
process {\itshape F\text{.}A}, we assume that, if the original declaration
for {\itshape p} in {\itshape F} is $p(x) :: A$, then {\itshape F}
contains also the declaration $p(y) :: \exists x (\hbox{\tell}(\diag{x}{y})
\parallel A)$.\footnote{Here the (original) formal parameter is
identified as a local alias of the actual parameter.
Alternatively, we could have introduced a new rule treating
explicitly this case, as it was in the original {\itshape ccp}
papers.} Moreover, we assume that if $p(x) :: A \in F$, then
$Fv(A) \subseteq x$.

\end{description}

Using the transition system described by (the rules in)
Figure~\ref{mpt1}, we can now define our notion of observables, which
considers  the results
of successful terminating computations that the agent $A$ can
perform for each {\itshape tsccp} process $P= F\text{.}A$. \\
Here and in the following,  given a transition relation $\rrarrow$, we denote by $\rrarrow ^*$ its reflexive and transitive closure.

\begin{definition}[Observables]
Let $P= F\text{.}A$ be a {\itshape tsccp} process. We define
$${\cal O}^{mp}_{io}(P) = \{
\gamma \Downarrow_{Fv(A)} \mid \la A, \1 \ra\rrarrow ^*\la {\bf
Success}, \gamma\ra\}$$

\end{definition}
where ${\bf Success}$ is any agent which contains only occurrences
of the agent ${\bf success}$ and of the operator $\parallel$.

\section{Programming Idioms and Examples}\label{sec:mpexample}
We can consider the primitives in Definition~\ref{def:tscclanguage} to derive the soft
version of the programming idioms in~\cite{BGM00}, which are
typical of reactive programming.

\begin{itemize}
    \item [{\it Delay}.] The delay constructs $\tell(c)
    \stackrel{t}{\longrightarrow}_\phi A$ or
    $\ask(c)\stackrel{t}{\longrightarrow}_\phi A$ are used to delay the
    execution of agent $A$ after the execution of $\tell(c)$ or $\ask(c)$; $t$ is the number of
    the time-units of delay. Therefore, in addiction to a
    constraint $\phi$, in {\itshape tsccp} the
    transition arrow can have also a number of delay
    slots. This idiom can be defined by induction: the base case
    is $\stackrel{0}{\longrightarrow}_\phi A \equiv \rightarrow_\phi A$, and the inductive step is
$\stackrel{n+1}{\longrightarrow}_\phi A\equiv
\rightarrow_\phi \tell({\bar{\1}})\stackrel{n}{\longrightarrow}_{\bar{\0}} A$. The valued version can be defined in an analogous way.
    \item [{\it Timeout}.] The timed guarded choice agent $\Sigma_{i=1}^{n} \ask(c_i) \rightarrow_i A_i \, \timeout(m) \, B$
waits at most $m$ time-units ($m \geq 0$) for the satisfaction of one of the guards;
notice that all the ask actions have a soft transition arrow, i.e.  $\rightarrow_i$ is either of the form  $\rightarrow_{\phi_i}$ or  $\rightarrow^{a_i}$, as in Figure~\ref{mpt1}. Before this time-out, the process behaves just like the guarded choice: as soon as there exist enabled guards, one of them (and the corresponding branch) is nondeterministically selected. After waiting for $m$ time-units, if no guard is enabled, the timed choice agent behaves as $B$.
Timeout constructs can be assembled through the composition of several $\noow_{\phi} \ c  \ \thene \ A \  \elsee \ B$ primitives (or their  valued version),  as explained in \cite{BGM00} for the (crisp) \emph{tccp} language.

The timeout can be defined inductively as follows:  let us denote by $A$ the agent $\Sigma_{i=1}^{n} \ask(c_i) \rightarrow_i A_i$. In the base case, that is $m=0$, we define $\Sigma_{i=1}^{n} \ask(c_i) \rightarrow_i A_i \, \timeout(0) \, B$ as the agent:

$$
\begin{array}{llll}
  \noow_1 \ c_1 & \hspace*{-0.2cm}\thene \ A &  &  \\
   & \hspace*{-0.75cm}\elsee \ (\ \noow_2 \ c_2 &\hspace*{-0.2cm}\thene \ A &   \\
   &  & \hspace*{-0.75cm}\elsee \ (\dots (\ \noow_n \ c_n \ \thene \ A\ \elsee \ \ask(\bar{\1}) \rightarrow \ B) \dots))
\end{array}
$$

where for $i=1, \ldots,n$, either $\noow_i = \noow_{\phi_i}$ if  $\rightarrow_i$ is of the form  $\rightarrow_{\phi_i}$
or $\noow_i = \noow^{a_i}$ if  $\rightarrow_i$ is of the form $\rightarrow^{a_i}$. Because of the operational semantics explained in rules {\bf R8-R11} (see Figure~\ref{mpt1}), if a guard $c_{i}$ is  true, then the agent $\Sigma_{i=1}^{n} \ask(c_i) \rightarrow_i A_i$  is evaluated in the same time slot. Otherwise, if no guard $c_i$ is true, the agent $B$ is evaluated in the next time slot.
Then, by inductively reasoning on the number of time-units $m$, we can define $\Sigma_{i=1}^{n} \ask(c_i) \rightarrow_i A_i \, \timeout(m) \, B$ as
$$\Sigma_{i=1}^{n} \ask(c_i) \rightarrow_i A_i \, \timeout(0) \, (\Sigma_{i=1}^{n} \ask(c_i) \rightarrow_i A_i \, \timeout(m-1) \, B)\text{.}$$

\item [{\it Watchdog}.] Watchdogs are used to interrupt the activity of a process on a signal
from a specific event. The idiom $\doo \;  A \; \watching_{\phi} \;c$ behaves as $A$, as long as $c$ is not
entailed by the store and the current store is compatible with the
threshold; when $c$ is entailed and the current store is compatible with the
threshold, the process $A$ is
immediately aborted.

The reaction is instantaneous, in the sense
that $A$ is aborted at the same time instant of the detection of
the entailment of $c$. However, according to the computational model, if $c$ is detected
at time $t$, then $c$ has to be produced at time $t'$ with $t'<t$. Thus, we have a form of
weak preemption.

As well as timeouts, also watchdog agents can be defined in terms of the other basic constructs of the language (see Figure~\ref{fig:mpwatchex}).

In the following we assume that there exists an (injective) renaming function $\rho$
which, given a procedure name $p$, returns a new name $\rho( p)$ that is not used elsewhere in the program.
Moreover, let us use $\noow_{\phi} \, c \ \elsee \ B$ as a shorthand for $\noow_{\phi}\, c \ \thene \ {\bf success} $ \elsee$ \ B$, where we assume that, for any procedure $p$ declared as $p(x) ::A$, a declaration
$\rho(p)(x) :: \doo \; \rho(A) \; \watching_\phi \; c$ is added, where $\rho(A)$ denotes the agent obtained
from $A$ by replacing in it each occurrence of any procedure $q$ by $\rho(q)$. The
assumption in the case of the $\exists x A$ agent is needed for correctness. In practical cases,
it can be satisfied by suitably renaming the variables associated to signals.
In the following $\rightarrow'$ is either of the form  $\rightarrow_{\psi}$ or  $\rightarrow^{a}$.
Analogously for $\noow'$.

\begin{figure}[!h]
\scalebox{0.8}{
\hspace{-1.2cm}\begin{tabular}{lll}
\mbox{ }&\mbox{ }&\\
$\doo \; {\bf success} \; \watching_{\phi} \ c $ & $\Longrightarrow$ & ${\bf success} $\\
\mbox{ }&\mbox{ }&\\
$\doo \; \tell(d)\rightarrow' A \; \watching_{\phi} \ c $ & $\Longrightarrow$ & $\noow_{\phi} \ c \ \elsee \
 \tell(d)\rightarrow' \doo \;  A \; \watching_{\phi} \ c $\\
\mbox{ }&\mbox{ }&\\

$\doo \; \Sigma_{i=1}^{n} \ask(c_i) \rightarrow_i A_i \; \watching_{\phi} \;c\ $ & $\Longrightarrow$ & $\noow_{\phi} \ c \ \elsee \ \Sigma_{i=1}^{n} \ask(c_i)\rightarrow_i \doo \; A_i \; \watching_{\phi} \; c$\\

\mbox{ }&\mbox{ }&\\
$\doo \ (\noow' \ d \ \thene \ A \ \elsee \ B) \ \watching _{\phi}\;c$ & $\Longrightarrow$ & $
\hspace*{-0.2cm}\begin{array}[t]{lll}
  \noow' \ d & \thene \  \doo \; A \; \watching_{\phi}  \, c \\
   & \elsee \ \doo \; B \ \watching_{\phi}  \ c
\end{array}
$\\
\mbox{ }&\mbox{ }&\\

$\doo \; A \parallel B \; \watching_{\phi} \;c$ & $\Longrightarrow$ & $\doo \;  A \; \watching_{\phi} \;c \parallel \doo \;B \; \watching_{\phi}  \;c$\\
\mbox{ }&\mbox{ }&\\

$\doo \; \exists x A \; \watching_{\phi} \;c$  & $\Longrightarrow$ & $ \exists  x \;\doo  \; A \; \watching_{\phi} \;c$, assuming
$\exists  _xc=c$ \\
\mbox{ }&\mbox{ }&\\

$\doo  \; p(x)  \; \watching_{\phi} \;c$  & $\Longrightarrow$ & $\noow_{\phi} \ c \ \elsee \
\rho(p)(x) \; \watching_{\phi} \ c $\\
\mbox{ }&\mbox{ }&\\

\end{tabular} }
\caption{Examples of watchdog constructs.}\label{fig:mpwatchex}
\end{figure}

The translation in Figure~\ref{fig:mpwatchex} can be easily extended to the case of the agent
$\doo \;  A \; \watching_{\phi} \;c \; \elsee \; B$, which behaves as the previous watchdog and also activates the process $B$
when $A$ is aborted (i.e., when $c$ is entailed and the current state is compatible with the
threshold). In the following we will then use also
this form of watchdog.

The assumption on the instantaneous evaluation of $\noow_{\phi} \, c $ is essential in order
to obtain a preemption mechanism which can be expressed in terms of the $\noow_{\phi} \ \thene \ \elsee$  primitive. In fact, if the evaluation of $\noow_{\phi} \ c$  took one time-unit, then this
unit delay would change the compositional behavior of the agent controlled by the
watchdog. Consider, for example, the agent $A= \tell(a)\rightarrow \tell(b)$, which takes two
time-units to complete its computation. The agent $A^t= \noow \ c \ \elsee \
 \tell(a)\rightarrow  \noow \ c \ \elsee \
 \tell(b)$
 (resulting from the translation of $\doo \;  A \; \watching_{\bar{\0}} \;c$) compositionally behaves  as $A$, unless a $c$ signal is detected and the current state is compatible with the threshold, in which case the evaluation of $A$ is interrupted. On the other hand, if the evaluation of $\noow \ c$ took one time-unit, then $A^t$
would take four time-units and would not behave anymore as $A$ when $c$ is not present. In fact, in this case, the agent $A \parallel B$ would produce $d$ while $A^t \parallel B$ would not,
where $B$ is the agent $\ask(\bar{\1}) \rightarrow \noow \ a \ \thene \ \tell(d)\ \elsee \ {\bf success}$.

\noindent The valued version of watchdogs can be defined in an analogous way.
\begin{figure}
\begin{center}
$c_{1}:(\{x\} \rightarrow  \mathit{N}) \rightarrow \mathit{R}^{+}
 \; \; \text{ s.t. } c_{1}(x) = x + 3 \hspace{0.65cm} c_{2}:(\{x\} \rightarrow
\mathit{N}) \rightarrow \mathit{R}^{+} \; \; \text{ s.t. } c_{2}(x) =
x+5$
 \end{center}
\begin{center} $c_{3}:(\{x\} \rightarrow \mathit{N}) \rightarrow \mathit{R}^{+} \; \;
\text{ s.t. } c_{3}(x) = 2x + 8$\hspace{1cm} \end{center}\caption{Three (weighted) soft
constraints; $c_{3}= c_{1} \, \otimes \,
c_{2}$, $c_{2} \ent c_{1}$, $c_{3} \ent c_{1}$  and $c_{3} \ent c_{2}$. }\label{fig:mpconstexample}
\end{figure}

\end{itemize}

With this small set
of idioms, we have now enough expressiveness to describe complex
interactions. For the following examples on the new programming idioms, we consider the \emph{Weighted} semiring $\langle
\mathbbm{R}^{+} \cup\{+\infty\},min,+,+\infty,0 \rangle$~\cite{bistarellibook,jacm} and the (weighted) soft constraints in Figure~\ref{fig:mpconstexample}.
We first provide simple program examples in order to explain as more details as possible on how a computation of {\itshape tsccp} agents proceeds. In Section~\ref{sec:mpauctionexample} we show a more complex example describing the classical actions during a negotiation process; the aim of that example is instead to show the expressivity of the {\itshape tsccp} language, without analyzing its execution in detail.

\begin{example}[Delay]\label{exmp1} As a first very simple example, suppose to have two  agents $A_1, A_2$ of the form:
$A_1:: \tell(\bar{\1}) \stackrel{2}{\longrightarrow}\!^{+\infty} \,\tell(c_{2}) \rightarrow\!^{+\infty} \,\success$ and $A_2:: \tell(\bar{\1}) \stackrel{1}{\longrightarrow}\!^{+\infty} \,\ask(c_1)\rightarrow^{9} \, \success$; their concurrent evaluation in the $\bar{\1} \equiv \bar{0}$ empty store is:
$$ \langle (\tell(\bar{0}) \stackrel{2}{\longrightarrow}\!^{+\infty}\, \tell(c_{2}){\rightarrow}\!^{+\infty} \,{\success}) \parallel (\tell(\bar{0}) \stackrel{1}{\longrightarrow}\!^{+\infty} \, \ask(c_1)\rightarrow^9 \,{\success}) , \bar{0} \rangle\textit{.}$$

The timeline for this parallel execution is described in Figure~\ref{figure:mptimeline1}. For the evaluation of $\tell$ and $\ask$ we respectively consider the rules ${\bf R1}$ and ${\bf R3}$ in Figure~\ref{mpt1}, since both transitions are $a$-valued. However, both these two actions are delayed: three time-units for the $\tell(c_2)$ of $A_1$ (including the first $\tell(\bar{0})$), and two time-units for the $\ask(c_1)$ of $A_2$ (including the first $\tell(\bar{0})$). As explained before, this can be obtained by  adding $\bar{\1}$ to the store with a $\tell$ action respectively three, and two times. Therefore, the parallel agent $A_1 \parallel A_2$ corresponds to:
$$  (\tell(\bar{0} )\rightarrow\!^{+\infty} \,\tell(\bar{0} )\rightarrow\;^{+\infty}\,\tell(\bar{0} )\rightarrow\!^{+\infty} \,\tell(c_2)\rightarrow\!^{+\infty}
\,\success )\parallel$$
$$(\tell(\bar{0} )\rightarrow\!^{+\infty}\,\tell(\bar{0} )\rightarrow\!^{+\infty}\,\ask(c_1) \rightarrow ^9 \, \success)\textit{.}$$

This agent is interpreted by using ${\bf R5}$-${\bf R6}$ in Figure~\ref{mpt1} in terms of maximal parallelism, i.e., all the actions are executed in parallel. The first two $\tell$ of $A_1$ and $A_2$ can be simultaneously executed by using rule ${\bf R1}$: the precondition $(\bar{0} \otimes \bar{0}) \Downarrow_\emptyset= 0 \not< 9$ of the rule is then satisfied. The store does not change since $\bar{0} \otimes \bar{0} = \bar{0}$.
At this point, the $\ask$ action of $A_2$ is not enabled because $\bar{0} \not\ent c_1$, that is the precondition $\sigma \ent c_1$ of ${\bf R3}$ is not satisfied. Therefore, the processor can only be allocated to $A_1$ and, since $(\bar{0} \otimes \bar{0}) \Downarrow_\emptyset= 0 \not< +\infty$ is true (i.e. the precondition of ${\bf R1}$ is satisfied), at $t = 3$ the computation is in the state:
$$ \langle \tell(c_{2})\rightarrow\!^{+\infty} \, \success \parallel \ask(c_1) \rightarrow^{9} \, \success, \bar{0} \rangle\textit{.}$$

Now the $\tell$ can be executed because $(\bar{0} \otimes c_2) \Downarrow_\emptyset= 5 \not< +\infty$: therefore, the store becomes equal to $\bar{0} \otimes c_2 = c_2$:
$$ \langle  \success \parallel \ask(c_1) \rightarrow^{9}\, \success, c_2 \rangle\textit{.}$$

At $t = 5$ (see Figure~\ref{figure:mptimeline1}) we can successfully terminate the program: in the   store $\sigma = c_2$ the $\ask$ is finally enabled at $t=4$, according to the two preconditions of rule ${\bf R3}$, i.e., $c_2 \ent c_1$ and $c_2 \Downarrow_\emptyset= 5 \not< 9$: therefore we have $A_1 \parallel A_2:: \langle  \success \parallel  \success, c_2 \rangle\textit{.}$

\begin{figure}[h]
\centering
\includegraphics[scale=1.2]{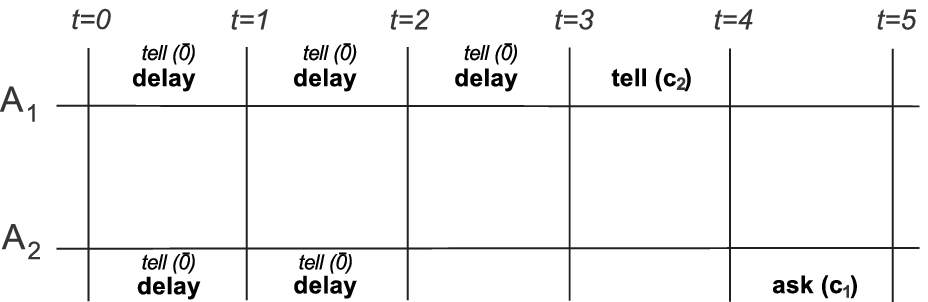} 
\caption{The timeline of the execution of the $A_1 \parallel A_2$ parallel agent in Example~\ref{exmp1}.}
\label{figure:mptimeline1}
\end{figure}
\end{example}

\begin{example}[Timeout]\label{exmp2}  In this second example we evaluate a timeout construct. Suppose we have two agents
 $A_1$ and $A_2$ of the form: $$A_1 :: ((\ask(c_1)\rightarrow\!^{+\infty} \,\success) \,+\, (\ask(c_2)\rightarrow\!^{+\infty}
\,\success)) \; \timeout(1)$$ $$\ask(c_1)\rightarrow\!^{+\infty} \,\success $$ and
$$A_2 :: \tell(\bar{0})\stackrel{2}{\longrightarrow}\!^{+\infty} \,\tell(c_3) \rightarrow\!^{+\infty} \,\success$$

The description of  agent $A_1$ is a shortcut for the following agent, as previously explained in the definition of the timeout:
$$\noow^{+\infty} \ c_1 \ \thene \ B \ \elsee \ (\noow^{+\infty} \ c_2 \ \thene \ B   \
   \elsee \ $$ $$ (\ask(\bar{1}) \rightarrow \
   \noow^{+\infty} \ c_1 \ \thene \ B \ \elsee \ $$
$$ (\noow^{+\infty} \ c_2 \ \thene \ B \ \elsee \
(\ask(\bar{1}) \rightarrow \ask(c_1)\rightarrow\!^{+\infty} \,\success))))\textit{.}$$
where $B::(\ask(c_1)\rightarrow\!^{+\infty}\,\success + \ask(c_2)\rightarrow\!^{+\infty} \,\success)$. Their concurrent evaluation in the $\bar{\1} \equiv \bar{0}$ empty store is:
 $$\begin{array}{ll}
     \langle (B \; \timeout(1) \;
     \ask(c_1)\rightarrow\!^{+\infty} \,\success \parallel \\
     \tell(\bar{0})\stackrel{2}{\longrightarrow}\!^{+\infty}\,
   \tell(c_3)\rightarrow\!^{+\infty} \,\success), \bar{0}\rangle\textit{.}
 \end{array}
$$

The timeline for this parallel execution is given in Figure~\ref{figure:mptimeline2}. At $t=0$ the store is empty (i.e., $\sigma= \bar{0}$), thus both  constraints $c_1$ and $c_2$ asked by the nondeterministic choice agent $A_1$ are not entailed. In $A_2$, the $\tell$ of $c_3$, which would entail both $c_1$ and $c_2$, is delayed by three time-units: in the first  three time-units, $\tell(\bar{0} )\rightarrow^{+\infty}$ is executed  according to the delay construct, as shown in Example~\ref{exmp1}. At $t=2$ the timeout is triggered in $A_1$, since, according to ${\bf R1}$, ${\bf R6}$ and ${\bf R9}$ (see Figure~\ref{mpt1}), the time elapsing in the timeout construct can be executed together with the delay-$\tell$ actions of $A_2$. After the timeout triggering, agent $A_1$ is however blocked, since $c_1$ is not entailed by the current empty store, and the precondition of the $\ask$ (rule ${\bf R3}$) is not satisfied. $A_2$ can execute the last delay-$\tell$, and then perform the $\tell(c_3)$ operation at $t=3$; the store becomes $\sigma = \bar{0} \otimes c_3 = c_3$. This finally unblocks $A_1$ at $t=4$, since, according to the precondition of rule ${\bf R3}$, $\sigma \sqsubseteq c_1$ (i.e., $c_3 \sqsubseteq c_1$). Finally, at $t=5$ we have $\langle  \success \parallel  \success, c_3 \rangle\textit{.}$

\begin{figure}[h]
\centering
\includegraphics[scale=1]{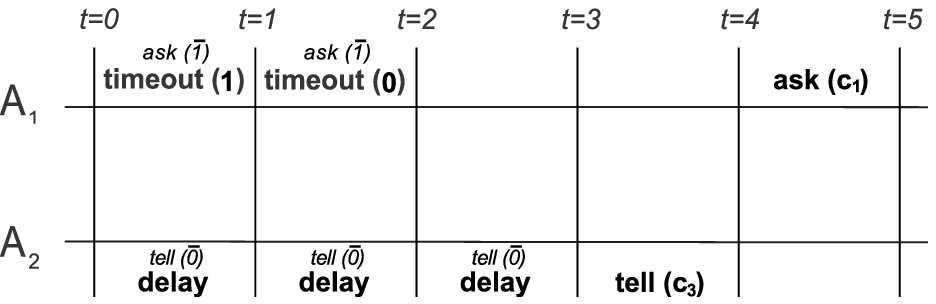} 
\caption{The  timeline of the execution of the $A_1 \parallel A_2$ agent  in Example~\ref{exmp2}.}
\label{figure:mptimeline2}
\end{figure}
\end{example}

\begin{example}[Watchdog]\label{exmp3} In this example
let $$A_1 :: \doo \; (\tell(c_1)\rightarrow\!^{+\infty}\, \ask(c_3)\rightarrow\!^{+\infty} \,\success)\; \watching^{+\infty} (c_2) \;\elsee $$ \vspace{-0.5cm} $$ \; (\,\tell(c_3)\rightarrow\!^{+\infty} \,\success\, )$$
and
$$ A_2::  \tell(c_2)\rightarrow^{+\infty} \success\textit{.}$$

We evaluate the following watchguard construct with two agents $A_1$ and $A_2$ in parallel:
$$ \langle( \doo \; (\tell(c_1)\rightarrow\!^{+\infty}\, \ask(c_3)\rightarrow\!^{+\infty} \,\success)\; \watching^{+\infty} (c_2) \;\elsee $$ \vspace{-0.5cm} $$ \; (\,\tell(c_3)\rightarrow\!^{+\infty} \,\success\, )  \parallel \tell(c_2)\rightarrow^{+\infty} \success), \bar{0} \rangle\textit{.}$$

According to Figure~\ref{fig:mpwatchex},  agent $A_1$ is translated in the following way, where the agent $B$ is a shorthand for the ``else'' branch of the watchdog, that is $\tell(c_3)\rightarrow^{+\infty} \success$:
$$\noow^{+\infty} \ c_2 \ \thene \ B \  \elsee \  (\tell(c_1)\rightarrow^{+\infty} \noow^{+\infty} \ c_2 \ \thene \  B \ \elsee\ \;$$ \vspace{-0.5cm} $$ (\ask(c_3)\rightarrow^{+\infty} \, \noow^{+\infty} \ c_2 \ \thene \ B \  \elsee \  \success)) \textit{.} $$

The execution timeline for this parallel agent is shown in Figure~\ref{figure:mptimeline3}. In the first time-unit we have that $\sigma= \bar{0} \not \sqsubseteq c_2$, i.e., the store does not imply the guard of the $\noow^{+\infty}$, and therefore the interruption of the watchguard in $A_1$ is not triggered yet. Thus, in the first time-unit, both $\tell(c_1)\rightarrow^{+\infty}$ of agent $A_1$ and $\tell(c_2)\rightarrow^{+\infty}$ of agent $A_2$ are executed. At time $t=1$, the interruption of the watchguard is immediately activated (i.e. $\noow^{+\infty} c_2$), since the store is now equal to $c_1 \otimes c_2= c_3$ and $c_3 \ent c_2$ (rule ${\bf R8}$ in Figure~\ref{mpt1}). Therefore, $\tell(c_3)\rightarrow^{+\infty}$ of agent $B$ in $A_1$ is executed, while $A_2$ already corresponds to the $\success$ agent).

\comment{Notice that the $\ask(c_3)\rightarrow^{+\infty}$ in agent $A_1$ can never be executed because the guard of the watchguard implies it, i.e. $c_2 \ent c_3$: the guard of $ \noow^{+\infty} \ c_2 \ \thene \
A \ \elsee \ B$ is always evaluated before $A$ or $B$, as explained by rule ${\bf R8}$  in Figure~\ref{mpt1}.}

\begin{figure}
\centering
\includegraphics[scale=0.84]{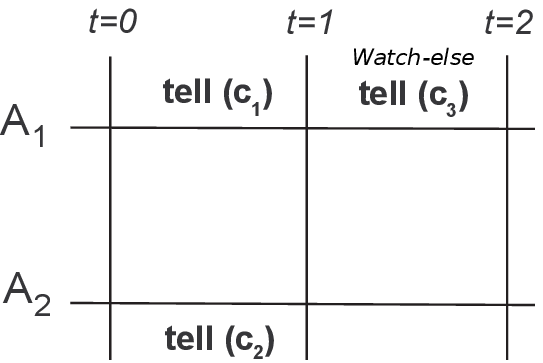} 
\caption{The  timeline of the execution of the $A_1 \parallel A_2$ parallel agent in Example~\ref{exmp3}.}
\label{figure:mptimeline3}
\end{figure}

\end{example}

\subsection{An Auction Example}\label{sec:mpauctionexample}
In Figure~\ref{mpexample} we model the negotiation and the management
of a generic service offered with a sort of auction: auctions, as
other forms of negotiation, naturally need both timed and
quantitative means to describe the interactions among
agents.  We reckon that an auction provides one of the most suitable example where to show the expressivity of the
{\itshape tsccp} language, since both time and preference (for a service or an object) are considered. In the following of the description we consider a  \emph{buyout} auction~\cite{buyout}, where the auctioneer improves the service and the related consumed resources (or, alternatively, its money price), bid after bid. When one (ore more) of the bidders agrees with the offer, it bids for it and the  auction is immediately declared as over.

The auctioneer (i.e. $AUCTIONEER$ in Figure~\ref{mpexample})
begins by offering a service described with the soft constraint
$c_{A_1}$. We suppose that the cost associated to the soft
constraint is expressed in terms of computational capabilities
needed to support the execution of the service: e.g., $c_{i} \sqsubseteq c_{j}$ means that
the service described by $c_{i}$ needs more computational resources
than $c_{j}$. By choosing the proper semiring, this load can be
expressed as a percentage of the CPU use, or in terms of money,
for example; we left this preference generic in the example, since we focus on the interaction among the agents.

We suppose that a constraint can be defined over
three domains of QoS features: availability, reliability and
execution time. For instance, $c_{A_1}$ is defined as $\mathit{availability}
> 95\% \wedge \mathit{reliability}
> 99\% \wedge \mathit{execution} \; \mathit{time} < 3\mathit{sec}$. Clearly, providing a
higher availability or reliability, and a lower execution time
implies raising the computational resources to support this improvement, thus worsening the
preference of the store.

\begin{figure}[h]  \scalebox{0.82}{
\centering\begin{tabular}{ l }
$\underline{AUCTIONEER}::$\\
$INIT\_A \longrightarrow $\\
$\tell(c_{A_1}) \stackrel{t_{sell}}{\longrightarrow}  \:(\Sigma_{i=1}^{n} \ask(bidder_i=i)\rightarrow^{a_A}\tell(winner=i)\rightarrow CHECK )\: \: \timeout({\it w_{A}})$ \\

$\;  (\tell(c_{A_2}) \stackrel{t_{sell}}{\longrightarrow}  \:(\Sigma_{i=1}^{n} \ask(bidder_i=i)\rightarrow^{a_A}  \tell(winner=i)\rightarrow CHECK )\: \: \timeout({\it w_{A}})$ \\

$\; \; ( \tell(c_{A_3}) \stackrel{t_{sell}}{\longrightarrow}  \:(\Sigma_{i=1}^{n} \ask(bidder_i=i)\rightarrow^{a_A}  \tell(winner=i)\rightarrow CHECK ) \: \: \timeout({\it w_{A}})\; $ \\

$\; \; \; \; \;\success ))$\\

\mbox{ }\\

$\underline{CHECK}::$ \\
$\doo \; ( \; ( \ask(service=end)\longrightarrow \success ) \; \; \timeout({\it w_{C}}) \; \; \tell({\it service=interrupt}) \;) $\\
$\; \; \; \; \; \; \; \; \watching_{\phi_\mathit{Check}} ({\it c_{check}}) \; \; \elsee  \; \; (\tell({\it service=interrupt}) \longrightarrow STOP_C)$ \\

\mbox{ }\\

$\underline{BIDDER_i}::$\\
$INIT\_B_i \longrightarrow $                                                      \\
$\doo \; (\; TASK_i \; )\; \watching_{\phi_\mathit{Bidder}} (c_{B_i}) \; \; \elsee  \; \;
\ask(\bar \1)\stackrel{t_{buy_i}}{\longrightarrow}
\tell(bidder_i=i) \longrightarrow$ \\
$\; \; \; \; \; \; \; \; (\;(\ask(winner = i) \longrightarrow USER_i) + (\ask(winner \neq i) \longrightarrow \success)\;)$ \\

\mbox{ }\\

$\underline{USER_i}::$\\
$\doo \; (\; USE\_SERVICE_i \longrightarrow \tell(service=end) \longrightarrow \success \; ) $\\
$\; \; \; \; \; \; \; \; {\watching_{\phi_\mathit{User}}}({\it service=interrupt}) \;
\;
{\elsee} \; \; ({\it STOP_i})$\\

\mbox{ }\\

$\underline{AUCTION\&MONITOR}:: \; AUCTIONEER \; \parallel \; BIDER_1
\; \parallel \; BIDDER_2 \; \parallel \; \dots \; \parallel BIDDER_n$

\end{tabular} }

\caption{An ``auction and management'' example for a generic
service}\label{mpexample}
\end{figure}

After the offer, the auctioneer gives time to the bidders (each of
them described with a possibly different agent $BIDDER_i$  in
Figure~\ref{mpexample}) to make their offer, since the choice of the
winner is delayed by $t_{sell}$ time-units (as in many real-world
auction schemes). A level $a_A$ is used to effectively check that
the global consistency of the store is enough good, i.e., the
computational power would not be already consumed under the given
threshold. After the winner is nondeterministically chosen among
all the bidders asking for the service, the auctioneer becomes a
supervisor of the used resource by executing the agent $CHECK$.
Otherwise, if no offer is received within $w_{A}$
time-units, a timeout interrupts the wait and the auctioneer
improves the offered service by adding a new constraint: for
example, in $\tell(c_{A_2})$, $c_{A_2}$ could be equivalent to
$execution \; time < 1sec$, thus reducing the latency of the
service (from $3$ to $1$ second) and consequently raising, at the
same time, its computational cost (i.e., $\sigma = c_{A_1} \otimes c_{A_2}
\sqsubseteq c_{A_1}$ means that we worsen the consistency level of the
store). The same offer/wait process is repeated three times in
Figure~\ref{mpexample}.

Each of the bidders in Figure~\ref{mpexample}
executes its own task (i.e., $TASK_i$, left generic since not in the scope of the example), but as soon as the offered
resource meets its demand (i.e. $c_{B_i}$
is satisfied by the store: $\sigma \sqsubseteq c_{B_i}$), the
bidder is interrupted and then asks to use the service. The time
needed to react and make an offer is modeled with $t_{buy_i}$:
fast bidders will have more chances to win the auction, if their
request arrives before the choice of the auctioneer. If one of the bidders
wins, then it becomes a user of the resource, by executing
$USER_i$. 

The agent  $USER_i$  uses the service (through the agent$USE\_SERVICE_i$, left generic in Figure~\ref{mpexample}), but it stops (using agent $STOP_i$, left generic in Figure~\ref{mpexample}) as soon as the service
is interrupted, i.e., as the store satisfies ${\it
service=interrupt}$. On the other side, agent $CHECK$  waits
for the use termination, but it interrupts the user if the
computation takes too long (more than $w_{C}$ time-units),
or if the user absorbs the computational capabilities beyond a
given threshold, i.e. as soon as the $c_{check}$ becomes implied
by the store (i.e. $\sigma \sqsubseteq c_{check}$): in fact,
$USE\_SERVICE_i$ could be allowed to ask for more power by
``telling'' some more constraints to the store. To interrupt the
service use, agent  $CHECK$  performs a $\tell({\it
service=interrupt})$. All the agents $INIT$, left generic in
Figure~\ref{mpexample}, can be used to initialize the computation.
In order to avoid a heavy notation in Figure~\ref{mpexample}, we do
not show the preference associated to constraints and the
consistency check label on the transition arrows, when they are
not significative for the example description. Also the $\phi_{\mathit{Check}}$, $\phi_{\mathit{Bidder}}$ and $\phi_{\mathit{User}}$ thresholds of the watchguard constructs are not detailed.

Finally, in the following we model a more refined behaviour of the auctioneer, which accepts the bidding with the highest value, where
$\underline{CHECK}$, $\underline{BIDDER_i}$ and $\underline{USER_i}$ are defined as in Figure \ref{mpexample}.

\begin{figure}[h]  \scalebox{0.82}{
\centering\begin{tabular}{ l }
$\underline{AUCTIONEER'}::$\\
$INIT\_A \longrightarrow $\\
$\tell(c_{A_1}) \stackrel{t_{sell}}{\longrightarrow}  \:(\Sigma_{i=1}^{n} \ask(bidder_i=i)\rightarrow^{a_A} CHOOSE)\: \: \timeout({\it w_{A}})$ \\

$\;  (\tell(c_{A_2}) \stackrel{t_{sell}}{\longrightarrow}  \:(\Sigma_{i=1}^{n} \ask(bidder_i=i)\rightarrow^{a_A}  CHOOSE)\: \: \timeout({\it w_{A}})$ \\

$\; \;  (\tell(c_{A_3}) \stackrel{t_{sell}}{\longrightarrow}  \:(\Sigma_{i=1}^{n} \ask(bidder_i=i)\rightarrow^{a_A}  CHOOSE ) \: \: \timeout({\it w_{A}})\;  \success ))$ \\

\mbox{ }\\

$\underline{CHOOSE}::$ \\
$
\begin{array}{ll}
  \noow \ (bidder_n=n) \ \thene \ \tell(winner=n)\rightarrow CHECK  \\
   \hspace*{0.75cm}\elsee \ (\ \noow \ (bidder_{n-1}=n-1) \ \thene \ \tell(winner=n-1)\rightarrow CHECK  \\
   \hspace*{1.5cm}\elsee \ (\dots (\ \noow \ (bidder_2=2) \ \thene \ \tell(winner=2)\rightarrow CHECK\\
   \hspace*{2.25cm}\elsee \ \tell(winner=1)\rightarrow CHECK) \dots))\text{.}
\end{array}
$\\

\mbox{ }\\

$\underline{newAUCTION\&MONITOR}:: \; AUCTIONEER' \; \parallel \; BIDER_1
\; \parallel \; BIDDER_2 \; \parallel \; \dots \; \parallel BIDDER_n$\\

\end{tabular} }

\caption{A new  ``auction and management'' example for a generic
service}\label{mpexample2}
\end{figure}

Many other real-life automated tasks can be modeled with the
{\itshape tsccp} language. For example,  a quality-driven
composition of web services: the agents that represent different
web services can add to the store their functionalities
(represented by soft constraints) with $\tell$ actions; the final
store models their composition. The consistency level of the store
 represents (for example) the total monetary cost of the
obtained service, or a value representing the consistency of the
integrated functionalities.
The reason is that, when we compose the
services offered by different providers, we cannot be sure of how
much they are compatible. A client wishing to use the
composed service can perform an $\ask$ with a threshold such that it
prevents the client from paying a high price, or having an unreliable
service. Softness is also useful  to model incomplete service
specifications that may evolve incrementally and, in general,
for non-functional aspects.


\section{The Denotational Model}\label{sec:mpdensem}
In this section we define a denotational characterization of the
operational semantics obtained by following the construction in
\cite{BGM00}, and by using {\em timed reactive sequences} to represent
{\itshape tsccp} computations. These sequences are similar to
those used in the semantics of dataflow languages~\cite{Jo85},
imperative languages~\cite{Br93} and  (timed) {\itshape
ccp}~\cite{BP90a,BGM00}.

The denotational model associates with a process a set of timed
reactive sequences of the form $\langle \sigma_1,\gamma_1\rangle
\cdots \langle \sigma_n,\gamma_n\rangle \langle \sigma
,\sigma\rangle$ where a pair of constraints $\langle
\sigma_i,\gamma_i\rangle$ represents a reaction of the given
process at time $i$: intuitively, the process transforms the global
store from $\sigma_i$ to $\gamma_i$ or, in other words, $\sigma_i$
is the assumption on the external environment while $\gamma_i$ is
the contribution of the process itself (which always entails  the
assumption). The last pair denotes a ``stuttering step'' in which
the agent ${\bf Success}$ has been reached. Since the basic
actions of {\itshape tsccp} are monotonic and we can also model a
new input of the external environment by a corresponding tell
operation, it is natural to assume that reactive sequences are
monotonic. Thus, in the following we  assume that each timed
reactive sequence $\langle \sigma_1,\gamma_1\rangle \cdots \langle
\sigma_{n-1},\gamma_{n-1}\rangle\langle \sigma_n,\sigma_n\rangle$
satisfies the conditions  $\gamma_i \vdash \sigma_i
\hbox{ and } \ \sigma_j \vdash \gamma_{j-1},  $ for any $i\in
[1,n-1]$ and $j\in [2,n]$.

The set of all reactive sequences is denoted by ${\cal S}$, its
typical elements by $s,s_1\ldots$, while sets of reactive
sequences are denoted by $S,S_1\ldots$, and $\varepsilon$ indicates
the empty reactive sequence. Furthermore, the symbol $\cdot$ denotes the
operator that concatenates sequences. In the following,
$Process$ denotes the set of {\itshape tsccp} processes.

Operationally, the reactive sequences of an agent are generated as follows.

\begin{definition}[Processes Semantics]\label{def:mpra}
We define the semantics ${\cal R}\in {\it Process}\rightarrow{\cal P}({\cal S})$ by
\[
\begin{array}[t]{ll}
{\cal R}({\it F\text{.}A})= & \{\la \sigma,\sigma'\ra \cdot w \in {\cal S} \mid
\la {\it A}, \sigma \ra \rrarrow \la {\it B}, \sigma' \ra
\hbox{ and } w\in {\cal R}(\it F\text{.} B) \}
\\
& \cup
\\
& \{\la \sigma,\sigma\ra \cdot w \in {\cal S} \mid \la {\it A},
\sigma \ra \not \rrarrow \hbox{ and }\\
& \hspace*{2.4cm}\begin{array}[t]{ll} \mbox{either }  {\it A}\neq {\bf Success} \hbox{ and } w\in {\cal R}(F\text{.}A) \\
\mbox{or }{\it A}={\bf Success} \hbox{ and } w\in {\cal R}(F\text{.} A) \cup
\{\varepsilon\} \}\textit{.}
\end{array}
\end{array}
\]

Formally  ${\cal R}$ is
defined as the least fixed-point of the operator $\Phi\in ({\it
Process}\rightarrow{\cal P}({\cal S})) \rightarrow {\it
Process}\rightarrow{\cal P}({\cal S})$ defined by

\begin{array}[t]{lll}
\Phi(I)({\it F\text{.} A})= & \{\la \sigma,\delta\ra \cdot w \in {\cal S}
\mid \la {\it A}, \sigma \ra \rrarrow \la {\it B}, \delta \ra
\hbox{ and } w\in I(\it F\text{.}  B) \}
\\
& \cup
\\
& \{ \la \sigma,\sigma\ra \cdot w \in {\cal S} \mid  \la {\it A},
\sigma \ra \not \rrarrow
\hbox{ and }\\
&\hspace*{2.4cm}\begin{array}[t]{ll}
\hbox{either }  {\it A}\neq {\bf Success} \hbox{ and } w\in I(F\text{.}A) \\
\mbox{or }{\it A}={\bf Success} \hbox{ and } w\in I(F\text{.} A) \cup
\{\varepsilon\}\}\textit{.}
\end{array}
\end{array}
\end{definition}
The ordering on ${\it Process}\rightarrow{\cal P}({\cal S})$ is
that of (point-wise extended) set-inclusion, and since it is
straightforward to check that $\Phi$ is continuous, standard
results ensure that the least fixpoint exists (and it is equal to
$\sqcup_{n\geq 0} \Phi^n(\bot)$).

Note that ${\cal R}(F\text{.}A)$ is the union of the set of all successful
reactive sequences that start with a reaction of $A$, and the set
of all successful reactive sequences that start with a stuttering
step of $A$.  In fact, when an agent is blocked, i.e., it cannot
react to the input of the environment, a stuttering step is
generated. After such a stuttering step, the computation can either
continue with the further evaluation of ${\it A}$ (possibly
generating more stuttering steps), or it can terminate if ${\it
A}$ is the ${\bf Success}$ agent. Note also that, since the ${\bf
Success}$ agent used in the transition system cannot make any
move, an arbitrary (finite) sequence of stuttering steps is always
appended to each reactive sequence.

\subsection{Correctness}\label{sec:mpcorrectness}

The observables ${\cal O}_{io}^{mp}(P)$ describing the input/output
pairs of successful computations can be obtained from ${\cal R}(P)$ by
considering suitable sequences, namely those sequences which do
not perform assumptions on the store. In fact, note that some
reactive sequences do not correspond to real computations:
Clearly, when considering a real computation no further
contribution from the environment is possible. This means that, at
each step, the assumption on the current store must be equal to
the store produced by the previous step. In other words, for any
two consecutive steps $\la \sigma_i,\sigma'_i\ra \la
\sigma_{i+1},\sigma'_{i+1}\ra$ we must have $\sigma'_i =
\sigma_{i+1}$. Thus, we are led to the following.

\begin{definition}[Connected  Sequences]
Let $s=\la \sigma_1,\sigma'_1\ra \la \sigma_2,\sigma'_2\ra \cdots
\la \sigma_{n},\sigma_{n}\ra$ be a reactive sequence. We say that
$s$ is connected if $\sigma_1= \1$ and $\sigma_i = \sigma'_{i-1}$
for each $i$, $2\leq i\leq n$.
\end{definition}

According to the previous definition, a sequence is connected if
all the information assumed on the store is produced by the
process itself. To be defined as
connected, a sequence must also have $\1$ as the initial
constraint. A connected sequence $s=\la \1,\sigma_1\ra \la \sigma_1,\sigma_2\ra \cdots
\la \sigma_{n},\sigma_{n}\ra$ represents a {\itshape tsccp}
computation of a process $F\text{.}A$,  where $\1$ is the input constraint and $\sigma_n \Downarrow _{Fv(A)}$ is the result.
From the above discussion we can derive the following property:

\begin{proposition}[Correctness] \label{theo:mpcorrectness}
For any process $P= F\text{.}A$ we have
\vspace{-0.2cm}
$${\cal O}_{io}^{mp}({ P}) = \{ \sigma_n \Downarrow _{Fv(A)}\mid
\begin{array}[t]{ll} \mbox{\rm there exists a connected sequence } s\in
{\cal R}(P) \ \mbox{\rm such that }
\\
 s=\la \1,\sigma_1\ra \la \sigma_1, \sigma_2\ra\cdots
 \la \sigma_{n},\sigma_{n}\ra \}\textit{.}
\end{array}
$$
\end{proposition}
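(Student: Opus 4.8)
The plan is to establish the equality of the two sets by double inclusion, using the characterization of $\mathcal{R}(P)$ as a least fixed point together with the relationship between connected reactive sequences and the transition relation $\rrarrow$. First I would observe that, by Definition~\ref{def:mpra}, every sequence $s \in \mathcal{R}(F.A)$ is obtained by repeatedly choosing, at each step, either a genuine reaction $\la A_i, \sigma_i \ra \rrarrow \la A_{i+1}, \sigma_i' \ra$ or a stuttering step $\la A_i, \sigma_i \ra \not\rrarrow$ with $\sigma_i' = \sigma_i$ and $A_{i+1} = A_i$, the sequence terminating only when $A_i = {\bf Success}$. The key point is that the definition of $\mathcal{R}$ does \emph{not} constrain $\sigma_{i+1}$ relative to $\sigma_i'$ — that is where the environment is allowed to interfere — whereas the connectedness condition $\sigma_{i+1} = \sigma_i'$ (and $\sigma_1 = \1$) precisely kills this freedom.

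For the inclusion $\supseteq$, I would take a connected sequence $s = \la \1, \sigma_1 \ra \la \sigma_1, \sigma_2 \ra \cdots \la \sigma_n, \sigma_n \ra \in \mathcal{R}(P)$ and read off from the (inductive unfolding of the) fixed-point definition of $\mathcal{R}$ a sequence of agents $A = A_1, A_2, \ldots, A_n = {\bf Success}$ such that for each $i$ either $\la A_i, \sigma_i \ra \rrarrow \la A_{i+1}, \sigma_{i+1} \ra$ (a reaction, using connectedness to identify the produced store with $\sigma_{i+1}$), or $\la A_i, \sigma_i \ra \not\rrarrow$ with $\sigma_{i+1} = \sigma_i$ and $A_{i+1} = A_i$ (a stuttering step). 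Composing the reaction steps and deleting the stuttering steps — which by $\sigma_{i+1} = \sigma_i$ do not change the store — yields $\la A, \1 \ra \rrarrow^* \la {\bf Success}, \sigma_n \ra$, so $\sigma_n \Downarrow_{Fv(A)} \in \mathcal{O}_{io}^{mp}(P)$. Here I would be slightly careful to note that ${\bf Success}$ stands for any agent built from ${\bf success}$ and $\parallel$, consistently with the definition of the observables, and that the final stuttering pairs $\la \sigma_n, \sigma_n \ra$ that $\mathcal{R}$ always appends are harmless.

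For the inclusion $\subseteq$, given a successful computation $\la A, \1 \ra = \la A_1, \tau_1 \ra \rrarrow \la A_2, \tau_2 \ra \rrarrow \cdots \rrarrow \la A_k, \tau_k \ra$ with $A_k = {\bf Success}$ and $\tau_1 = \1$, I would build the connected reactive sequence $\la \tau_1, \tau_2 \ra \la \tau_2, \tau_3 \ra \cdots \la \tau_{k-1}, \tau_k \ra \la \tau_k, \tau_k \ra$ (appending a stuttering pair so that the last component is a repeated store, as required by the format of reactive sequences), check that it satisfies the monotonicity conditions on reactive sequences (which follows because $\rrarrow$ only ever strengthens the store via $\otimes$, so $\tau_{i+1} \vdash \tau_i$) and the connectedness condition (immediate from the construction and $\tau_1 = \1$), and then verify by induction on $k$, unfolding $\Phi$, that this sequence lies in $\mathcal{R}(P)$. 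The result component is $\tau_k \Downarrow_{Fv(A)}$, matching the right-hand side.

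The main obstacle I anticipate is bookkeeping rather than conceptual: correctly matching the stuttering-step convention of Definition~\ref{def:mpra} (a blocked non-${\bf Success}$ agent forces a stuttering pair, and finitely many stuttering pairs are always appended) with the format $\la \sigma_n, \sigma_n \ra$ of the terminal pair and with the fact that ${\bf Success}$ may be a parallel composition of ${\bf success}$ agents that genuinely cannot move. One must also take care that deleting/inserting stuttering steps preserves connectedness — which works precisely because a stuttering step has $\sigma' = \sigma$ — and that the $\supseteq$ direction uses the least-fixed-point (inductive) characterization of $\mathcal{R}$ so that the chosen sequence of agents is actually well-defined and finite. Once these conventions are pinned down, both inclusions are routine inductions on the length of the sequence, so I would not expect to need anything beyond the definitions already in the excerpt.
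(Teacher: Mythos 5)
Your proposal is correct and follows essentially the same route as the paper: it characterizes membership in ${\cal R}(P)$ via the step-by-step correspondence with the transition system (genuine transitions versus stuttering steps of blocked agents), then observes that connectedness with initial store $\1$ forces the stores to thread through so that the sequence corresponds exactly to a computation $\la A,\1\ra \rrarrow^* \la {\bf Success},\sigma_n\ra$, concluding by the definition of ${\cal O}_{io}^{mp}$. Your presentation as two explicit inclusions, with the insertion/deletion of harmless stuttering pairs, is just a more detailed rendering of the paper's single ``if and only if'' chain.
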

\begin{proof}
From the close correspondence between the
rules of the transition system and the definition of the
denotational semantics, we have that $s \in {\cal R}(P)$ if and only if
$s=\la \sigma_1,\sigma_1'\ra \la \sigma_2,\sigma_2'\ra\cdots
  \la \sigma_{n},\sigma_{n}\ra$, $A_1=A$, $A_n ={\bf Success}$  and for $i\in [1,n-1]$,
\begin{itemize}
  \item either $\la {A_i},\sigma_i\ra \rrarrow \la {A_{i+1}},\sigma'_i\ra$
  \item or $\la {A_i},\sigma_i\ra \not\rrarrow$, $A_{i+1}= A_i$  and $\sigma'_i= \sigma_i$.
\end{itemize}
Then there exists a connected sequence $s\in {\cal R}(P)$ if and only if
$s=\la \sigma_1,\sigma_2\ra \la \sigma_2,\sigma_3\ra$ $\cdots
  \la \sigma_{n},\sigma_{n}\ra$, $A_1=A$, $\sigma_1= \1$, $A_n ={\bf Success}$  and for $i\in [1,n-1]$,
$\la {A_i},\sigma_i\ra \rrarrow \la {A_{i+1}},\sigma_{i+1}\ra$.
Therefore, the proof follows by definition of ${\cal O}_{io}^{mp}(P)$.
\end{proof}
\medskip

\subsection{Compositionality of the Denotational Semantics for {\itshape tsccp}
Processes}\label{sec:mpcompdensem}

In order to prove the compositionality of the denotational
semantics, we now introduce a semantics $\os F\text{.}A \cs(e)$, which is
compositional by definition and where, for technical reasons, we
explicitly represent  the environment $e$ that associates a
denotation to each procedure identifier. More precisely, assuming
that {\itshape Pvar} denotes the set of procedure identifiers,
${\it Env}={\it Pvar}\rightarrow {\cal P}({\cal S})$, with typical
element $e$, is the set of {\itshape environments}. Given $e \in
Env$, $p \in {\it Pvar}$  and $f \in {\cal P}({\cal S})$, we
denote by $e'=e\{f/p\}$ the new environment such that $e'(p)=f$
and $e'(p')= e(p')$ for each procedure identifier $p'\neq p$.

Given a process $F\text{.}A$, the denotational semantics $\os F\text{.}A \cs
:{\it Env} \rightarrow {\cal P}({\cal S})$ is defined by the
equations in Figure~\ref{mpdensem}, where $\mu$ denotes the least
fixpoint with respect to the subset inclusion of elements of ${\cal P}({\cal
S})$. The semantic operators appearing in Figure~\ref{mpdensem} are
formally defined as follows; intuitively they reflect the operational behavior of their syntactic
counterparts in terms of
reactive sequences.\footnote{In Figure~\ref{mpdensem} the syntactic operator
$\rightarrow_i$ is either of
 the form $\rightarrow^{a_i}$ or $\rightarrow_{\phi_i}$.} We first need the following definition.

\begin{definition}\label{def:servizio}
Let $\sigma, \phi$  and
$c$ be constraints in ${\cal C}$ and let $a \in \mathcal{A}$. We
say that

\begin{itemize}
    \item $\sigma   \succ ^{a}\, c, $ if $ (\sigma\ent c $ and $\sigma\Downarrow_{\emptyset} \not< a) \ \ \  \hbox {while } \ \ \  \sigma \succ _{\phi}\,   c, $ if $ (\sigma \ent c$ and $\sigma \not\sqsubset
    \phi)$.
\end{itemize}
\end{definition}

\begin{definition}[Semantic operators]\label{def:mpsemoperators}

Let $S,S_i$ be sets of reactive sequences,
 $c,c_i$ be constraints and let $\succ_i$ be either of
 the form $\succ^{a_i}$ or $\succ_{\phi_i}$.
Then we define the operators $\tilde{tell}$, $\tilde{\sum}$,
$\tilde{\parallel}$, $\tilde{now}$ and $\tilde{\exists} x$ as
follows:

\noindent {\bf The (valued) tell operator}
\[\begin{array}{ll}
  \tilde{tell}^a(c,  S)= &
\begin{array}[t]{l}
\{ s  \in {\cal S} \mid s = \la \sigma ,\sigma\otimes c\ra \cdot
s', \ \sigma\otimes c\Downarrow_{\emptyset} \not<a  \mbox{ and
} s' \in S\ \}\textit{.}
\end{array} \\
\end{array}
\]
\[\begin{array}{ll}
  \tilde{tell}_{\phi}(c,  S)= &
\begin{array}[t]{l}
\{ s  \in {\cal S} \mid s = \la \sigma ,\sigma\otimes c\ra \cdot
s', \ \sigma\otimes c \not\sqsubset
    \phi  \mbox{
and } s' \in S\ \}\textit{.}
\end{array} \\
\end{array}
\]

\noindent
{\bf The guarded choice}
\[
{\tilde{\sum}} _{i=1}^n c_i \succ_i \, S_i =
\begin{array}[t]{l}
\{ s\cdot s'   \in {\cal S} \mid
\begin{array}[t]{l}
s = \la \sigma_1 ,\sigma_1\ra \cdots \la \sigma_m,\sigma_m\ra,
\sigma_j {\not\succ_i }\,c_i
\\
\mbox{for each } j\in [1,m$-$1], i\in [1,n],
\\
\sigma_m \succ_h\,  c_h \hbox { and } s'\in S_h \hbox{
for an } h\in[1,n] \ \}\textit{.}
\end{array}
\end{array}
\]

\noindent {\bf The parallel composition} Let $\tilde{\parallel}
\in{\cal S}\times {\cal S}\rightarrow {\cal S}$ be the
(commutative and associative) partial operator defined as follows:
\[
\begin{array}{ll}
\la \sigma_1,\sigma_1 \otimes \gamma_1\ra \cdots\la
\sigma_n,\sigma_n \otimes \gamma_n\ra \la \sigma,\sigma\ra \
\tilde{\parallel}\ \la \sigma_1,\sigma_1 \otimes
 \delta_1\ra\cdots\la \sigma_n,\sigma_n \otimes \delta_n\ra \la \sigma,\sigma\ra & =
\\
\la \sigma_1, \sigma_1 \otimes \gamma_1\otimes \delta_1\ra\cdots
\la \sigma_n, \sigma_n \otimes \gamma_n\otimes \delta_n \ra\la
\sigma,\sigma\ra\textit{.}
\end{array}
\]

We define $S_1\tilde{\parallel} S_2$ as the point-wise extension
of the above operator to sets.

\noindent
{\bf The (valued) now operator}
\[\begin{array}{ll}
  \tilde{now}^a(c,  S_{1} , S_{2})=
  \{ s  \in {\cal S} \mid & s = \la \sigma,\sigma' \ra \cdot s', \, \sigma\Downarrow_{\emptyset} \not<a \hbox{ and }\\
  &\hbox{either }
  \sigma\ent c  \mbox{ and } s \in S_{1}\\
  & \hbox{or } \sigma\not \ent c \mbox{ and } s \in S_{2}\ \}\textit{.} \\
\end{array}
\]
\[\begin{array}{ll}
  \tilde{now}_{\phi} (c,  S_{1} , S_{2})=
  \{ s  \in {\cal S} \mid & s = \la \sigma ,\sigma' \ra \cdot s', \sigma \not\sqsubset \phi \hbox{ and }\\
  &\hbox{either }
 \sigma\ent c  \mbox{ and } s \in S_{1}\\
  & \hbox{or }  \sigma\not \ent c  \mbox{ and } s \in S_{2}\ \}\textit{.} \\
\end{array}
\]

\noindent
{\bf The hiding operator}
The semantic hiding operator can be defined as
follows:
\[
\begin{array}{ll}
  {\bf \tilde{ \exists} } x S =  \{ s \in {\cal S} \mid & \mbox{\rm
  there exists $ s' \in S$ such that $s
= s'[x/y]$ with $y$ new } \}
\end{array}
\]
where $s'[x/y]$ denotes the sequence obtained from $s'$ by
replacing the variable $x$ for the variable $y$, which we assume to
be new.\footnote{To be more precise, we assume that each time that we
consider a new application of the operator  ${\bf \tilde{
\exists} }$ we use a new, different $y$. As in the case of the
operational semantics, this can be ensured by a suitable renaming
mechanism.}
\end{definition}

Obviously, the semantic (valued) tell operator reflects  the operational behavior of the syntactic (valued) tell.  Concerning the semantic
choice operator, a sequence in $\tilde{\sum} _{i=1}^n c_i
\succ_i \,S_i$ consists of an initial period of waiting
for a store which satisfies one of the guards. During this waiting period, only the environment is active by
producing the constraints $\sigma_j$, while the process itself
generates the stuttering steps $\langle \sigma_j,\sigma_j\rangle$.
When the store is strong enough to satisfy a guard, that is to entail a
$c_h$ and to satisfy the condition on the cut level,
the resulting sequence is  obtained by adding $s'\in S_h$ to
the initial waiting period.
In the semantic parallel operator defined on sequences, we require
that the two arguments of the operator agree at each point of time
with respect to the contribution of the environment (the
$\sigma_i$'s), and that they have the same length (in all other
cases the parallel composition is assumed being undefined).

If $F\text{.}A$ is a closed process, that is if all the procedure names
occurring in $A$ are defined in $F$, then  $\os F\text{.}A\cs(e)$ does
not depend on $e$, and it will be indicated as $\os F\text{.}A\cs$.
Environments in general allow us to define the semantics also of
processes that are not closed. The following result shows the correspondence between the two
semantics we have introduced and, therefore, it proves the compositionality of
${\cal R}(F\text{.}A)$. From the above discussion we can derive the following property:

\begin{proposition}[Compositionality]\label{theo:mpreqden}
If $F\text{.}A$ is closed then ${\cal R}(F\text{.}A) = \os F\text{.}A \cs$ holds.
\end{proposition}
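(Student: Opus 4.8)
The plan is to prove the two inclusions ${\cal R}(F\text{.}A) \subseteq \os F\text{.}A \cs$ and $\os F\text{.}A \cs \subseteq {\cal R}(F\text{.}A)$ separately, exploiting the fixpoint characterizations of both sides. Recall that ${\cal R}$ is the least fixpoint of the operator $\Phi$, so ${\cal R} = \sqcup_{n \geq 0} \Phi^n(\bot)$, and $\os \cdot \cs$ is defined by structural equations that also involve least fixpoints (for the procedure-call case). The key technical tool will be an \emph{approximation} argument: I would introduce the finite approximants ${\cal R}_n = \Phi^n(\bot)$ on the operational side and, on the denotational side, the approximants obtained by unfolding the $\mu$-operator finitely many times (say $\os \cdot \cs_n$), and prove by induction on $n$ that $\os F\text{.}A \cs_n(e) = {\cal R}_n(F\text{.}A)$ for all closed $F\text{.}A$, whence the equality follows by taking the union over $n$ (both operators being continuous, the limits of the chains are the respective least fixpoints).

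First I would establish the inclusion ${\cal R}(F\text{.}A) \subseteq \os F\text{.}A \cs$. Here I would proceed by induction on the length of the reactive sequence $s \in {\cal R}(F\text{.}A)$, and within a fixed length, by structural induction on the agent $A$. The base case is $A = {\bf success}$ (or more generally ${\bf Success}$), where the only reactive sequences are finite sequences of stuttering steps $\la \sigma,\sigma\ra \cdots \la \sigma_n,\sigma_n\ra$, which both semantics handle identically. For the inductive step, I would analyze the first step of $s$: if it is a genuine reaction $\la A,\sigma\ra \rrarrow \la B, \sigma'\ra$ produced by rule \textbf{R1}--\textbf{R17}, then by the close correspondence between each transition rule and the corresponding semantic operator in Figure~\ref{mpdensem} (for instance, \textbf{R1}/\textbf{R2} corresponds to $\tilde{tell}$, \textbf{R5}/\textbf{R6} to $\tilde{\parallel}$, \textbf{R7} to $\tilde{\sum}$, \textbf{R8}--\textbf{R15} to $\tilde{now}$, \textbf{R16} to $\tilde{\exists}x$, \textbf{R17} to the procedure-call equation) the tail $w$ belongs to ${\cal R}(F\text{.}B)$, hence by induction to $\os F\text{.}B \cs$, and the semantic operator reconstructs $s \in \os F\text{.}A \cs$; if instead the first step is a stuttering step because $\la A,\sigma\ra \not\rrarrow$, I would check that in every such case the relevant semantic operator indeed admits the stuttering prefix (e.g.\ the waiting period in $\tilde{\sum}$, or the stuttering clause built into the semantics when no guard is satisfied and no tell precondition holds). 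The converse inclusion $\os F\text{.}A \cs \subseteq {\cal R}(F\text{.}A)$ is entirely symmetric: given $s \in \os F\text{.}A \cs$, peel off the outermost semantic operator according to the structure of $A$, use the induction hypothesis on the subsequences, and invoke the matching transition rule to rebuild a derivation of $s$ in ${\cal R}(F\text{.}A)$.

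The parallel-composition case is where the argument is most delicate, and I expect it to be the main obstacle. The semantic operator $\tilde{\parallel}$ is only partial: it requires the two argument sequences to have the same length and to agree, step by step, on the environment contribution (the $\sigma_i$'s). To match this with rules \textbf{R5} and \textbf{R6} one must argue carefully that whenever $\la A \parallel B, \sigma\ra$ makes a transition it is because \emph{both} $A$ and $B$ make a transition from the common store $\sigma$ (rule \textbf{R5}, with the resulting store $\sigma \otimes \delta \otimes \delta'$ independent of the choice of witnesses $\delta,\delta'$, as noted after Figure~\ref{mpt1}), or exactly one of them does and the other is blocked (rule \textbf{R6}, which produces a stuttering step on the blocked side) — and conversely. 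The bookkeeping needed to show that the lengths and the assumption-components line up, and that blocked components contribute precisely the stuttering steps that $\tilde{\parallel}$ expects, is the fiddly part; but it is essentially the same argument as in the (timed) \emph{ccp} compositionality proof of \cite{BGM00}, adapted to carry along the cut levels $a$ and $\phi$, which enter only through the side-conditions of the transition rules and the corresponding conditions ($\sigma \otimes c \Downarrow_\emptyset \not< a$, $\sigma \otimes c \not\sqsubset \phi$, and the analogous ones via Definition~\ref{def:servizio}) in the semantic operators, and so cause no extra difficulty. The procedure-call case is handled by the finite-approximation layering described above, since the assumption that $F\text{.}A$ is closed guarantees that $\os F\text{.}A \cs(e)$ does not depend on $e$.
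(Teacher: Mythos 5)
You take essentially the paper's route---the whole weight rests on the rule-by-operator correspondence between Figure~\ref{mpt1} and Definition~\ref{def:mpsemoperators}, with parallel composition singled out as the only case needing real work---but you organize it differently. The paper proves, by induction on the complexity of $A$ alone, a single characterization of $\os F\text{.}A \cs$ as the set of sequences generated by runs $A_1=A,\dots,A_n={\bf Success}$ in which every step is either a transition $\la A_i,\sigma_i\ra \rrarrow \la A_{i+1},\sigma_i'\ra$ or a stuttering step of a blocked agent; the equality with ${\cal R}(F\text{.}A)$ is then immediate from the least-fixpoint definition of ${\cal R}$, and only the parallel case is spelled out. You instead prove the two inclusions by peeling off the first step under a lexicographic induction on (sequence length, agent structure). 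That scheme does work---the length component handles residual agents that are not subterms (e.g.\ $A'\parallel B'$ after {\bf R5}, or a procedure body after {\bf R17}), while the structural component handles the instantaneous $\noow$ and the components of $\parallel$ at unchanged length---so the difference is bookkeeping rather than substance, and your version makes the treatment of recursion explicit where the paper hides it inside ``close correspondence''.

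Two details of your sketch would not survive as written, though both are repairable. First, $\tilde{tell}^a$, $\tilde{tell}_\phi$ and $\tilde{now}$ have no stuttering clause, so your plan to ``check that the relevant semantic operator admits the stuttering prefix'' fails for a tell (or now) blocked by its cut level; the inclusion holds there for a different reason: the store along a reactive sequence only grows, and the blocking conditions $(\sigma\otimes c)\Downarrow_{\emptyset}< a$ and $\sigma\otimes c\sqsubset\phi$ are preserved when $\sigma$ is strengthened, so a tell blocked by its threshold stays blocked forever, can never reach ${\bf Success}$, and hence contributes no such stutter-prefixed sequences to ${\cal R}$ at all. Second, the per-level equality $\os F\text{.}A\cs_n(e)={\cal R}_n(F\text{.}A)$ cannot hold literally: $\Phi^n(\bot)$ contains only sequences of length at most $n$, whereas finitely unfolding the $\mu$ of {\bf E9} already yields sequences of arbitrary length. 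What you actually need---and what makes your length induction go through---is only that every recursive unfolding is guarded by one time-unit ($ask(\bar{\1})\rightarrow$ in {\bf E9}, one step for {\bf R17}), so membership of a length-$n$ sequence is settled after finitely many unfoldings. With those two adjustments your argument is a sound alternative writing-up of the paper's proof.
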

\begin{proof} We prove by induction on the complexity of the agent $A$ that
\[
\begin{array}{ll}
  \os F\text{.}A \cs=\{ s  \mid & s=\la \sigma_1,\sigma_1'\ra \la \sigma_2,\sigma_2'\ra\cdots
  \la \sigma_{n},\sigma_{n}\ra, \\
  & A_1=A, A_n ={\bf Success} \mbox{ and for } i\in [1,n-1],\\
  & \mbox{either } \la {A_i},\sigma_i\ra \rrarrow \la
  {A_{i+1}},\sigma'_i\ra \\
  & \mbox{or } \la {A_i},\sigma_i\ra \not \rrarrow , \,
  A_{i+1}=A_i,\, \sigma'_i= \sigma_i    \}\textit{.}
\end{array} \]
Then the proof follows by definition of ${\cal R}(P)$.

When the $P$ is not of the form $F\text{.}B \parallel C$ the thesis
follows immediately from the close correspondence between the
rules of the transition system and the definition of the
denotational semantics.

Assume now that $P$ is of the form $F\text{.}B \parallel C$.
By definition of the denotational semantics, $s \in \os F\text{.}A \cs$ if and only if
$s=\la \sigma_1,\sigma_1'\ra \la \sigma_2,\sigma_2'\ra\cdots
  \la \sigma_{n},\sigma_{n}\ra$ and there exist $s' \in
\os F\text{.}B\cs $ and $s'' \in \os F\text{.}C\cs $,

\[
\begin{array}{ll}
s'= \la \sigma_1,\sigma_1 \otimes \gamma_1\ra  \la \sigma_2,\sigma_2 \otimes \gamma_2\ra \cdots\la
\sigma_n,\sigma_n \ra \\
s''= \la \sigma_1,\sigma_1 \otimes
 \delta_1\ra\la \sigma_2,\sigma_2 \otimes \delta_2\ra\cdots\la \sigma_n,\sigma_n\ra
\end{array}
\]
such
that  for each $i\in [1,n-1]$,
$\sigma_i' = \sigma_i \otimes \gamma_i \otimes \delta_i$.
By inductive hypothesis $s' \in
\os F\text{.}B\cs $ and $s'' \in \os F\text{.}C\cs $ if and only if for  $i\in [1,n-1],$
\begin{eqnarray}
\begin{array}{ll}
\mbox{either  } \la {B_i},\sigma_i\ra \rrarrow \la {
B_{i+1}},\sigma_i \otimes \gamma_i \ra ,\\
\mbox{ or  } \la {B_i},\sigma_i\ra \not \rrarrow , \,
B_{i+1}=B_i,\,  \sigma_i \otimes \gamma_i = \sigma_i  & \mbox{and} \\
\mbox{either  } \la {C_i},\sigma_i\ra \rrarrow \la {
{C_{i+1}}},\sigma_i \otimes \delta_i\ra, \\
 \mbox{ or  } \la {C_i},\sigma_i\ra \not \rrarrow , \,
C_{i+1}=C_i,\,  \sigma_i \otimes \delta_i = \sigma_i  \text{.} \\
\end{array}\label{24feb123}
\end{eqnarray}
$B_1=B$,  $B_n={\bf Success}$, $C_1=C$  and $C_n={\bf Success}$.
Therefore, by Rule {\bf R8} and previous observations, we have that
(\ref{24feb123}) holds if and only if   $B_1\parallel C_1 =B\parallel C$, $B_n\parallel C_n ={\bf Success}$  and for $i\in [1,n-1]$,
\[\begin{array}{ll}
    \mbox{either } \la B_i \parallel C_i,\sigma_i\ra \rrarrow \la B_{i+1} \parallel
C_{i+1},\sigma_i'\ra & \\
\mbox{or } \la B_i \parallel C_i,\sigma_i\ra \not \rrarrow, \,
  A_{i+1}\parallel  B_{i+1}=A_i\parallel  B_{i},\, \sigma'_i= \sigma_i
  \end{array}
\]
and then the thesis.
\end{proof}
\medskip

\begin{figure}
\centering
\begin{tabular}{ l  l  l}

\mbox{ }&\mbox{ }&\\

\mbox{{\bf E1}} & $\os {\it F\text{.}{\bf success}} \cs(e) = \{ \la
\sigma_1, \sigma_1\ra \la \sigma_2, \sigma_2\ra  \cdots \la
\sigma_n, \sigma_n\ra \in {\cal S}
\mid n\geq 1\}$&\\

\mbox{ }&\mbox{ }&\\

\mbox{{\bf E2}} & $\os {\it F\text{.}\hbox{\tell}(c) \rightarrow^{a} A}
\cs(e) =  \tilde{tell}^a (c,  \os {\it F\text{.}A}\cs(e))$&\\

\mbox{ }&\mbox{ }&\\

\mbox{{\bf E3}} & $\os {\it F\text{.}\hbox{\tell}(c) \rightarrow_{\phi}
A}
\cs(e) =  \tilde{tell}_{\phi}(c,  \os {\it F\text{.}A} \cs(e))$&\\

\mbox{ }&\mbox{ }&\\

\mbox{{\bf E4}} & $\os {\it F\text{.}\sum_{i=1}^n }{\it \ask(c_i)
\rightarrow_i  A_i} \cs(e) =
{\tilde{\sum} } _{i=1}^n c_i \succ_i \,\os {\it F\text{.}A_i} \cs(e)$&\\

\mbox{ }&\mbox{ }&\\

\mbox{{\bf E5}} & $\os {\it F\text{.}\noow^a\ c\  \thene \ A
\ \elsee \ B}\cs(e)=
\tilde{{\it now}}^a(c,\os{\it F\text{.}A}\cs(e),\os {\it F\text{.}B}\cs(e))$&\\

\mbox{ }&\mbox{ }&\\

\mbox{{\bf E6}} & $\os {\it F\text{.}\noow_\phi\ c\  \thene \ A \
\elsee \ B}\cs(e)=
\tilde{{\it now}}_\phi(c,\os{\it F\text{.}A}\cs(e),\os {\it F\text{.}B}\cs(e))$&\\

\mbox{ }&\mbox{ }&\\

\mbox{{\bf E7}} & $\os {\it F\text{.}A\parallel B}\cs(e) = \os {\it
F\text{.}A}\cs(e) \ \tilde{ \parallel }
\ \os {\it G\text{.}B}\cs(e)$&\\

\mbox{ }&\mbox{ }&\\

\mbox{{\bf E8}} &
$\os {\it F\text{.}\exists x A}\cs(e) = \tilde{\exists}x \os{\it F\text{.}A} \cs(e)$&\\

\mbox{ }&\mbox{ }&\\

\mbox{{\bf E9}} &
$\os {\it F\text{.}p(x)} \cs(e) = \mu \Psi \; \; \text{where}  \; \;
 \Psi(f)=\os F\setminus\{p\}\text{.} ask(\bar{\1}) \rightarrow
A\cs(e\{f/p\}),  \; \; p(x)::A \in F$\\

\mbox{ }&\mbox{ }&\\
\end{tabular}
\caption{The semantics  $\os F\text{.}A \cs$(e).}\label{mpdensem}
\end{figure}

\section{An Interleaving Approach for non-Time-elapsing Actions}\label{sec:interleaving}
In this section, we show a different version of the {\itshape tsccp} language:
while in  {\itshape tsccp} the parallel operator is modeled in terms of {\itshape maximal parallelism}, the same operator can be treated also in terms of interleaving. According to {\itshape maximal parallelism}, at each
moment every enabled agent of the system is activated, while in
the second paradigm an agent could not be assigned to a ``free''
processor. Clearly, since we have dynamic process creation, a maximal
parallelism approach has the disadvantage that, in general, it
implies the existence of an unbound number of processes. On the
other hand a naive interleaving semantic could be problematic from
the time viewpoint, as in principle the time does not pass for
enabled agent which are not scheduled. For the semantics in this section we follow a solution
analogous to that one adopted in~\cite{BoGaMe04}: we
assume that the parallel operator is interpreted in terms of
interleaving, as usual, however we must assume maximal parallelism
for actions depending on time. In other words, time passes for all
the parallel processes involved in a computation. To summarize, in this section we
adopt maximal parallelism for time elapsing (i.e. for
timeout constructs) and an interleaving model
for basic computation steps (i.e. (valued) \ask \  and (valued)
\tell \ actions).

To distinguish this new approach, we named the resulting language as
{\itshape tsccp-i}, i.e., {\itshape tsccp} with interleaving. Time-outs are modeled in {\itshape tsccp-i} by the construct $\hbox{\bf askp}_t(c)?_{\phi} A \mathit{:} B$ which replaces the $\noow _{\phi}\ c \ \thene \ A \ \elsee \ B$ construct of {\itshape tsccp} and directly has time $t$ as one of its parameters, differently from the $\noow_{\phi}$ agent. The $\hbox{\bf askp}_t$ agent can be interpreted as follows: one is allowed to
wait $t$ time-units for the entailment of the constraint $c$ by the
store and the subsequent evaluation of the process $A$; if
this time limit is exceeded, then the process $B$ is
evaluated.  Analogously for the construct $
 \hbox{\bf askp}_t(c)?^a A \mathit{:} B$.

\begin{definition}[{\itshape tsccp-i}]\label{def:tsccilanguage}
Given a soft constraint system $\langle S,D,V\rangle$, the
corresponding structure $\C$, any semiring value $a$,  {\em soft constraints} $\phi, c \in {\cal C}$ and any tuple of variables $x$,
the syntax of the {\itshape tsccp-i}
language is given by the following grammar:
\[
\begin{array}{ll}
P ::= & F \text{.} A\\
F ::= & p(x):: A \;|\; F.F\\
A ::= & {\bf success} \;|\; \hbox{\tell}(c)\rarrow_\phi A
\;|\;\hbox{\tell}(c)\rarrow ^a A \;|\; E \;|\; A\parallel A
\;|\;
 \exists x A \;|\; p(x) \;|\;  \\
 & \Sigma_{i=1}^{n} E_i\;|\; \hbox{\bf askp}_t(c)?_{\phi} A  \mathit{:} A \;|\;
 \hbox{\bf askp}_t(c)?^a A  \mathit{:} A \\
E ::= &\hbox{\bf ask}(c) \rarrow_\phi A  \;|\; \hbox{\bf ask}(c)
\rarrow^a A \\
\end{array}
\]
where, as in Definition \ref{def:tscclanguage}, $P$ is the class of processes, $F$ is the class
of sequences of procedure declarations (or clauses), $A$ is the
class of agents. As before, in a {\itshape tsccp-i} {\em process} $P=F\text{.}A$,
$A$ is the initial agent, to be executed in the context of the set of declarations $F$.
\end{definition}

Analogously to {\itshape tsccp}
processes, in order to simplify the notation, in the following
we will usually write a  {\itshape tsccp-i} {\em process} $P=F\text{.}A$
 simply as the corresponding agent $A$.

The operational model of {\itshape tsccp-i} processes  can be formally described
by a labeled transition system $T= ({\it Conf}, Label, \Rrarrow
)$, where we assume that each transition step exactly takes  one
time-unit.  Configurations (in) {\it Conf} are pairs consisting of a
process  and a constraint in ${\cal C}$
representing the common {\em store}.
$\mathcal{L} = \{\tau,\omega\}$ is the set of labels. We use
labels to distinguish ``real'' computational steps performed by
processes which have the control (label $\omega$) from the
transitions which model only the passing of time (label $\tau$).
So $\omega$-actions are those performed by processes that modify
the store ($\bf{tell}$),  perform a check on the store
(${\bf ask}$, ${\bf askp}_t$),  correspond to exceeding a time-out
(${\bf askp}_0$), or perform a choice ($\Sigma_{i=1}^{n} E_i$). On the
other hand, $\tau$-actions are those performed by time-out
processes (${\bf askp}_t$) in case they have not the control.
In Figure~\ref{t2} we show the semantics of all the {\itshape tsccp-i} actions,
but in the following we describe only the actions whose semantics is different
from that one presented in Figure~\ref{mpt1} (i.e., for {\itshape tsccp}), that is we describe in detail the parallelism and
the ${\bf askp}_t$ agent. The semantics of the other actions of {\itshape tsccp-i}
is the same as for {\itshape tsccp}, except for the fact that their transition is labeled with $\omega$.

\begin{figure}
    \begin{center}
    \scalebox{0.9}{
\begin{tabular}{llll}
&\mbox{   }&\mbox{   } &\mbox{   }
\\
\mbox{\bf Q1}& $\frac {\displaystyle (\sigma \otimes
c)\Downarrow_{\emptyset} \not< a}{\displaystyle
\begin{array}{l}
\la \hbox{\tell}(c) \rightarrow^{a} A, \sigma \ra \stackrel{\omega}{\Rrarrow} \la A,
\sigma \otimes c\ra
\end{array}}$\ \ \ & \bf{V-tell}&
\\
&\mbox{   }&\mbox{   } &\mbox{   }
\\
\mbox{\bf Q2}& $\frac {\displaystyle \sigma \otimes c
\not\sqsubset \phi}{\displaystyle
\begin{array}{l}
\la \hbox{\tell}(c)\rightarrow_{\phi} A, \sigma \ra \stackrel{\omega}{\Rrarrow} \la
A, \sigma \otimes c\ra
\end{array}}$ & \bf{Tell} &
\\
&\mbox{   }&\mbox{   } &\mbox{   }
\\
\mbox{\bf Q3}& $\frac {\displaystyle \sigma \ent c \ \ \ \ \
\sigma\Downarrow_{\emptyset} \not< a}{\displaystyle
\begin{array}{l}
\la \hbox{\ask}(c) \rightarrow^{a} A, \sigma \ra \stackrel{\omega}{\Rrarrow} \la A,
\sigma \ra
\end{array}}$\ \ \ & \bf{V-ask}&
\\
&\mbox{   }&\mbox{   } &\mbox{   }
\\
\mbox{\bf Q4}& $\frac {\displaystyle \sigma \ent c \ \ \ \  \sigma
\not\sqsubset \phi}{\displaystyle
\begin{array}{l}
\la \hbox{\ask}(c) \rightarrow_{\phi} A, \sigma  \ra \stackrel{\omega}{\Rrarrow} \la
A, \sigma \ra
\end{array}}$ & \bf{Ask}&
\\
&\mbox{   }&\mbox{   } &\mbox{   }
\\

\mbox{\bf Q5 }& $\frac {\displaystyle \la A,\sigma \ra \stackrel{\xi}{\Rrarrow}
\la A', \sigma' \ra\ \ \ \ \la B,\sigma\ra \stackrel{\tau}{\Rrarrow}
\la B', \sigma \ra \ \ \ \ \xi\in\{\tau,\omega\}}
{\displaystyle
\begin{array}{l}
\la A\parallel B,\sigma\ra \stackrel{\xi}{\Rrarrow}\la A'\parallel B',\sigma' \ra
\\
\la B\parallel A,\sigma\ra \stackrel{\xi}{\Rrarrow}\la B'\parallel A',\sigma' \ra
\end{array}}$ & \bf{Parall1}&
\\
&\mbox{   }&\mbox{   }&\mbox{   }
\\
\mbox{\bf Q6 }&
$\frac {\displaystyle \la A,\sigma \ra \stackrel{\xi}{\Rrarrow}
\la A', \sigma' \ra\ \ \ \ \la B,\sigma\ra \not\stackrel{\tau}{\Rrarrow}
 \ \ \ \ \xi\in\{\tau,\omega\}}
{\displaystyle
\begin{array}{l}
\la A\parallel B, \sigma \ra\stackrel{\xi}{\Rrarrow} \la A'\parallel B, \sigma'
\ra
\\
\la B\parallel A, \sigma  \ra\stackrel{\xi}{\Rrarrow} \la B\parallel A', \sigma'
\ra
\end{array}}$&
 \bf{Parall2}&
\\
&\mbox{   }&\mbox{   }&
\\

\mbox{\bf Q7}& $\frac {\displaystyle \la  E_j,\sigma\ra \stackrel{\omega}{\Rrarrow}
\la A_j,\sigma' \ra\ \ \ \ \ \ j\in [1,n]} {\displaystyle
\begin{array}{l}
\la \Sigma_{i=1}^{n}E_i , \sigma \ra \stackrel{\omega}{\Rrarrow}\la A_j,\sigma'\ra
\end{array}}$ & \bf{Nondet}&
\\
&\mbox{   }&\mbox{   }&
\\
\mbox{\bf Q8 }
& $\la p(x),\sigma\ra\stackrel{\omega}{\Rrarrow}\la A,
\sigma\ra \ \ \ \ {\it p(x) :: A \in F}$
&\bf{P-call}&
\\
&\mbox{   }&\mbox{   }&
\\
\mbox{\bf Q9 }& $\frac {\displaystyle \la A[x/y],  \sigma
\ra\stackrel{\xi}{\Rrarrow}\la B, \sigma' \ra \ \ \ \ \xi\in\{\tau,\omega\}} {\displaystyle\la \exists x A,
\sigma \ra\stackrel{\xi}{\Rrarrow}\la B, \sigma' \ra}$
&\bf{Hide} &

\\

&\mbox{   }&\mbox{   } &\mbox{   }
\\
\mbox{\bf Q10 } & $\frac {\displaystyle \sigma \ent c \ \ \ \
\sigma\Downarrow_{\emptyset} \not< a \ \ \ \  t>0} {\displaystyle
\begin{array}{l}
\la \hbox{\bf askp}_t(c)?^{a} A  \mathit{:} B,\sigma\ra\stackrel{\omega}{\Rrarrow}\la A,  \sigma \ra
\end{array}}$  & \bf{V-askp1}&
\\
&\mbox{   }&\mbox{   }&
\\
\mbox{\bf Q11 }
&$\frac {\displaystyle
\sigma\Downarrow_{\emptyset} < a\ \ \ \  t>0} {\displaystyle
\begin{array}{l}
\la \hbox{\bf askp}_t(c)?^a A  \mathit{:} B,\sigma\ra\stackrel{\omega}{\Rrarrow}\la B, \sigma\ra
\end{array}}$ & \bf{V-askp2}&
\\
&\mbox{   }&\mbox{   }&
\\
\mbox{\bf Q12 }
&$\frac {\displaystyle  \sigma \not \ent c
\ \ \ \
\sigma\Downarrow_{\emptyset} \not< a \ \ \ \  t>0} {\displaystyle
\begin{array}{l}
\la \hbox{\bf askp}_t(c)?^a A  \mathit{:} B,\sigma\ra\stackrel{\omega}{\Rrarrow}\la\hbox{\bf askp}_{t-1}(c)?^a A  \mathit{:} B,\sigma\ra
\end{array}}$ & \bf{V-askp3}&
\\
&\mbox{   }&\mbox{   }&
\\
\mbox{\bf Q13 }&    $
{\displaystyle
\begin{array}{l}
\la \hbox{\bf askp}_t(c)?^a A  \mathit{:} B,\sigma\ra\stackrel{\tau}{\Rrarrow}
\la \hbox{\bf askp}_{t-1}(c)?^a A  \mathit{:} B,\sigma\ra \ \ \ \  t>0
\end{array}}$& \bf{V-askp4}&
\\
&\mbox{   }&\mbox{   }&
\\
\mbox{\bf Q14}&$ {\displaystyle
\begin{array}{l}
\la \hbox{\bf askp}_0(c)?^a A  \mathit{:} B,\sigma\ra\stackrel{\omega}{\Rrarrow} \la B,  \sigma\ra
\end{array}}$& \bf{V-askp5}&
\\
&\mbox{   }&\mbox{   }&
\\
\mbox{\bf Q15}
& $\frac {\displaystyle \sigma \ent c \ \ \ \ \sigma
\not\sqsubset \phi \ \ \ \  t>0} {\displaystyle
\begin{array}{l}
\la \hbox{\bf askp}_t(c)?_{\phi} A  \mathit{:} B,\sigma\ra\stackrel{\omega}{\Rrarrow}\la A,  \sigma \ra
\end{array}}$  & \bf{Askp1}&
\\
&\mbox{   }&\mbox{   }&
\\
\mbox{\bf Q16 }
& $\frac {\displaystyle  \sigma
\sqsubset \phi\ \ \ \  t>0} {\displaystyle
\begin{array}{l}
\la \hbox{\bf askp}_t(c)?_{\phi} A  \mathit{:} B,\sigma\ra\stackrel{\omega}{\Rrarrow}\la B, \sigma\ra
\end{array}}$  & \bf{Askp2}&
\\
&\mbox{   }&\mbox{   }&
\\
\mbox{\bf Q17}
& $\frac {\displaystyle  \sigma \not \ent c \ \ \ \ \sigma
\not \sqsubset \phi\ \ \ \  t>0} {\displaystyle
\begin{array}{l}
\la \hbox{\bf askp}_t(c)?_{\phi} A  \mathit{:} B,\sigma\ra\stackrel{\omega}{\Rrarrow}\la \hbox{\bf askp}_{t-1}(c)?_{\phi} A  \mathit{:} B,\sigma\ra
\end{array}}$  & \bf{Askp3}&
\\
&\mbox{   }&\mbox{   }&
\\
\mbox{\bf Q18}
&    $
{\displaystyle
\begin{array}{l}
\la \hbox{\bf askp}_t(c)?_{\phi} A  \mathit{:} B,\sigma\ra\stackrel{\tau}{\Rrarrow}
\la \hbox{\bf askp}_{t-1}(c)?_{\phi} A  \mathit{:} B,\sigma\ra \ \ \ \  t>0
\end{array}}$         & \bf{Askp4}&
\\
&\mbox{   }&\mbox{   }&
\\
\mbox{\bf Q19}
& $ {\displaystyle
\begin{array}{l}
\la \hbox{\bf askp}_0(c)?_{\phi} A  \mathit{:} B,\sigma\ra\stackrel{\omega}{\Rrarrow} \la B,  \sigma\ra
\end{array}}$& \bf{Askp5}&\\
&\mbox{   }&\mbox{   }&
\\
\end{tabular} }
    \end{center}

  \caption{The transition system for {\itshape tsccp-i}.}\label{t2}
\end{figure}

\begin{description}
\item[Parallelism] Rules {\bf Q5} and {\bf Q6} in Figure~\ref{t2} model the parallel
composition operator  in
terms of {\em interleaving}, since only one basic $\omega$-action
is allowed for each transition (i.e. for each unit of time). This
means that the access to the shared store is granted to one process a time. However, time
passes for all the processes appearing in the $\parallel$ context
at the external level, as shown by rule {\bf Q5}, since
$\tau$-actions are allowed together with a $\omega$-action. On the
other hand, a parallel component is allowed to proceed in
isolation if (and only if) the other parallel component cannot
perform a $\tau$-action (rule {\bf Q6}). To summarize, we adopt
maximal parallelism for time elapsing (i.e. $\tau$-actions) and an
interleaving model for basic computation steps (i.e.
$\omega$-actions).

We have adopted this approach because it seems more adequate to the nature of
time-out operators not to interrupt the elapsing of time, once the
evaluation of a time-out has started. Clearly one could start the
elapsing of time when the time out process is scheduled, rather
than when it appears in the top-level current parallel context.
This modification could easily be obtained by adding a syntactic
construct to differentiate active timeouts from inactive ones,
and by accordingly changing  the transition system. One could also
easily modify the semantics (both operational and denotational) to
consider a more liberal assumption which allows multiple ask
actions in parallel.

\item[Valued-Askp$_t$] The rules ${\bf Q10}$-${\bf Q14}$ in Figure~\ref{t2}
show that the time-out process $\hbox{\bf askp}_t(c)?^a A  \mathit{:} B$
behaves as $A$ if $c$  is entailed by the
store and the store is ``consistent enough'' with respect to the threshold
$a$ in the next
$t$ time-units: if $t>0$ and the condition on the store and
the cut level are satisfied, then the agent $A$ is evaluated (rule ${\bf Q10}$). If $t>0$
 and the condition on the cut level is not satisfied, then the agent $B$ is evaluated (rule ${\bf Q11}$). Finally if $t>0$, the condition on the cut level is satisfied, but the condition on the store is not satisfied, then the control is repeated at the
next time instant and the value of the counter $t$ is decreased
(axiom ${\bf Q12}$); note that in this case we use the label
$\omega$, since a check on the store has been performed. As shown
by axiom ${\bf Q13}$, the counter can be  decreased also by
performing a $\tau$-action: intuitively, this rule is used to model
the situation in which, even though the evaluation of the time-out
started already, another (parallel) process has the control. In
this case, analogously to the approach in \cite{BoGaMe04} and differently from the approach in \cite{BGZ00}, time
continues to elapse (via $\tau$-actions) also for the time-out
process (see also the rules {\bf Q5} and {\bf Q6} of the parallel operator). Axiom
${\bf Q14}$ shows that, if the time-out is exceeded, i.e., the
counter $t$ has reached the value of $0$, then the process $\hbox{\bf askp}_t(c)?^a A  \mathit{:} B$ behaves as $B$.
\item[Askp$_t$] The rules ${\bf Q15}$-${\bf Q19}$ in Figure~\ref{t2} are similar to rules ${\bf Q10}$-${\bf Q14}$ described before, with the exception that here a finer (pointwise)
  threshold $\phi$ is compared to the store $\sigma$, analogously to what happens with the $\tell$ and $\ask$ agents.
\end{description}

In the following we provide the definition for the observables of the language, which are clearly based only on $\omega$-actions.

\begin{definition}[Observables for {\itshape tsccp-i}]
Let $P= F\text{.}A$ be a {\itshape tsccp-i} process. We define
\[{\cal O}_{io}^{i}(P) = \{
\gamma \Downarrow_{Fv(A)} \mid \la A, \bar{\1} \ra\stackrel{\omega}{\Rrarrow} ^*\la
{\bf Success}, \gamma\ra\},
\]
\end{definition}
where ${\bf Success}$ is any agent that contains only occurrences
of the agent ${\bf success}$ and of the operator $\parallel$.

\section{An Execution Timeline for a {\itshape tsccp-i} Parallel Agent}\label{sec:timeline}
In this section we show a timeline for the execution of three {\itshape tsccp-i} agents in parallel. We consider the three soft constraints shown in Figure~\ref{fig:mpconstexample} and the \emph{Weighted} semiring $\langle
\mathbbm{R}^{+} \cup \{+\infty\},min,+,+\infty,0 \rangle$~\cite{bistarellibook,jacm}. Our parallel agent is defined by:
\[\begin{array}{ll}
  A_1 :: & {\bf askp}_5(c_3)?^{+\infty} (\tell(c_1) \rightarrow^{+\infty} \success)  \mathit{:}  (\success) \\
  A_2 :: & \tell(c_1)\rightarrow^{+\infty} \success \\
  A_3 ::& \tell(c_2)\rightarrow^{+\infty} \success\text{.}
\end{array}
\]

Their concurrent evaluation in the $\bar{0}$ empty store is shown in Figure~\ref{figure:timeline4}. \comment{At $t=0$ the store does not entail $c_3$, thus the ${\bf askp}_5$ of agent $A_1$ can only make  $\tau$-transitions (rule {\bf Q12} in Figure~\ref{t2}), waiting for the satisfaction of its guard (i.e. $c_3$) or for the elapsing of $5$ time-units.} At $t=0$ and $t=1$ the agent $A_1$ can  make a $\tau$-transition (rule {\bf Q13} in Figure~\ref{t2}), waiting for  the elapsing of $1$ time-unit. This can be done in parallel with a single other $\omega$-action: therefore, the $\tell(c_1)$ of agent $A_2$, and the  $\tell(c_2)$ of agent $A_3$ cannot run in parallel at the same time, since they are both $\omega$-actions. In the execution shown in Figure~\ref{figure:timeline4}, $A_2$ is executed before $A_3$ (also the opposite is possible, depending on the scheduling), leading to the store $\sigma= c_1 \otimes c_2 = c_3$. At $t=2$, the guard of ${\bf askp}_5$ in agent $A_1$ is enabled since $\sigma \ent c_3$ and, therefore, rule {\bf Q10} in Figure~\ref{t2} is executed. Finally, at $t=3$ the $\tell(c_1)$ action of agent $A_1$ is executed as the last action, and at $t=4$ we have $\langle  \success \parallel  \success \parallel \success, c_1\otimes c_2 \otimes c_1 \rangle$.

\begin{figure}[h]
\centering
\includegraphics[scale=1.1]{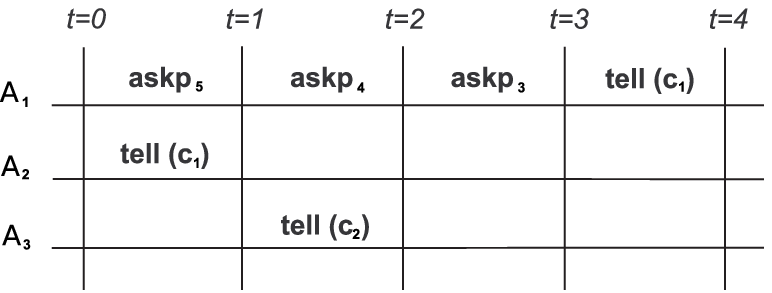} 
\caption{A timeline for the execution of $A_1 \parallel A_2 \parallel A_3$.}
\label{figure:timeline4}
\end{figure}

\section{Denotational Semantics for {\itshape tsccp-i}}\label{sec:compdensemtsccpi}

In this section we define a denotational characterization of the
operational semantics for {\itshape tsccp-i}. Differently from the denotational semantics for the maximal parallelism version presented in Section.~\ref{sec:mpcompdensem}, here for computational states we consider triples
rather than pairs, as $\omega$-actions have to be distinguished
from $\tau$-actions. This difference leads to a different technical
development.

Our denotational model for {\itshape tsccp-i} associates with a process a set of timed
reactive sequences of the form $\langle \sigma_1,\gamma_1, \xi_1\rangle
\cdots \langle \sigma_n,\gamma_n, \xi_n\rangle \langle \sigma
,\sigma,\omega\rangle$. Any triple $\langle
\sigma_i,\gamma_i, \xi_i\rangle$ represents a reaction (a computation step) of the given
process at time $i$: intuitively, the process transforms the global
store from $\sigma_i$ to $\gamma_i$ by performing a
transition step labeled by $\xi_{i}$ or, in other words, $\sigma_i$
is the assumption on the external environment, $\xi_{i}$ is the label of the
performed step while $\gamma_i$ is
the contribution of the process itself (which entails always the
assumption). The last pair denotes a ``stuttering step'', in which
the agent ${\bf Success}$ has been reached. In the following we will assume that each timed
reactive sequence $\langle \sigma_1,\gamma_1, \xi_1\rangle \cdots \langle
\sigma_{n-1},\gamma_{n-1},\xi_{n-1}\rangle\langle \sigma_n,\sigma_n, \omega\rangle$
satisfies the following condition: $\gamma_i \vdash \sigma_i
\hbox{ and } \ \sigma_j \vdash \gamma_{j-1},  $ for any $i\in
[1,n-1]$ and $j\in [2,n]$.

The basic idea underlying the
denotational model then is that, differently from the
operational semantics, inactive processes can always make a
$\tau$-step, where an inactive process is either suspended
(due to the absence of the required constraint in the store) or it is a non-scheduled component of a parallel construct. These additional
$\tau$-steps, which represent time-elapsing and are needed  to
obtain a compositional model in a simple way, are then added to
denotations as triples of the form $\la \sigma,\sigma,\tau \ra$.
For example, the denotation of the process
$\hbox{\tell}(c) \rightarrow^a {\bf success}$ contains all the reactive
sequences that have, as first element, a triple  $\la \sigma ,\sigma \otimes c,\omega\ra$ for any possible initial
store $\sigma$ with $(\sigma \otimes
c)\Downarrow_{\emptyset} \not< a$, as these represent the action of adding the constraint  $c$
to the current store. However, such a denotation contains also
sequences where the triple $\la \sigma ,\sigma \otimes c,\omega\ra$ (still with $(\sigma \otimes
c)\Downarrow_{\emptyset} \not< a$) is preceded by
a finite sequence of triples of the form
$\la \sigma_1 ,\sigma_1 ,\tau \ra \la \sigma_2 ,\sigma_2 ,\tau \ra \ldots \la \sigma_n ,\sigma_n ,\tau \ra$.
Such a sequence represents time-elapsing while the
process is inactive because some other parallel process is
scheduled.

The set of all reactive sequences for
{\itshape tsccp-i} process
is denoted by ${\cal S}_i$, its
typical elements by $s,s_1\ldots$, while sets of reactive
sequences are denoted by $S,S_1\ldots$ and $\varepsilon$ indicates
the empty reactive sequence. The  operator $\cdot$ denotes the
operator that concatenates these sequences.

\subsection{Compositionality of the Denotational Semantics  for {\itshape tsccp-i} Processes}

As in Section \ref{sec:mpcompdensem} for  the {\itshape tsccp} version, we now introduce a denotational
semantics ${\cal D} (F\text{.}A)(e)$ which is
compositional by definition and where, for technical reasons, we
represent explicitly the environment $e$ which associates a
denotation to each procedure identifier. More precisely, assuming
that {\itshape Pvar} denotes the set of procedure identifier,
${\it Env}_i={\it Pvar}\rightarrow {\cal P}({\cal S}_i)$, with typical
element $e$, is the set of {\itshape environments}. Analogously to Section \ref{sec:mpcompdensem}, given $e \in
{\it Env}_i$, $p \in {\it Pvar}$  and $f \in {\cal P}({\cal S}_i)$, we
denote by $e'=e\{f/p\}$ the new environment such that $e'(p)=f$
and $e'(p')= e(p')$ for each procedure identifier $p'\neq p$.

Before defining formally the denotational semantics, we need to define the operators $\bar{tell}$, $\bar{\sum}$,
$\bar{\parallel}$, $\bar{askp}$ and $\bar{\exists} x$, analogous to those given
in Section~\ref{sec:mpcompdensem} for the maximal parallelism language.

\begin{definition}[Semantic operators for {\itshape tsccp-i}]\label{defsemoperators}
Let $S,S_i$ be sets of reactive sequences,
 $c,c_i$ be constraints. Moreover let $\succ_i$ be either of
 the form $\succ^{a_i}$ or $\succ_{\phi_i}$, defined as in Definition \ref{def:servizio}.
Then we define the operators $\bar{tell}$, $\bar{\sum}$,
$\bar{\parallel}$, $\bar{askp}$ and $\bar{\exists} x$ as
follows:

\noindent {\bf The (valued) tell operator} $\bar{tell}^a: {\cal C} \times \wp({\cal S}_i) \rightarrow
\wp({\cal S}_i)$ ($\bar{tell}_\phi: {\cal C} \times \wp({\cal S}_i) \rightarrow
\wp({\cal S}_i)$) is the least function
(w.r.t. the ordering induced by $\subseteq$) which satisfies the
following equation
\[
\begin{array}{ll}
\bar{tell}^a(c,  S) = &
        \{ s  \in {\cal S}_i \mid
        s = \la \sigma ,\sigma\otimes c,\omega\ra \cdot s',
        \ \sigma\otimes c\Downarrow_{\emptyset} \not<a  \hbox{ and }  s' \in S\
        \}\ \  \cup \\
        &  \{ s  \in {\cal S}_i \mid
        s = \la \sigma ,\sigma,\tau \ra \cdot s' \hbox{ and } s' \in
        \bar{tell}^a(c,  S) \ \}\textit{.}
\end{array}
\]

\[
\begin{array}{ll}
\bar{tell}_{\phi}(c,  S) = &
        \{ s  \in {\cal S}_i \mid
        s = \la \sigma ,\sigma\otimes c,\omega\ra \cdot s',
        \ \sigma\otimes c \not\sqsubset
    \phi  \hbox{ and }  s' \in S\
        \}\ \  \cup \\
        &  \{ s  \in {\cal S}_i \mid
        s = \la \sigma ,\sigma,\tau \ra \cdot s' \hbox{ and } s' \in
        \bar{tell}_{\phi}(c,  S) \ \}\textit{.}
\end{array}
\]

\noindent
{\bf The guarded choice} The semantic choice operator \\
$
{\bar{\sum}} _{i=1}^n : ({\cal C} \times \wp({\cal S}_i)) \times \cdots \times ({\cal C} \times \wp({\cal S}_i)) \rightarrow
\wp({\cal S}_i)$ is the least function which satisfies the
following equation:
\[
{\bar{\sum}} _{i=1}^n c_i \succ_i \, S_i =
\begin{array}[t]{l}
\{ s\in {\cal S}_i \mid
\begin{array}[t]{l}
s = \la \sigma,\sigma, \omega \ra \cdot s', \\
\sigma\succ_h\,  c_h \hbox { and } s'\in S_h \hbox{
for an } h\in[1,n] \ \}
\end{array}
\\
\cup
\\
\{ s \in {\cal S}_i \mid
\begin{array}[t]{l}
s =\langle \sigma,\sigma ,\tau\rangle \cdot s'  \hbox{ and } \ s' \in {\bar{\sum}} _{i=1}^n c_i \succ_i \, S_i \}\textit{.}
\end{array}
\end{array}\]

\noindent {\bf Parallel Composition}. Let $\bar{\parallel}
\in{\cal S}_i\times {\cal S}_i\rightarrow {\cal S}_i$ be the
(commutative and associative) partial operator defined by
induction on the length of the sequences as follows:
\[
\begin{array}{l}
\la \sigma,\sigma,\omega\ra     \bar{\parallel} \la \sigma,\sigma,\omega \ra = \la \sigma,\sigma,\omega  \ra \\
\la \sigma,\sigma',x \ra \cdot s \bar{\parallel} \la \sigma,\sigma,\tau \ra \cdot
s' = \la \sigma,\sigma,\tau \ra \cdot s' \bar{\parallel} \la \sigma,\sigma',x \ra
\cdot s  = \la \sigma,\sigma',x \ra \cdot (s\bar{\parallel}  s'),
\end{array}
\]
where $x \in \{\omega,\tau\}$.\\
We define the operator $S_1\bar{\parallel} S_2$ on sets as the
image of ${\cal S}_i\times {\cal S}_i$ under the above operator.

\noindent
{\bf The (valued) askp operator} $\bar{askp}(t)^a:
{\cal C} \times \wp({\cal S}_i)\times \wp({\cal S}_i) \rightarrow
\wp({\cal S}_i)$ ($\bar{askp(t)}_\phi:
{\cal C} \times \wp({\cal S}_i)\times \wp({\cal S}_i) \rightarrow \wp({\cal S}_i)$),
with $t>0$, is defined as:
\[\begin{array}{lll}
  \bar{askp}(t)^a(c,  S_{1} , S_{2})=
 & \{ s  \in {\cal S}_i \mid & s = \la \sigma,\sigma, \omega \ra \cdot s' \hbox{ and}\\
&& \hbox{either }
 \sigma \succ^{a} c  \mbox{ and } s \in S_{1}\\
&& \hbox{or } \sigma\Downarrow_{\emptyset} < a   \mbox{ and } s \in S_{2} \}\ \ \cup \\
&\{ s  \in {\cal S}_i \mid
        & s = \la \sigma ,\sigma, x \ra \cdot s' , \,
        s' \in \bar{askp}(t-1)^a(c,  S_{1} , S_{2})  \\
        &&  \hbox{and either } x=\tau\\
       &&  \hbox{or } x=\omega, \,  \sigma \not \ent c  \mbox{ and }  \sigma\Downarrow_{\emptyset} \not< a \ \}\textit{.}
\end{array}
\]

\[
\begin{array}{lll}
\bar{askp}(t)_\phi(c,  S_{1} , S_{2})=&
\{ s  \in {\cal S}_i \mid & s = \la \sigma,\sigma', \omega \ra \cdot s' \hbox{ and}\\
&& \hbox{either }
 \sigma \succ_{\phi} \,c    \mbox{ and } s \in S_{1}\\
 && \hbox{or } \sigma \sqsubset \phi  \mbox{ and } s \in S_2 \}\ \ \cup \\
&\{ s  \in {\cal S}_i \mid
        & s = \la \sigma ,\sigma, x \ra \cdot s' , \,
        s' \in \bar{askp}(t-1)_\phi(c,  S_{1} , S_{2}) \\
        &&  \hbox{and either } x=\tau\\
       &&  \hbox{or } x=\omega, \,  \sigma \not \ent c  \mbox{ and }  \sigma
\not \sqsubset \phi  \ \}\textit{.}
\end{array}
\]

\noindent {\bf The (valued) askp operator} $\bar{askp}(0)^a:
{\cal C} \times \wp({\cal S}_i)\times \wp({\cal S}_i) \rightarrow
\wp({\cal S}_i)$ ($\bar{askp}(0)_\phi:
{\cal C} \times \wp({\cal S}_i)\times \wp({\cal S}_i) \rightarrow \wp({\cal S}_i)$) is the least function which satisfies the
following equation
\[
\begin{array}{lll}
 \bar{askp}(0)^a(c,  S_{1} , S_{2})=
 &\{ s  \in {\cal S}_i \mid
        & \hbox{either }
        s = \la \sigma, \sigma, \omega \ra \cdot s'
        \hbox{ and } s' \in S_{2} \\
        && \hbox{or } s = \la \sigma ,\sigma,\tau \ra \cdot s'\\
       &&\mbox{and } s' \in \bar{askp}(0)^a(c,  S_{1} , S_{2}) \ \}\textit{.}
\end{array}
\]

\[
\begin{array}{lll}
 \bar{askp}(0)_\phi(c,  S_{1} , S_{2})=
 &\{ s  \in {\cal S}_i \mid
        & \hbox{either }
        s = \la \sigma, \sigma, \omega \ra \cdot s'
        \hbox{ and } s' \in S_{2} \\
        && \hbox{or } s = \la \sigma ,\sigma,\tau \ra \cdot s' \\
       && \mbox{and } s' \in \bar{askp}(0)_\phi(c,  S_{1} , S_{2} ) \ \}\textit{.}
\end{array}
\]

\noindent
{\bf The hiding operator}
The semantic hiding operator can be defined as
follows:
\[
\begin{array}{ll}
  {\bf \bar{ \exists} } x S =  \{ s \in {\cal S}_i \mid & \mbox{\rm
  there exists $ s' \in S$ such that $s
= s'[x/y]$ with $y$ new } \}
\end{array}
\]
where $s'[x/y]$ denotes the sequence obtained from $s'$ by
replacing the variable $x$ for the variable $y$ that we assume to
be new.\footnote{As before, we assume that each time that we
consider a new applications of the operator  ${\bf \bar{
\exists} }$ we use a new, different $y$.}

\end{definition}

It is immediate  to see that the previous semantic operators are
well defined, that is, the least function which satisfies the
equations actually exists and can be obtained by a standard
fix-point construction.
The $\bar{tell}$, $\bar{\sum}$,
$\bar{\parallel}$, $\bar{askp}$ and $\bar{\exists} x$
operators have the expected definition, including the mentioned
addition of $\tau$-steps.

In the semantic parallel operator (acting on sequences) we
require that at each point of time at most one $\omega$-action is
present and the two arguments of the operator agree with respect
to the contribution of the environment (the first component of
the triple). We also require that the two arguments have the
same length (in all other cases the parallel composition is
assumed being undefined): this is necessary to reflect the
passage of time since the $i-th$ element of any sequence
corresponds to the given processes action on the $i-th$ time
step. Even though we merge point-wise  sequences of the same
length, this models an interleaving approach for
$\omega$-actions, because of the previously mentioned addition
of $\tau$-steps to denotations.
Concerning the semantic
choice operator, a sequence in $\bar{\sum} _{i=1}^n c_i
\succ_i \,S_i$ consists of an initial period of waiting
for a store which satisfies one of the guards. During this waiting period, only the environment is active by
producing the constraint $\sigma$, while the process itself
generates the stuttering steps $\langle \sigma,\sigma, \tau\rangle$.
When the store is strong enough to satisfy a guard, that is to entail a
$c_h$ and to satisfy the condition on the cut level, then
the resulting sequence is  obtained by adding $s'\in S_h$ to
the initial waiting period.
\comment{In the semantic parallel operator defined on sequences we require
that the two arguments of the operator agree at each point of time
with respect to the contribution of the environment (the
$\sigma_i$'s) and that they have the same length (in all other
cases the parallel composition is assumed being undefined).}

We can define the denotational semantics ${\cal D}$ as follows.
Here,
$Process_i$ denotes the set of {\itshape tsccp-i} processes.

\begin{definition}[Processes Semantics]\label{def:ira}
We define the semantics ${\cal D}\in {\it Process_i}\rightarrow{\cal P}({\cal S}_i)$
is the least function with respect to the ordering induced by the set-inclusion, which satisfies the equations in Figure~\ref{densem2}
\end{definition}

\begin{figure}
\centering
\scalebox{0.9}{
\begin{tabular}{lll}

\mbox{ }&\mbox{ }&\\

\mbox{{\bf F1}} & ${\cal D}(F\text{.}{\bf success})(e) = \{ \la
\sigma_1, \sigma_1, \tau\ra \la \sigma_2, \sigma_2,\tau\ra  \cdots \la
\sigma_n, \sigma_n,\omega\ra \in {\cal S}_i
\mid n\geq 1\}$&\\

\mbox{ }&\mbox{ }&\\

\mbox{{\bf F2}} & ${\cal D}(F\text{.}\hbox{\tell}(c) \rightarrow^{a} A)(e) =  \bar{tell}^a (c,  {\cal D}(F\text{.}A)(e))$&\\

\mbox{ }&\mbox{ }&\\

\mbox{{\bf F3}} & ${\cal D}(F\text{.}\hbox{\tell}(c) \rightarrow_{\phi} A)(e) =  \bar{tell}
 _{\phi}(c,  {\cal D}(\it F\text{.}A) (e))$&\\

\mbox{ }&\mbox{ }&\\

\mbox{{\bf F4}} & ${\cal D}({\it F\text{.}\sum_{i=1}^n }{\it \ask(c_i)
\rightarrow_i  A_i})(e) =
{\tilde{\sum} } _{i=1}^n c_i \succ_i \,{\cal D}( {\it F\text{.}A_i} )(e)$&\\

\mbox{ }&\mbox{ }&\\

\mbox{{\bf F5}} & ${\cal D}( {\it F\text{.}\hbox{\bf askp}_t(c)?^a  A  \mathit{:} B})(e)=  \bar{askp}(t)^a(c,{\cal D}({\it F\text{.}A})(e),{\cal D}( {\it F\text{.}B})(e))$&\\

\mbox{ }&\mbox{ }&\\

\mbox{{\bf F6}} & ${\cal D}( {\it F\text{.}\hbox{\bf askp}_t(c)?_{\phi} A  \mathit{:} B})(e)= \bar{askp}(t)_\phi(c,{\cal D}({\it F\text{.}A})(e),{\cal D}( {\it F\text{.}B})(e))
$&\\

\mbox{ }&\mbox{ }&\\

\mbox{{\bf F7}} & ${\cal D}( {\it F\text{.}A\parallel B})(e) = {\cal D}( {\it
F\text{.}A})(e) \ \bar{{\it  \parallel } }
\ {\cal D}( {\it F\text{.}B})(e)$&\\

\mbox{ }&\mbox{ }&\\

\mbox{{\bf F8}} &
${\cal D}( {\it F\text{.}\exists x A})(e) = \bar{\exists}x {\cal D}( {\it F\text{.}A} )(e)$&\\

\mbox{ }&\mbox{ }&\\

\mbox{{\bf F9}} &
${\cal D}( {\it F\text{.}p(x)}) (e) = \mu \Psi^i \; \; \text{where}  \; \;
\Psi^i(f)={\cal D}( F\setminus\{p\}\text{.} ask(\bar{\1}) \rightarrow
A) (e\{f/p\})$, &\\ & $p(x)::A \in F$&\\

\mbox{ }&\mbox{ }&\\
\end{tabular}}
\caption{The semantics  $D( F\text{.}A)$(e) for {\itshape tsccp-i}.}\label{densem2}
\end{figure}

Also ${\cal D}$ is well defined and can be obtained by a fix-point
construction. To see this, let us define an interpretation as a
mapping $I:Process_i\rightarrow \wp({\cal S}_i)$. Then let us denote
by ${\cal I}$ the cpo of all the interpretations (with the
ordering induced by $\subseteq$).  To the
equations in Figure~\ref{densem2}, we can then associate a monotonic (and continuous)
mapping ${\cal F}: {\cal I} \rightarrow {\cal I}$ defined by the
equations of Figure~\ref{densem2}, provided that we replace the
symbol ${\cal D}$ for ${\cal F}(I)$, we delete the environment $e$ and that we replace equation
${\bf F9}$ for the following one: ${\cal F}(I)( {\it F\text{.}p(x)})=
I( F\text{.} ask(\bar{\1}) \rightarrow
A)\textit{.}$

Then, one can easily prove that a function satisfies the equations
in Figure~\ref{densem2} iff it is a fix-point of the function ${\cal
F}$. Because this function is continuous (on a cpo), well known
results ensure us that its least fix-point exists and it equals
${\cal F}^\omega$, where the powers are defined as follows: ${\cal
F}^0 = I_0$ (this is the least interpretation which maps any
process  to the empty set); ${\cal F}^ n  = {\cal F} ({\cal
F}^{ n-1})$ and ${\cal F}^ \omega = lub \{ {\cal F}^ n
| n\geq 0\}$ (where lub is the least upper bound on the cpo ${\cal
I}$).

\subsection{Correctness of the Denotational Semantics  for {\itshape tsccp-i} Processes}\label{sec:cor}

As for the correctness of the denotational semantics presented in Section~\ref{sec:mpcorrectness}, at each step, the assumption on the current store must be
equal to the store produced by the previous step. In other words,
for any two consecutive steps $\la \sigma_i,\sigma'_i,x_{i}\ra \la
\sigma_{i+1},\sigma'_{i+1}, x_{i+1}\ra$ we must have $\sigma'_i =
\sigma_{i+1}$.
Furthermore, triples containing $\tau$-actions do not correspond
to observable computational steps, as these involve
$\omega$-actions only.

\begin{definition}[Connected  Sequences in {\itshape tsccp-i}]
Let $s=\la \sigma_1,\sigma'_1, x_{1}\ra \la \sigma_2,\sigma'_2, x_{2}\ra \cdots
\la \sigma_{n},$ $\sigma_{n},\omega\ra$ be a reactive sequence. We say that
$s$ is connected if $\sigma_1= \bar{\1} $, $\sigma_i = \sigma'_{i-1}$ and $x_j=\omega$
for each $i, j$, $2\leq i\leq n$ and $1\leq j\leq n-1$.
\end{definition}

According to the previous definition, a sequence is connected if
all the information assumed on the tuple space is produced by
the process itself and only
$\omega$-actions are involved. To be defined as
connected, a sequence must also have $\bar{\1}$ as the initial
constraint.  A connected sequence represents a
{\itshape tsccp-i} computation, as it will be proved in the
remaining of this section.

In order to prove the correctness of the denotational semantics,
we use a modified transition system $T'$, where inactive (either
suspended or not scheduled) processes can perform $\tau$-actions.
When considering our notions of observables, we can prove that
such a modified transition system is equivalent to the previous
one and agrees with the denotational model.

The new transition system $T'$ is obtained from the one in
Figure~\ref{t2} by deleting rule {\bf Q6} and by adding the rules
{\bf Q0'}, {\bf Q1'}, {\bf Q2'}, {\bf Q3'},  {\bf Q4'}, {\bf Q7'}, {\bf Q8'}, {\bf Q14'} and {\bf
Q19'}, contained in Figure~\ref{rulestau}. We denote by
$\Rightarrow$ the relation defined by $T'$.

   \begin{figure}
    \begin{center}
    \scalebox{0.9}{
\begin{tabular}{llll}
&\mbox{   }&\mbox{   } &\mbox{   }
\\
\mbox{\bf Q0'}& $
\begin{array}{l}
\la \hbox{success}, \sigma \ra \stackrel{\tau}{\Rightarrow} \la \hbox{success},
\sigma \ra
\end{array}$\ \ \ & \bf{success}&
\\
&\mbox{   }&\mbox{   } &\mbox{   }
\\
\mbox{\bf Q1'}& $
\begin{array}{l}
\la \hbox{\tell}(c) \rightarrow^{a} A, \sigma \ra \stackrel{\tau}{\Rightarrow}
\la \hbox{\tell}(c) \rightarrow^{a} A, \sigma \ra
\end{array}$\ \ \ & \bf{V-Telll}&
\\
&\mbox{   }&\mbox{   } &\mbox{   }
\\
\mbox{\bf Q2'}& $
\begin{array}{l}
\la \hbox{\tell}(c)\rightarrow_{\phi} A, \sigma \ra \stackrel{\tau}{\Rightarrow} \la
\hbox{\tell}(c)\rightarrow_{\phi} A, \sigma \ra
\end{array}$ & \bf{Tell} &
\\
&\mbox{   }&\mbox{   } &\mbox{   }
\\
\mbox{\bf Q3'}& $
\begin{array}{l}
\la \hbox{\ask}(c) \rightarrow^{a} A, \sigma \ra \stackrel{\tau}{\Rightarrow}  \la\hbox{\ask}(c) \rightarrow^{a} A, \sigma \ra
\end{array}$\ \ \ & \bf{V-ask}&
\\
&\mbox{   }&\mbox{   } &\mbox{   }
\\
\mbox{\bf Q4'}& $
\begin{array}{l}
\la \hbox{\ask}(c) \rightarrow_{\phi} A, \sigma  \ra \stackrel{\tau}{\Rightarrow}
\la\hbox{\ask}(c) \rightarrow_{\phi} A, \sigma \ra
\end{array}$ & \bf{Ask}&
\\
&\mbox{   }&\mbox{   } &\mbox{   }
\\
\mbox{\bf Q7'}& $
\begin{array}{l}
\la \Sigma_{i=1}^{n}E_i , \sigma \ra \stackrel{\tau}{\Rightarrow}\la \Sigma_{i=1}^{n}E_i , \sigma \ra
\end{array}$ & \bf{Nondet}&
\\
&\mbox{   }&\mbox{   }&
\\
\mbox{\bf Q8' }
& $\la p(x),\sigma\ra\stackrel{\tau}{\Rightarrow}\la A,
\sigma\ra \ \ \ \ {\it p(x) :: A \in F}$
&\bf{P-call}&
\\
&\mbox{   }&\mbox{   } &\mbox{   }
\\
\mbox{\bf Q14'}&$
\la \hbox{\bf askp}_0(c)?^a A  \mathit{:} B,\sigma\ra\stackrel{\tau}{\Rightarrow} \la \hbox{\bf askp}_0(c)?^a A  \mathit{:} B,\sigma\ra
$& \bf{V-askp5}&
\\
&\mbox{   }&\mbox{   }&
\\
\mbox{\bf Q19'}
& $
\la \hbox{\bf askp}_0(c)?_{\phi} A  \mathit{:} B,\sigma\ra\stackrel{\tau}{\Rightarrow} \la \hbox{\bf askp}_0(c)?_{\phi} A  \mathit{:} B,\sigma\ra
$& \bf{Askp5}
&\\
&\mbox{   }&\mbox{   }&
\\
\end{tabular} }
    \end{center}

  \caption{The $\tau$-rules for {\itshape tsccp-i}.}\label{rulestau}
\end{figure}

The observables induced by the transition system $T'$ are
formally defined as follows.

\begin{definition}
Let $P= F\text{.}A$ be a {\itshape tsccp-i} process. We define
\[{\cal O}_{io}^{i'}(P) = \{
\gamma \Downarrow_{Fv(A)} \mid \la A, \bar{\1} \ra \stackrel{\omega}{\Rightarrow}\,\!\!^*\la
{\bf Success}, \gamma\ra\},
\]
\end{definition}
where ${\bf Success}$ is any agent which contains only occurrences
of the agent ${\bf success}$ and of the operator $\parallel$.

Lemma~\ref{lem:opequivalent} shows that the modified transition
system agrees with the original one when considering our notion of
observables.

We first need some definitions and technical lemmata. In the following, given two agents $A$ and $B$, we say that  $A \simeq B$ if and only if
$B$ is obtained from $A$ by replacing an agent of the form $\exists x A_1$ in $A$ with $A_1 [x/y]$, where $y$ is new in $A$.
$\approx$ denotes the reflexive and transitive closure of $ \simeq$. The following lemmata hold.

\begin{lemma}\label{lem:unf}
Let $F\text{.}A$ and $F\text{.}B$ be {\itshape tsccp-i} processes such that $ A \approx B$. Then for each store $\sigma$ and for $x \in \{ \omega, \tau
\}$
   $$\la F\text{.}A,\sigma\ra \stackrel{x}{\Rrarrow}
    \la F\text{.}C,\sigma'\ra \mbox{ if and only if }
    \la F\text{.}B,\sigma\ra \stackrel{x}{\Rrarrow}
    \la F\text{.}C,\sigma'\ra\textit{.}$$

\end{lemma}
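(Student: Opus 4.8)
The plan is to prove Lemma~\ref{lem:unf} by structural induction on the agent $A$, together with an induction on the number of $\simeq$-steps needed to obtain $B$ from $A$. First I would reduce to the case of a single $\simeq$-step: since $\approx$ is the reflexive-transitive closure of $\simeq$, if the one-step case is established then a routine induction on the length of the $\simeq$-chain $A \simeq A_1 \simeq \cdots \simeq A_k = B$ yields the general statement, because the transition relation is, by hypothesis, compatible with each intermediate rewriting. So assume $B$ is obtained from $A$ by replacing a single sub-occurrence $\exists x A_1$ by $A_1[x/y]$ with $y$ fresh in $A$ (hence fresh everywhere relevant).

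Next I would do the structural induction on $A$, following the shape of the grammar in Definition~\ref{def:tsccilanguage}. The base cases are where the rewritten sub-occurrence is the whole of $A$, i.e.\ $A = \exists x A_1$ and $B = A_1[x/y]$: here the claim is essentially rule {\bf Q9} read in both directions. If $\la \exists x A_1, \sigma\ra \stackrel{x}{\Rrarrow} \la C,\sigma'\ra$, then by {\bf Q9} this came from $\la A_1[x/z],\sigma\ra \stackrel{x}{\Rrarrow}\la C,\sigma'\ra$ for a fresh $z$; since both $y$ and $z$ are fresh, $A_1[x/z]$ and $A_1[x/y]$ differ only by a bijective renaming of a variable not occurring free elsewhere, and a standard renaming lemma for the transition system (the same one implicitly used to justify {\bf Q9}) gives $\la A_1[x/y],\sigma\ra \stackrel{x}{\Rrarrow}\la C,\sigma'\ra$. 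The converse direction is symmetric, using {\bf Q9} to re-introduce the quantifier. For the inductive step, the rewritten occurrence sits strictly inside a compound agent $A = \mathcal{O}(A', \ldots)$ for one of the operators $\mathbf{tell}(c)\to A'$, $\ask(c)\to A'$, $\Sigma_i E_i$, $A'\parallel A''$, $\exists z A'$ (with $z\neq x$), $p(x)$, $\mathbf{askp}_t(c)?\,A'\!:\!A''$. In each case $B = \mathcal{O}(B',\ldots)$ where $B'$ is obtained from $A'$ (or one of the components) by the same single $\simeq$-step, so the induction hypothesis applies to the component. One then inspects the relevant operational rules ({\bf Q1}--{\bf Q19} of Figure~\ref{t2}, plus the $\tau$-rules of Figure~\ref{rulestau} since the statement quantifies over $x\in\{\omega,\tau\}$) and observes that each such rule's premises and conclusion refer to the component only through its own transitions and through side conditions ($\sigma\ent c$, $\sigma\Downarrow_\emptyset \not< a$, $\sigma\not\sqsubset\phi$, the counter $t$) that involve only the store and the static syntax $c,a,\phi,t$ — none of which is changed by a $\simeq$-rewriting of a sub-agent. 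Hence a transition of $A$ decomposes into a transition of the component plus unchanged side conditions, we apply the IH to transport the component transition to $B'$, and reassemble via the same rule to get the matching transition of $B$; and symmetrically.

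The one genuinely delicate point, and the main obstacle, is the freshness bookkeeping: we need that the $y$ introduced in replacing $\exists x A_1$ by $A_1[x/y]$ does not clash with variables bound by an \emph{outer} $\exists$ in $A$, nor with the auxiliary fresh variable that rule {\bf Q9} itself picks when firing on an outer $\exists z A'$. This is why the hypothesis specifies ``$y$ new in $A$'' and why the paper's setup (and the footnote to Definition~\ref{def:mpsemoperators}) already assumes a renaming discipline that always selects genuinely fresh names; I would invoke that discipline explicitly, and note that if a clash threatened we may first $\alpha$-rename the outer binders — which is itself a chain of $\simeq$-steps, so it stays within the scope of the lemma being proved. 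Besides this, the procedure-call case $A = p(x)$ is essentially vacuous for a single $\simeq$-step since $p(x)$ has no sub-agent to rewrite, unless the rewriting is at a declaration inside $F$, but the lemma fixes $F$ on both sides, so no issue arises. The rest is a mechanical rule-by-rule check, which I would present compactly rather than enumerating all nineteen rules, grouping them as: store-reading/writing axioms ({\bf Q1}--{\bf Q4}, {\bf Q10}--{\bf Q19}), the structural congruence rules ({\bf Q5}, {\bf Q6}, {\bf Q9}), the choice and call rules ({\bf Q7}, {\bf Q8}), and the $\tau$-rules, remarking that in every group the argument is the same ``decompose, apply IH, reassemble'' pattern.
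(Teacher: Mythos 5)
Your proposal is correct and matches the paper's argument, which is stated there in one line as ``immediate, by using rule {\bf Q9} and by a straightforward inductive argument'': you simply spell out that induction (reduction to a single $\simeq$-step, base case via {\bf Q9} with the renaming discipline, and the routine decompose/apply-IH/reassemble check for the remaining rules). The extra care you take with freshness bookkeeping is exactly the implicit renaming convention the paper relies on, so nothing is missing and nothing differs in substance.
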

\begin{proof}
    The proof is immediate, by using rule {\bf Q9} and by a straightforward inductive argument.
 \end{proof}

From the above Lemma we derive the following corollary:

\begin{corollary}\label{cor:unf}
Let $F\text{.}A$ and $F\text{.}B$ be {\itshape tsccp-i} processes such that $ A \approx B$.
 Then for each store $\sigma$, $\la F\text{.}A, \sigma \ra\stackrel{\omega}{\Rrarrow}\,\! \!^*\la {\bf
Success}, \gamma\ra$ if and only if $\la F\text{.}B, \sigma \ra\stackrel{\omega}{\Rrarrow}\,\!\! ^*\la {\bf
Success}, \gamma\ra$.
\end{corollary}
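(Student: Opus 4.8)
The plan is to obtain Corollary~\ref{cor:unf} directly from Lemma~\ref{lem:unf}, by promoting the single-step equivalence to the reflexive--transitive closure $\stackrel{\omega}{\Rrarrow}{}^{*}$. The feature of Lemma~\ref{lem:unf} I would rely on is that it is sharper than an ordinary bisimilarity statement: a step of $\la F\text{.}A,\sigma\ra$ and the corresponding step of $\la F\text{.}B,\sigma\ra$ land in the \emph{same} configuration $\la F\text{.}C,\sigma'\ra$. Hence, as soon as one transition has been taken, the two computations literally coincide, so no genuine induction is really needed -- only the empty computation has to be handled separately.

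Concretely, I would fix a store $\sigma$ and prove both directions of the biconditional. For the ``if'' part, assume $\la F\text{.}A,\sigma\ra \stackrel{\omega}{\Rrarrow}{}^{*}\la {\bf Success},\gamma\ra$. If this derivation is empty, then $A$ is itself a ${\bf Success}$-agent, hence built only from ${\bf success}$ and $\parallel$ and containing no subterm of the form $\exists x\,A_1$; since every $\simeq$-rewrite acts on such a subterm, the only agent $\approx$-related to $A$ is $A$ itself, so $B=A$ and there is nothing to prove. Otherwise the derivation factors as $\la F\text{.}A,\sigma\ra \stackrel{\omega}{\Rrarrow}\la F\text{.}C,\sigma'\ra \stackrel{\omega}{\Rrarrow}{}^{*}\la {\bf Success},\gamma\ra$; Lemma~\ref{lem:unf} with $x=\omega$ gives $\la F\text{.}B,\sigma\ra \stackrel{\omega}{\Rrarrow}\la F\text{.}C,\sigma'\ra$, and prefixing this to the unchanged remainder of the derivation yields $\la F\text{.}B,\sigma\ra \stackrel{\omega}{\Rrarrow}{}^{*}\la {\bf Success},\gamma\ra$. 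The converse is obtained symmetrically from the ``only if'' half of Lemma~\ref{lem:unf}, and the corollary follows since $\sigma$ was arbitrary.

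I do not anticipate any real obstacle: the inductive step reduces to a single invocation of Lemma~\ref{lem:unf}, and the only point that deserves a line of justification is the empty-derivation base case, which rests on the observation that a terminal ${\bf Success}$-agent is $\exists$-free and therefore rigid under $\approx$. Everything else is just bookkeeping on the length of the $\omega$-derivation.
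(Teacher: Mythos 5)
Your treatment of the direction from $A$ to $B$ is exactly the derivation the paper intends (the paper gives no explicit proof, deriving the corollary directly ``from the above Lemma''): split off the first $\omega$-step, transfer it with Lemma~\ref{lem:unf}, and observe that, since the lemma lands both processes in the \emph{same} configuration $\la F\text{.}C,\sigma'\ra$, the tail of the computation is literally shared, so only the empty derivation needs separate care. Your base case for that direction is also right: a ${\bf Success}$ agent contains no subagent of the form $\exists x\,A_1$, so no $\simeq$-rewrite applies to it, and $A\approx B$ then forces $B=A$.

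The gap is in the sentence ``the converse is obtained symmetrically''. The relation $\approx$ is the reflexive and \emph{transitive} closure of the directed rewrite $\simeq$ (which unfolds an occurrence of $\exists$ occurring in the left-hand agent), so it is not symmetric, and your base-case argument does not dualize: from $B$ being a ${\bf Success}$ agent you cannot conclude $A=B$, because $A$ may still contain the binders whose unfolding produced $B$. Concretely, take $A=\exists x\,{\bf success}$ and $B={\bf success}$: then $A\simeq B$, hence $A\approx B$, and $\la F\text{.}B,\sigma\ra$ reaches a ${\bf Success}$ configuration in zero steps with $\gamma=\sigma$; but $\la F\text{.}A,\sigma\ra$ has no transition in the system of Figure~\ref{t2} (rule {\bf Q9} would need ${\bf success}$ itself to move, which it cannot), and $\exists x\,{\bf success}$ is not a ${\bf Success}$ agent, so $A$ reaches no ${\bf Success}$ configuration at all. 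Thus the zero-step case of the right-to-left implication is not covered by your argument; it is in fact precisely the degenerate situation in which the corollary has to be read with care (e.g., by restricting to derivations of positive length or by identifying terminated agents up to $\exists$-unfolding), a point the paper itself glosses over. For derivations of length at least one your argument is complete; the proof should either flag this residual case explicitly or state the convention that disposes of it, rather than appeal to a symmetry that $\approx$ does not have.
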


\begin{lemma}\label{lem:cambio}
Let $P= F\text{.}A$ be a {\itshape tsccp-i} process.  Then for each store $\sigma$,
\begin{enumerate}
    \item $\la F\text{.}A, \sigma \ra \stackrel{\tau}{\Rightarrow}\la F\text{.}B, \sigma'\ra$
    if and only if $\sigma= \sigma'$ and
    \begin{itemize}
        \item either $\la F\text{.}A, \sigma \ra \stackrel{\tau}{\Rrarrow}\la F\text{.}C, \sigma\ra$  and
        $C\approx B$
        \item or $\la F\text{.}A, \sigma \ra \stackrel{\tau}{\not \Rrarrow}$ and
        $B\approx A$.
    \end{itemize}
    \item $\la F\text{.}A, \sigma \ra \stackrel{\omega}{\Rightarrow}\la F\text{.}B, \sigma'\ra$
    if and only if $\la F\text{.}A, \sigma \ra \stackrel{\omega}{\Rrarrow}\la F\text{.}C, \sigma'\ra$ and $C \approx B$.
\end{enumerate}
\end{lemma}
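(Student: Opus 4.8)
The plan is to establish item~1 first, by structural induction on the agent $A$ (with the basic agents and the procedure call $p(x)$ as base cases), arguing both directions of the ``iff'' in each case, and then to derive item~2 by an analogous induction. The one fact that makes everything work is a preliminary observation about $T'$: \emph{every} agent can perform a $\tau$-transition in $T'$. Indeed, the axioms {\bf Q0'}--{\bf Q4'}, {\bf Q7'}, {\bf Q8'}, {\bf Q14'}, {\bf Q19'} added in $T'$ cover precisely the agents that have \emph{no} $\tau$-move in $T$ (namely {\bf success}, the two tell forms, the two ask forms, $\Sigma_{i=1}^n E_i$, $p(x)$, and the two $t=0$ time-out forms), while the remaining time-out agents, those with $t>0$, already make a $\tau$-move in $T$ via {\bf Q13}/{\bf Q18}; these $\tau$-moves then propagate through $\parallel$ by {\bf Q5} and through $\exists x$ by {\bf Q9}. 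As a consequence, the rule {\bf Q6} of $T$ --- which lets a parallel component proceed in isolation exactly when its partner has no $\tau$-move --- is never enabled in $T'$ and can be dropped without loss: wherever {\bf Q6} fired in $T$, rule {\bf Q5} fires in $T'$ with the partner taking a $\tau$-self-loop.

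For item~1, the ``only if'' direction proceeds by cases on the last rule of the $T'$-derivation. If $A$ is one of the basic inactive agents, the only applicable rule is the corresponding self-loop axiom and $B=A$, so the second alternative holds with $B\approx A$; if $A=p(x)$, the move is {\bf Q8'} and is immediate. If $A=\exists x A_1$, rule {\bf Q9} forces a $\tau$-move of $A_1[x/y]$; applying the induction hypothesis to $A_1[x/y]$ and using $\exists x A_1\simeq A_1[x/y]$ yields the two alternatives for $A$ (in the ``no $T$-move'' branch one gets $B\approx A_1[x/y]\approx\exists x A_1$). If $A=A_1\parallel A_2$, rule {\bf Q5} splits the move into a $\tau$-move of $A_1$ and a $\tau$-move of $A_2$; the induction hypothesis describes each of them, and one reassembles the outcome using {\bf Q5} of $T$ when both components also move in $T$ and {\bf Q6} of $T$ when exactly one does, checking in each subcase --- using that $\approx$ is a congruence for $\parallel$ --- that the resulting pair of targets stays $\approx$-related to $B$. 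Finally, if $A$ is a time-out agent with $t>0$, its $T'$-$\tau$-move via {\bf Q13}/{\bf Q18} is literally a $T$-move, so the first alternative holds with $C=B$. The ``if'' direction reverses these steps, replacing each use of {\bf Q6} in the given $T$-derivation by a use of {\bf Q5} (the partner taking a $\tau$-self-loop) and appealing to Lemma~\ref{lem:unf} --- and a similar unfolding argument for $\Rightarrow$ --- to realign the targets up to $\approx$.

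Item~2 is then handled analogously: all rules added in $T'$ are $\tau$-labelled, so in an $\omega$-labelled $T'$-derivation every $\omega$-step is already a $T$-step, and the only $T'$-specific ingredients are the $\tau$-self-loops taken by idle parallel partners inside {\bf Q5}; replacing each such configuration by a {\bf Q6}-step of $T$ (when the partner has no $T$-$\tau$-move, using item~1 so that $B_2\approx A_2$) or by a {\bf Q5}-step of $T$ (when it does) yields the matching $T$-derivation, whose final agent agrees with $B$ up to the accumulated $\exists$-eliminations, i.e.\ $C\approx B$. Corollary~\ref{cor:unf} and the congruence of $\approx$ are used to carry these renamings through; the converse direction is symmetric, trading {\bf Q6}-steps for {\bf Q5}-steps.

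I expect the parallel case --- in both items --- to be the main obstacle: one has to show that deleting {\bf Q6} costs nothing and introduces only $\approx$-discrepancies, which relies on the preliminary ``every $T'$-agent has a $\tau$-move'' together with a careful accounting of which component is active and of the $\exists$-renamings that make the conclusion $\approx$ rather than literal equality. The base cases and the procedure-call case are routine given the new axioms.
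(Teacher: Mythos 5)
Your proposal is correct and follows essentially the same route as the paper: induction on the structure of $A$ with the same case split (basic agents and $p(x)$, $\hbox{\bf askp}_t$ with $t>0$, parallel composition via {\bf Q5} of $T'$ versus {\bf Q5}/{\bf Q6} of $T$, and hiding via {\bf Q9} together with $\exists x A_1 \simeq A_1[x/y]$), with item~2 treated analogously. Your explicit preliminary observation that every agent has a $\tau$-move in $T'$ is only a slight elaboration of what the paper leaves implicit, so the two arguments coincide in substance.
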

\begin{proof}
 \begin{enumerate}
   \item
 The proof is by induction on the complexity of the agent $A$.
  \begin{itemize}
    \item $A$ is of the form  $\bf{success}$,
    $\hbox{\tell}(c) \rightarrow^{a} A$,
    $\hbox{\tell}(c)\rightarrow_{\phi} A$,
    $\hbox{\ask}(c) \rightarrow^{a} A$,
    $\hbox{\ask}(c) \rightarrow_{\phi} A$,
    $\Sigma_{i=1}^{n}E_i$, $p(x)$,
    $\hbox{\bf askp}_0(c)?^a A  \mathit{:} B$ and
    $\hbox{\bf askp}_0(c)?_{\phi} A  \mathit{:} B$.

    The proof is immediate by observing that by the rules
    in Figure~\ref{t2}, $\la A, \sigma \ra \stackrel{\tau}{\not \Rrarrow}$ and
    by the rules in Figure~\ref{rulestau},
    $\la A, \sigma \ra  \stackrel{\tau}{\Rightarrow}\la B, \sigma' \ra $
    if and only if $\la A, \sigma \ra=\la B, \sigma' \ra $.
    \item $A$ is of the form $\hbox{\bf askp}_t(c)?^a A_1  \mathit{:} A_2$
    ($\hbox{\bf askp}_t(c)?_{\phi} A_1  \mathit{:} A_2$), with
    $t>0$.\\
    The proof is immediate since both the transition
    systems use the rule {\bf Q13} ({\bf Q18}) of Figure~\ref{t2}.
    \item If $A$ is of the form $A_1\parallel A_2$. \\
    In this case, by definition of the transition system $T'$ and by using rule
    {\bf Q5} of Figure~\ref{t2}, for each store $\sigma$,
    \[\begin{array}{l}
      \la  A_1\parallel A_2, \sigma \ra \stackrel{\tau}{\Rightarrow}
      \la  B_1\parallel B_2, \sigma' \ra \mbox{ if and only if } \\
      \la  A_1, \sigma \ra \stackrel{\tau}{\Rightarrow}
      \la  B_1, \sigma'\ra \mbox{ and }
      \la  A_2, \sigma \ra \stackrel{\tau}{\Rightarrow}
      \la  B_2, \sigma\ra
    \end{array}
    \]
    (the symmetric case is analogous and hence it is omitted). \\
    By inductive hypothesis this holds if and only if  $\sigma'=\sigma$ and for $i=1,2$
    \begin{itemize}
        \item  either $\la  A_i, \sigma \ra \stackrel{\tau}{\Rrarrow}  \la  C_i, \sigma\ra$ and $C_i \approx B_i$
        \item or $\la  A_i, \sigma \ra \stackrel{\tau}{\not \Rrarrow}$
        and $B_i \approx A_i$.
    \end{itemize}
   If there exists $i \in [1,2]$ such that
   $\la  A_i, \sigma \ra \stackrel{\tau}{\Rrarrow}  \la  C_i, \sigma\ra$
   then the thesis follows by using either rule {\bf Q5} or rule {\bf
   Q6}.\\
   Otherwise $ \la  A_1\parallel A_2, \sigma \ra  \stackrel{\tau}{\not
   \Rrarrow}$. Then the thesis follows since by the previous results
   ${B_1\parallel B_2} \approx {A_1\parallel A_2}$.

    \item $A$ is of the form $\exists x A_1$.
    By rule {\bf Q9} of Figure~\ref{t2} for each store $\sigma$,
    \[\begin{array}{l}
      \la \exists x A_1, \sigma \ra\stackrel{\tau}{\Rightarrow}\la B, \sigma' \ra \mbox{ if and only if }
       \la A_1[x/y],  \sigma
\ra\stackrel{\tau}{\Rightarrow}\la B, \sigma' \ra
    \end{array}
    \]
    By inductive hypothesis this holds if and only if $\sigma'=\sigma$
    \begin{itemize}
        \item either $\la A_1[x/y],  \sigma
\ra\stackrel{\tau}{\Rrarrow} \la C, \sigma \ra$ and $C \approx B$

        \item or $\la A_1[x/y],  \sigma
\ra\stackrel{\tau}{\not \Rrarrow}$ and $B \approx A_1[x/y]$.
    \end{itemize}
    Therefore, by using rule {\bf Q9} of Figure~\ref{t2} and since $ \exists x A_1 \approx A_1[x/y]$, we have that
    \begin{itemize}
         \item either $\la \exists x A_1,  \sigma
\ra\stackrel{\tau}{\Rrarrow} \la C, \sigma \ra$  and $C \approx B$

        \item or $\la \exists x A_1,  \sigma
\ra\stackrel{\tau}{\not \Rrarrow}$ and $B \approx \exists x A_1$
and then the thesis.
    \end{itemize}

   \end{itemize}
   \item  The proof is analogous to the previous one and hence it is
  omitted.
 \end{enumerate}
 \end{proof}

\begin{lemma}\label{lem:opequivalent}
Let $P= F\text{.}A$ be a {\itshape tsccp-i} process. Then ${\cal O}_{io}^{i'} (P) = {\cal O}_{io}^{i} (P) .$
\end{lemma}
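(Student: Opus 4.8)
The plan is to reduce the statement to the reachability claim that, for every {\itshape tsccp-i} process $F\text{.}A$, every store $\sigma$ and every constraint $\gamma$,
$$\la A,\sigma\ra \stackrel{\omega}{\Rrarrow}\,\!\!^{*} \la {\bf Success},\gamma\ra \quad\hbox{if and only if}\quad \la A,\sigma\ra \stackrel{\omega}{\Rightarrow}\,\!\!^{*} \la {\bf Success},\gamma\ra\textit{,}$$
where the left-hand relation is the reflexive and transitive closure of the $\omega$-labeled transitions of $T$ and the right-hand one that of $T'$. Once this is proved, the lemma follows immediately by taking $\sigma = \bar{\1}$: the two sets $\{\gamma\Downarrow_{Fv(A)} \mid \ldots\}$ defining ${\cal O}_{io}^{i}(P)$ and ${\cal O}_{io}^{i'}(P)$ then coincide. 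I would prove the two implications separately, both by induction on the length of the computation, using Lemma~\ref{lem:cambio}(2) to transfer a single $\omega$-step between the two transition systems and Lemma~\ref{lem:unf} / Corollary~\ref{cor:unf} to absorb the renamings (the relation $\approx$) that this transfer introduces.

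For the ``$T$ implies $T'$'' direction it suffices to observe that every $\omega$-step of $T$ is already an $\omega$-step of $T'$: if $\la A,\sigma\ra \stackrel{\omega}{\Rrarrow}\la C,\sigma'\ra$ then, since $C\approx C$, the ``if'' part of Lemma~\ref{lem:cambio}(2) yields $\la A,\sigma\ra \stackrel{\omega}{\Rightarrow}\la C,\sigma'\ra$. A routine induction on the length of a computation $\la A,\sigma\ra \stackrel{\omega}{\Rrarrow}\,\!\!^{*}\la {\bf Success},\gamma\ra$ then produces the matching $T'$-computation with the same endpoint.

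For the converse, let $\la A,\sigma\ra = \la A_0,\sigma_0\ra \stackrel{\omega}{\Rightarrow}\la A_1,\sigma_1\ra\stackrel{\omega}{\Rightarrow}\cdots\stackrel{\omega}{\Rightarrow}\la A_n,\sigma_n\ra$ be a $T'$-computation with $A_n$ a ${\bf Success}$ agent and $\sigma_n=\gamma$. The ``only if'' part of Lemma~\ref{lem:cambio}(2) applied to each step gives agents $C_1,\dots,C_n$ with $\la A_i,\sigma_i\ra \stackrel{\omega}{\Rrarrow}\la C_{i+1},\sigma_{i+1}\ra$ in $T$ and $C_{i+1}\approx A_{i+1}$. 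Putting $C_0 = A_0$, I would prove by induction on $i$ that $\la C_i,\sigma_i\ra \stackrel{\omega}{\Rrarrow}\la C_{i+1},\sigma_{i+1}\ra$: indeed $C_i\approx A_i$ and $\la A_i,\sigma_i\ra \stackrel{\omega}{\Rrarrow}\la C_{i+1},\sigma_{i+1}\ra$, so Lemma~\ref{lem:unf} (applied to the two $\approx$-related agents $C_i$ and $A_i$) transports the step to $\la C_i,\sigma_i\ra \stackrel{\omega}{\Rrarrow}\la C_{i+1},\sigma_{i+1}\ra$. This produces a genuine $T$-computation $\la A,\sigma\ra = \la C_0,\sigma_0\ra \stackrel{\omega}{\Rrarrow}\,\!\!^{*}\la C_n,\sigma_n\ra$ with $C_n\approx A_n$ and $\sigma_n=\gamma$. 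Finally, since $A_n$ is a ${\bf Success}$ agent, $\la A_n,\gamma\ra \stackrel{\omega}{\Rrarrow}\,\!\!^{*}\la {\bf Success},\gamma\ra$ holds trivially (in zero steps), so by Corollary~\ref{cor:unf} together with $C_n\approx A_n$ we also get $\la C_n,\gamma\ra \stackrel{\omega}{\Rrarrow}\,\!\!^{*}\la {\bf Success},\gamma\ra$; concatenating the two $T$-computations gives $\la A,\sigma\ra \stackrel{\omega}{\Rrarrow}\,\!\!^{*}\la {\bf Success},\gamma\ra$, as required.

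The routine part is the two inductions; the one delicate point — and the one I expect to be the main obstacle — is precisely this $\approx$-bookkeeping. Lemma~\ref{lem:cambio}(2) does not map a $T'$-step onto the identical $T$-configuration but only onto an $\approx$-equivalent one, so the extracted $T$-computation has to be realigned at every step via Lemma~\ref{lem:unf}, and the terminal configuration, which is only $\approx$-equivalent to a ${\bf Success}$ agent, has to be closed off using Corollary~\ref{cor:unf}. No genuinely new construction is needed beyond these lemmata.
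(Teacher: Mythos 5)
Your proof is correct and takes essentially the same route as the paper's: the paper likewise reduces the lemma to the equivalence of $\omega$-reachability of ${\bf Success}$ under $\Rrarrow$ and $\Rightarrow$, and proves it by induction on the length of the computation using Point 2 of Lemma~\ref{lem:cambio} to transfer the first step and Corollary~\ref{cor:unf} to absorb the $\approx$-discrepancy. The only difference is organizational: the paper runs a single induction as a chain of equivalences, whereas you treat the two implications separately and realign the extracted $T$-computation step by step via Lemma~\ref{lem:unf}.
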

\begin{proof}
We prove that there exists a computation $\la A, \sigma \ra\stackrel{\omega}{\Rightarrow}\,\!\! ^*\la {\bf
Success}, \gamma\ra$
if and only if there exists a computation $\la A, \sigma \ra\stackrel{\omega}{\Rrarrow} ^*\la {\bf
Success}, \gamma\ra$. Then the thesis follows by definition of ${\cal O}_{io}^{i} (P)$ and ${\cal O}_{io}^{i'} (P).$
The proof is by induction on the length of the computation $\la A, \sigma \ra\stackrel{\omega}{\Rightarrow}\,\!\! ^*\la {\bf
Success}, \gamma\ra$.
\begin{description}
  \item[$n=1)$] In this case $A= {\bf Success}$ and then the thesis.
  \item[$n>1)$] In this  case
    \[\begin{array}{lll}
      \la A, \sigma \ra\stackrel{\omega}{\Rightarrow} \,\!\!^*\la {\bf Success}, \gamma\ra &
      \mbox{iff} & \\
      \hspace*{1cm}
      \mbox{(by definition)} \\
      \la A, \sigma \ra\stackrel{\omega}{\Rightarrow} \la A_1, \sigma_1 \ra
      \mbox{ and } \la A_1, \sigma_1 \ra\stackrel{\omega}{\Rightarrow}\,\!\! ^*\la {\bf Success}, \gamma\ra &
      \mbox{iff} & \\
      \hspace*{1cm}\mbox{(by inductive hypothesis)} \\
      \la A, \sigma \ra\stackrel{\omega}{\Rightarrow} \la A_1, \sigma_1 \ra
      \mbox{ and } \la A_1, \sigma_1 \ra\stackrel{\omega}{\Rrarrow}\,\!\! ^*\la {\bf Success}, \gamma\ra &
      \mbox{iff} &  \\
      \hspace*{1cm}\mbox{(by Point 2 of Lemma~\ref{lem:cambio})} \\
      \la A, \sigma \ra\stackrel{\omega}{\Rrarrow} \la A_2, \sigma_1 \ra, \ A_2 \approx A_1
      \mbox{ and } \la A_1, \sigma_1 \ra\stackrel{\omega}{\Rrarrow} \,\!\!^*\la {\bf Success}, \gamma\ra & \mbox{iff} &\\
      \hspace*{1cm}
      \mbox{(by Corollary~\ref{cor:unf})}\\
      \la A, \sigma \ra\stackrel{\omega}{\Rrarrow} \la A_2, \sigma_1 \ra
      \mbox{ and } \la A_2, \sigma_1 \ra\stackrel{\omega}{\Rrarrow} \,\!\!^*\la {\bf Success}, \gamma\ra & \mbox{iff} &\\
      \hspace*{1cm}
      \mbox{(by definition)} \\
      \la A, \sigma  \ra\stackrel{\omega}{\Rrarrow} \,\!\!^*\la {\bf Success}, \gamma\ra.
    \end{array}
    \]

\end{description}
\end{proof}

We can now easily prove that, given our definition of ${\cal D}$, the modified transition system $T'$ agrees with the
denotational model.

\begin{theorem}\label{lem:equivalent}
For any {\itshape tsccp-i} process $P= F\text{.}A$ we have

$${\cal O}_{io}^{i'}(P) = \{ \sigma_n \Downarrow _{Fv(A)}\mid
\begin{array}[t]{ll} \mbox{\rm there exists a connected sequence } s\in
{\cal D}(P) \ \mbox{\rm such that }
\\
 s=\la \sigma_1,\sigma_2,\omega\ra \la \sigma_2, \sigma_3,\omega\ra\cdots
 \la \sigma_{n},\sigma_{n},\omega\ra \}\textit{.}
\end{array}
$$
\end{theorem}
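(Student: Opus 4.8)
The plan is to follow the same two-step strategy used for Proposition~\ref{theo:mpcorrectness}, now adapted to the interleaving language and to the modified transition system $T'$. First I would prove a correspondence lemma --- the {\itshape tsccp-i} analogue of Proposition~\ref{theo:mpreqden} --- stating that, for a closed process $F\text{.}A$, a sequence $s$ belongs to ${\cal D}(F\text{.}A)$ if and only if $s = \la \sigma_1,\sigma_1',x_1\ra \la \sigma_2,\sigma_2',x_2\ra \cdots \la \sigma_n,\sigma_n,\omega\ra$ for some agents $A_1,\ldots,A_n$ with $A_1\approx A$, $A_n={\bf Success}$, and, for every $i\in[1,n-1]$, $\la A_i,\sigma_i\ra \stackrel{x_i}{\Rightarrow} \la A_{i+1},\sigma_i'\ra$ in $T'$ (with $x_i\in\{\omega,\tau\}$, the final triple being just the terminal marker at ${\bf Success}$, exactly as in the proof of Proposition~\ref{theo:mpreqden}). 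The proof is by structural induction on $A$, with a fix-point induction for the procedure-call case (matching {\bf F9} with the unfolding rules {\bf Q8}/{\bf Q8'} through the $\ask(\bar\1)$ guard), exploiting the close match between the equations {\bf F1}--{\bf F9} of Figure~\ref{densem2} and the rules {\bf Q1}--{\bf Q19} of Figure~\ref{t2} together with the $\tau$-rules {\bf Q0'}--{\bf Q19'} of Figure~\ref{rulestau}.

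For the base cases ({\bf success}, (valued) {\bf tell}, (valued) {\bf ask}), for the guarded choice and for the {\bf askp} agents, the correspondence is immediate: the extra $\tau$-prefixes that the operators $\bar{tell}$, $\bar{\sum}$ and $\bar{askp}$ prepend to their arguments are exactly the stuttering moves that an inactive agent performs through the rules {\bf Q0'}--{\bf Q4'}, {\bf Q7'}, {\bf Q14'}, {\bf Q19'}, while the $\omega$-labelled prefixes reproduce {\bf Q1}--{\bf Q4}, {\bf Q7} and {\bf Q10}--{\bf Q19}. The hiding case is handled up to the relation $\approx$: the semantic operator $\bar{\exists}x$ and rule {\bf Q9} introduce the same renaming of the bound variable by a fresh $y$, and $\approx$ (already used in Lemma~\ref{lem:unf} and Corollary~\ref{cor:unf}) absorbs the ambiguity in this choice. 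The genuinely delicate case is parallel composition: by {\bf F7}, $s\in{\cal D}(F\text{.}A\parallel B)$ iff $s = s'\,\bar{\parallel}\,s''$ for some $s'\in{\cal D}(F\text{.}A)$, $s''\in{\cal D}(F\text{.}B)$ of the same length; by the definition of $\bar{\parallel}$, at each position the environment components of $s'$ and $s''$ coincide and at most one of the two carries an $\omega$-label. Using the inductive hypotheses for $A$ and $B$ and rule {\bf Q5} of $T'$ --- recalling that {\bf Q6} is dropped in $T'$ and that in $T'$ every inactive agent has a $\tau$-move --- one shows that each merged step $\la A_i\parallel B_i,\sigma_i\ra\stackrel{x_i}{\Rightarrow}\la A_{i+1}\parallel B_{i+1},\sigma_i'\ra$ is derivable, and conversely that every step of a parallel agent in $T'$ decomposes as such a merge. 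I expect the main obstacle to be precisely this bidirectional matching of the point-wise merge of $\tau$-enriched denotations with the ``interleaving of $\omega$-actions plus maximal parallelism of $\tau$-actions'' encoded by {\bf Q5}, together with the bookkeeping that the two merged sequences have equal length.

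Once the correspondence lemma is available, the theorem is immediate, as Proposition~\ref{theo:mpcorrectness} follows from the discussion preceding it. A \emph{connected} sequence $s\in{\cal D}(P)$ satisfies $\sigma_1=\bar{\1}$, $\sigma_{i+1}=\sigma_i'$ and $x_i=\omega$ for all $i$; since no $\omega$-rule of $T'$ keeps both the agent and the store fixed, there can be no intermediate $\omega$-stuttering, so by the correspondence lemma such an $s=\la \sigma_1,\sigma_2,\omega\ra \la \sigma_2,\sigma_3,\omega\ra\cdots\la\sigma_n,\sigma_n,\omega\ra$ exists iff there is a computation $\la A,\bar{\1}\ra\stackrel{\omega}{\Rightarrow}^{*}\la C,\sigma_n\ra$ in $T'$ with $C\approx{\bf Success}$, that is, by Corollary~\ref{cor:unf}, iff $\la A,\bar{\1}\ra\stackrel{\omega}{\Rightarrow}^{*}\la {\bf Success},\sigma_n\ra$. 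Hence the set on the right-hand side of the statement equals $\{\sigma_n\Downarrow_{Fv(A)}\mid \la A,\bar{\1}\ra\stackrel{\omega}{\Rightarrow}^{*}\la {\bf Success},\sigma_n\ra\}$, which is ${\cal O}_{io}^{i'}(P)$ by definition, and we are done. Combined with Lemma~\ref{lem:opequivalent}, this also yields the correctness of ${\cal D}$ with respect to the original operational semantics ${\cal O}_{io}^{i}$.
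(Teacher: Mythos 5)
Your proposal matches the paper's own proof: the paper likewise establishes, by induction on the structure of $A$, that ${\cal D}(P)$ is exactly the set of sequences $\la \sigma_1,\sigma_1',x_1\ra\cdots\la\sigma_n,\sigma_n,\omega\ra$ arising from $T'$-transition chains $\la A_i,\sigma_i\ra\stackrel{x_i}{\Rightarrow}\la A_{i+1},\sigma_i'\ra$ ending in ${\bf Success}$, treats the non-parallel cases as immediate from the rule/equation correspondence, works out the parallel case via the point-wise merge and rule {\bf Q5} exactly as you sketch, and then concludes by the definition of ${\cal O}_{io}^{i'}$. The only cosmetic deviations are that the paper states the characterization with $A_1=A$ (not up to $\approx$) and so needs no appeal to Corollary~\ref{cor:unf} (which is anyway stated for $\Rrarrow$, not $\Rightarrow$) in the final step.
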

\begin{proof} We prove by induction on the complexity of the agent $A$ that
\[
\begin{array}{ll}
  {\cal D}(P)=\{ s  \mid & s=\la \sigma_1,\sigma_1',x_1\ra \la \sigma_2,\sigma_2', x_2\ra\cdots
  \la \sigma_{n},\sigma_{n},\omega \ra, \ A_1=A, \\
  & \mbox{for } i\in [1,n-1],\ \la {A_i},\sigma_i\ra \stackrel{x_i}{\Rightarrow}\la
  {A_{i+1}},\sigma'_i\ra  \mbox{ and }  A_n ={\bf Success}\}\textit{.}
\end{array} \]
Then the proof follows by definition of ${\cal O}_{io}^{i'}(P)$.

When the {\itshape tsccp-i} $P$ is not of the form $F\text{.}B \parallel C$ the thesis
follows immediately from the close correspondence between the
rules of the transition system and the definition of the
denotational semantics.

Assume now that $P$ is of the form $F\text{.}B \parallel C$.
By definition of the denotational semantics, $s \in {\cal D}(P)$ if and only if
$s=\la \sigma_1,\sigma_1',x_1\ra \la \sigma_2,\sigma_2', x_2\ra\cdots
  \la \sigma_{n},\sigma_{n},\omega \ra$ and there exist $s' \in
{\cal D}(F\text{.}B)$ and $s'' \in {\cal D}(F\text{.}C)$,
\[\begin{array}{ll}
    s'=\la
\sigma_1,\kappa_1',x_1'\ra \la \sigma_2,\kappa_2', x_2'\ra\cdots \la
\sigma_{n},\sigma_{n},\omega \ra & \mbox{and} \\
    s''=\la \sigma_1,\kappa_1'',x_1''\ra \la
\sigma_2,\kappa_2'', x_2''\ra\cdots \la \sigma_{n},\sigma_{n},\omega \ra, &
  \end{array}
\]
such
that  for each $i\in [1,n-1]$,
\begin{eqnarray}
  \begin{array}{ll} x_i=\tau \mbox{ if and only if }
  & x'_i=x''_i=\tau \mbox{ and in this case }
  \sigma_i'=\kappa_i'=\kappa_i''=\sigma_i,  \\
  x_i=\omega \mbox{ if and only if }
  &\mbox{either } x'_i=\omega, \ x''_i=\tau, \ \kappa_i'=\sigma_i' \mbox{ and } \kappa_i''=\sigma_i \\
  & \mbox{or } x'_i=\tau, \ x''_i=\omega, \ \kappa_i''=\sigma_i' \mbox{ and }
  \kappa_i'=\sigma_i.
  \end{array}\label{24feb11}
\end{eqnarray}
By inductive hypothesis $s' \in
{\cal D}(F\text{.}B)$ and $s'' \in {\cal D}(F\text{.}C)$ if and only if
\begin{eqnarray}
\begin{array}{ll}
  \la B_i,\sigma_i\ra \stackrel{x'_i}{\Rightarrow}\la {
B_{i+1}},\kappa'_i\ra \mbox{ for } i\in [1,n-1],\  B_1=B \mbox{ and } B_n={\bf Success},  \\
\la C_i,\sigma_i\ra \stackrel{x''_i}{\Rightarrow}\la {C_{i+1}},\kappa''_i\ra \mbox{ for } i\in [1,n-1], \  C_1=C \mbox{ and } C_n={\bf Success}\text{.}
\end{array}\label{24feb12}
\end{eqnarray}

Therefore, by Rule {\bf R8} and by (\ref{24feb11}), we have that
(\ref{24feb12}) holds if and only if
\[\begin{array}{ll}
   \la B_i \parallel C_i,\sigma_i\ra \stackrel{x_i}{\Rightarrow}\la B_{i+1} \parallel
C_{i+1},\sigma_{i+1}\ra \mbox{ for } i\in [1,n-1],& \\
   B_1\parallel C_1 =B\parallel C \mbox{ and }B_n\parallel C_n ={\bf Success} &
  \end{array}
\]
and then the thesis.
\end{proof}
\medskip

Thus we obtain the following correctness result whose proof is
immediate from the previous theorems.

\begin{corollary}[Correctness of {\itshape tsccp-i}] \label{prop:correctness}
For any {\itshape tsccp-i} process $P= F\text{.}A$ we have
\vspace{-0.2cm}
$${\cal O}_{io}^{i}(P) = \{ \sigma_n \Downarrow _{Fv(A)}\mid
\begin{array}[t]{ll} \mbox{\rm there exists a connected sequence } s\in
{\cal D}(P) \ \mbox{\rm such that }
\\
 s=\la \sigma_1,\sigma_2,\omega\ra \la \sigma_2, \sigma_3,\omega\ra\cdots
 \la \sigma_{n},\sigma_{n},\omega\ra \}\textit{.}
\end{array}
$$
\end{corollary}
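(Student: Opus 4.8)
The plan is to derive Corollary~\ref{prop:correctness} by simply chaining together the two results already proved: Lemma~\ref{lem:opequivalent}, which states that ${\cal O}_{io}^{i'}(P) = {\cal O}_{io}^{i}(P)$, and Theorem~\ref{lem:equivalent}, which characterizes ${\cal O}_{io}^{i'}(P)$ as the set of values $\sigma_n \Downarrow_{Fv(A)}$ obtained from connected sequences $s = \la \sigma_1,\sigma_2,\omega\ra \la \sigma_2,\sigma_3,\omega\ra \cdots \la \sigma_n,\sigma_n,\omega\ra$ in ${\cal D}(P)$. Since a connected sequence, by definition, has $\sigma_1 = \bar{\1}$ and all labels equal to $\omega$, the right-hand side of Theorem~\ref{lem:equivalent} is literally the right-hand side of the corollary. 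Hence the two substitutions give the claimed identity immediately.

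Concretely, first I would write ${\cal O}_{io}^{i}(P) = {\cal O}_{io}^{i'}(P)$ by Lemma~\ref{lem:opequivalent}. Then I would rewrite ${\cal O}_{io}^{i'}(P)$ using Theorem~\ref{lem:equivalent} as the set of $\sigma_n\Downarrow_{Fv(A)}$ for which there is a connected sequence $s\in {\cal D}(P)$ of the stated shape. The only subtlety worth a sentence is that both formulations quantify over \emph{connected} sequences — so the notions of ``store produced by the previous step equals the assumption of the next'' and ``only $\omega$-labels appear'' are already baked into the statement of Theorem~\ref{lem:equivalent}, and nothing further needs to be checked. The composition of these two equalities is then exactly the corollary.

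There is essentially no obstacle here: the corollary is a trivial consequence of the two preceding results, and the ``hard part'' — namely establishing the equivalence of the two transition systems (Lemmata~\ref{lem:unf}--\ref{lem:opequivalent}) and the agreement of the modified transition system with the denotational model (Theorem~\ref{lem:equivalent}) — has already been carried out. So the proof is a one-line chaining argument. I would therefore simply write:

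\begin{proof}
By Lemma~\ref{lem:opequivalent} we have ${\cal O}_{io}^{i}(P) = {\cal O}_{io}^{i'}(P)$, and by Theorem~\ref{lem:equivalent} the latter set equals $\{ \sigma_n \Downarrow_{Fv(A)} \mid \mbox{there exists a connected sequence } s\in {\cal D}(P) \mbox{ such that } s = \la \sigma_1,\sigma_2,\omega\ra \la \sigma_2,\sigma_3,\omega\ra \cdots \la \sigma_n,\sigma_n,\omega\ra \}$. Combining the two equalities yields the claim.
\end{proof}
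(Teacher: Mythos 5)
Your proposal is correct and matches the paper exactly: the authors state that the corollary's proof is immediate from the preceding results, namely Lemma~\ref{lem:opequivalent} (${\cal O}_{io}^{i}(P) = {\cal O}_{io}^{i'}(P)$) and Theorem~\ref{lem:equivalent}, which is precisely your chaining argument. Nothing further is needed.
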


\section{Related Work}\label{sec:related}
By comparing this work with other  timed languages using crisp constraints (instead of soft ones as in this paper) as \cite{SJG96,saraswat2}, there are three main differences we can find out.

First, the computational model of both the languages \emph{tcc}~\cite{saraswat2} and \emph{default tcc}~\cite{SJG96} is inspired by that one of synchronous languages: each time interval is identified with the time needed for a \emph{ccp} process to terminate a computation. Clearly, in order to ensure that the next time instant is reached, the (default) \emph{ccp} program has to be always terminating; thus, it is assumed that it does not contain recursion. On the other hand, we directly introduce  a timed interpretation of the usual programming constructs of \emph{ccp} by considering the primitive \emph{ccp} constructs {\bf ask} and {\bf tell} as the elementary actions whose evaluation takes one time-unit. Therefore, in our model, each time interval is identified with the time needed for the underlying constraint system to accumulate the tells and to answer the queries (asks) issued at each computation step by the processes of the system. For the definition of our {\itshape tsccp} agents we do not need any restriction on recursion to ensure that the next time instant is reached, since at each moment there are only a finite number of parallel agents, and the next moment in time occurs as soon as the underlying constraint system has responded to the initial actions of all the current agents of the system.

A second difference relies in the transfer of information across time boundaries. In \cite{saraswat2} and \cite{SJG96},  the programmer has to explicitly transfer  the (positive) information from a time instant to the next one, by using special primitives that allow one to control the temporal evolution of the system. In fact, at the end of a time interval all the constraints accumulated and all the processes suspended are discarded, unless they are arguments to a specific primitive. On the contrary, no explicit transfer is needed in {\itshape tsccp}, since the computational model is based on the monotonic evolution of the store which is usual in {\itshape ccp}.

A third relevant difference is in \cite{saraswat2} and \cite{SJG96} the authors present deterministic languages while our language allows for nondeterminism. These three differences also hold between  \cite{saraswat2} or \cite{SJG96}, and the original crisp version of the language, i.e., {\itshape tccp}~\cite{BGM00}.

In \cite{olarte}, the authors generalize the model in \cite{saraswat2} in order to extend it with temporary parametric {\bf ask} operations. Intuitively, these operations behave as persistent parametric asks during a time-interval, but may disappear afterwards. The presented extension goes in the direction of better modeling \emph{mobile systems} with the use of private channels between the agents. However, also the agents in \cite{olarte} show a deterministic behavior, instead of our not-deterministic choice.

Other timed extension of concurrent constraint programming have
been proposed in~\cite{NV02,PV01}, however these languages,
differently from {\itshape tsccp}, do not take into account
quantitative aspects; therefore, this achievement represents a
very important expressivity improvement with respect to related works.
These have been considered by Di Pierro and Wiklicky, who have
extensively studied probabilistic {\itshape ccp} (see for
example~\cite{DPW98}). This language provides a construct for
probabilistic choice which allows one to express randomness in a
program, without assuming any additional structure on the
underlying constraint system. This approach is therefore deeply
different from ours. More recently, stochastic {\itshape ccp} has been
introduced in~\cite{Bo06} to model biological systems. This
language is obtained by adding a stochastic duration to the ask
and tell primitives, thus it differs from our solutions.

In literature we can find other proposals that are related to tuple-based kernel-languages instead of a constraint store, as \emph{KLAIM}~\cite{dNFP98} (\emph{A Kernel Language for Agents Interaction and Mobility}) or \emph{SCEL}~\cite{scel} (\emph{Software Component Ensemble Language}) for instance. These languages are designed to study different properties of systems, as mobility and autonomicity of modeled agents. Their basic specification do not encompass time-based primitives, while mobility features are not present in any of the constraint-based languages reported in this section. The purpose of our language is to model systems where a level of preference and time-sensitive primitives (as a timeout) is required: a good example is represented by agents participating to an auction, as the example given in Section~\ref{sec:mpauctionexample}.

In general, since semiring-based soft constraints allow one to express several
quantitative features, our proposal provides a framework which can
be instantiated to obtain a variety of specific extensions of
{\itshape ccp}.

\section{Conclusion and Future Work}\label{sec:conclusions}
We have presented the {\itshape tsccp} and {\itshape tsccp-i} in order to join
together the expressive capabilities of soft constraints and
timing mechanisms in a new programming framework. The agents
modeled with these languages are able to deal with time and
preference-dependent decisions that are often found during
complex interactions. An application scenario can be represented by different entities
that need to negotiate generic
resources or services, as, for instance, during an auction process. Mechanisms as timeout and interrupt may model the wait for pending conditions or the triggering of some new
events. All the {\itshape tsccp}  and {\itshape tsccp-i} rules have been
formally described by a transition system and, then, also with a
denotational characterization of the operational semantics
obtained with the use of {\itshape timed reactive sequences}. The
resulting semantics has been proved to be compositional and
correct.

About future work, a first improvement of the presented languages can be the inclusion of a
{\itshape fail} agent in the syntax given in
Definition~\ref{def:tscclanguage}  and Definition~\ref{def:tsccilanguage}, and a semantics for the transition rules that lead to a failed computation, in case the guard on the transition rule cannot be enforced due to the preference of the store. In fact, the transition systems we have
defined consider only successful computations. If this could be a
reasonable choice in a don't know interpretation of the language
it will lead to an insufficient analysis of the behavior in a
pessimistic interpretation of the indeterminism. 

At last, we would like to consider other time management
strategies (as the one proposed in~\cite{timeVal}), and to study
how timing and non-monotonic constructs~\cite{nonmono} can be
integrated together.

\bibliography{bibliotsccp}

\end{document}